\pgfplotsset{width=10cm,compat=1.4}
\newcommand\Tstrut{\rule{0pt}{2.4ex}} 
\newcommand{\otherlabel}[2]{\protected@edef\@currentlabel{#2}\label{#1}}
\newcommandx{\unsure}[2][1=]{\todo[linecolor=red,backgroundcolor=red!25,bordercolor=red,#1]{#2}}
\newcommandx{\change}[2][1=]{\todo[linecolor=blue,backgroundcolor=blue!25,bordercolor=blue,#1]{#2}}
\newcommandx{\toadd}[2][1=]{\todo[linecolor=pink,backgroundcolor=pink!25,bordercolor=pink,#1]{#2}}
\newcommandx{\sj}[2][1=]{\todo[linecolor=orange,backgroundcolor=orange!25,bordercolor=orange,#1]{SJ:#2}}
\newcommandx{\lk}[2][1=]{\todo[linecolor=green,backgroundcolor=green!25,bordercolor=green,#1]{LK:#2}}
\newcommandx{\jr}[2][1=]{\todo[linecolor=blue,backgroundcolor=blue!25,bordercolor=blue,#1]{JR:#2}}
\def\@citecolor{blue}%
\def\@urlcolor{blue}%
\def\@linkcolor{blue}%
\def\orcidID#1{\smash{\protect\raisebox{-1.25pt}{\protect\href{http://orcid.org/#1}{\includegraphics{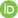}}}}}
\tikzset{
    recgray/.style={draw,minimum width=2cm,minimum height=2cm,align=center,text width=2cm,fill=green!15!white,font=\sffamily},
    recgray2/.style={draw,minimum width=2cm,minimum height=1.3cm,align=center,text width=2cm,fill=green!15!white,font=\sffamily},
    teacher/.style={draw,minimum width=2cm,minimum height=3.2cm,align=center,text width=10cm,fill=yellow!10!white,font=\sffamily},
    sul/.style={draw,minimum width=2cm,minimum height=4.5cm,align=center,text width=2cm,fill=teal!30!white,font=\sffamily},
    learner/.style={draw,minimum width=2cm,minimum height=4.5cm,align=center,text width=2cm,fill=teal!30!white,font=\sffamily},
}
\tikzset{
>=stealth, 
node distance=3cm, 
every state/.style={thick, fill=gray!10}, 
initial text=$ $, 
}
\tikzset{
        hatch distance/.store in=\hatchdistance,
        hatch distance=5pt,
        hatch shift/.store in=\hatchshift,
        hatch shift=0pt,
        }
\pgfqpoint{\hatchdistance}{\hatchdistance}}
\tikzset{
	treenode/.style = {align=center, inner sep=0pt, text centered},
  basis/.style = {
    pattern=north east lines,
    pattern color=magenta!70!black,
  },
  newbasis/.style={
    pattern=north east lines,
    pattern color=teal!80!black!60!white,
  },
  other/.style={
    pattern=north east lines,
    pattern color=red,
  },
  frontier/.style={
    pattern=north east lines,
    pattern color=yellow!80!black,
  },
  pw/.style={
    pattern=north west lines,
    pattern color=orange!70!black!70!white,
  },
  q0class/.style={
    pattern=north west lines,
    pattern color=red!40!white,
  },
  q1class/.style={
    pattern=north west lines,
    pattern color=blue!60!white,
  },
  q2class/.style={
    pattern=crosshatch,
    pattern color=green!50!black!60!white,
  },
  basic/.style = {
    fill=white,
  },
  sharedbasis/.style = {
    pattern=north west lines,
    pattern color=orange!50!black!60!white,
    distance=2pt,
    preaction={%
      pattern=north east lines,
      pattern color=teal!80!black!60!white,
      distance=2pt,
      yshift=1.5pt,
    }
  },
}
\pgfplotsset{compat=1.8}
\pgfplotsset{vasymptote/.style={
    before end axis/.append code={
        \draw[densely dashed] ({rel axis cs:0,0} -| {axis cs:#1,0})
        -- ({rel axis cs:0,1} -| {axis cs:#1,0});
    }
}}
\definecolor{darkred}{rgb}{.75,0,0}
\definecolor{darkgreen}{rgb}{0,.75,0}
\newcommand{\access}{\ensuremath{\mathsf{access}}}
\newcommand{\accessT}{\ensuremath{\mathsf{access}^{\Obs}\hspace*{-0.05cm}}}
\newcommand{\partialto}{\ensuremath{\rightharpoonup}}
\newcommand{\totalapart}{\ensuremath{\apart_{\uparrow}}}
\newcommand{\apart}{\ensuremath{\mathrel{\#}}}
\newcommand{\takeout}[1]{\relax}
\newcommand{\aaal}[0]{adaptive AAL}
\newcommand{\lsharp}[0]{$L^{\#}$}
\newcommand{\basis}[0]{B}
\newcommand{\frontier}[0]{F}
\newcommand{\firstbasisstate}[0]{\ensuremath{q}}
\newcommand{\secondbasisstate}[0]{\ensuremath{q'}}
\newcommand{\thirdbasisstate}[0]{\ensuremath{q''}}
\newcommand{\firstfrontierstate}[0]{\ensuremath{r}}
\newcommand{\firstrefbasisstate}[0]{\ensuremath{p}}
\newcommand{\secondrefbasisstate}[0]{\ensuremath{p'}}
\newcommand{\nrrefstates}[0]{\ensuremath{o}}
\renewcommand{\H}{\mathcal{H}}
\renewcommand{\S}{\mathcal{S}}
\newcommand{\R}{\mathcal{R}}
\newcommand{\X}{\mathcal{X}}
\newcommand{\M}{\mathcal{M}}
\newcommand{\Obs}{\mathcal{T}}
\newcommand{\Hyp}{\mathcal{H}}
\newcommand{\bigO}{\mathcal{O}}
\newcommand{\converges}{\ensuremath{\mathord{\downarrow}}}
\newcommand{\diverges}{\ensuremath{\mathord{\uparrow}}}
\newcommand{\code}[1]{\texttt{#1}}
\newcommand{\defineApiFunction}[1]{%
\expandafter\newcommand\csname #1\endcsname{\text{\upshape\textsc{#1}}\xspace}%
}
\newcommand{\lsharprebuild}{$L^{\#}_{\scalebox{0.7}{R}}$}
\newcommand{\lsharpmatch}{$L^{\#}_{\scalebox{0.7}{\matches}}$}
\newcommand{\lsharprebuildmatch}{$L^{\#}_{\scalebox{0.7}{R},\scalebox{0.7}{\matches}}$}
\newcommand{\lsharpapproxmatch}{$L^{\#}_{\scalebox{0.7}{\approxmatches}}$}
\renewenvironment{proof}{\textit{Proof.}}{\hfill \ensuremath{\square}}
\newcommand{\myparagraph}[1]{\smallskip\noindent \emph{#1}}
\newcommand{\mysubsubsection}[1]{\medskip\noindent \textbf{#1}}
\newcommand{\pdlstar}{$\partial L^*_M$}
\newcommand{\IKV}{\textit{IKV}}
\newcommand{\adaptivelsharp}{$AL^{\#}$\xspace}
\newcommand{\matches}{\scalebox{0.65}{\ensuremath{{}\overset{\surd}{=}{}}}}
\newcommand{\approxmatches}{\scalebox{0.65}{\ensuremath{{}\overset{\surd}{\simeq}{}}}}
\newcommand{\notmatches}{\scalebox{0.65}{\ensuremath{{}\overset{\surd}{\neq}{}}}}
\newcommand{\notapproxmatches}{\scalebox{0.65}{\ensuremath{{}\overset{\surd}{\not\simeq}{}}}}
\newenvironment{ruleenv}[2]{\smallskip\noindent\textbf{Rule (#1): #2}.}{}
\definecolor{ballblue}{rgb}{0.13, 0.67, 0.8}
\tikzset{
  do guard/.style={
    inner xsep=0pt,
  },
  guard line offset/.style={
    xshift=2mm,
  },
  bigtalloblong/.style={
    draw=black,
    minimum width=3pt,
    minimum height=1em,
    inner xsep=0pt,
  },
}
\newcommand{\connectDoGuards}[2]{%
  \draw[overlay,draw=black!40!white] ([guard line offset]#1) -- ([guard line offset]#2);
}
\newcommand{\StateIf}[1]{\State\textbf{if}~#1~\textbf{then:}}
\newcommand{\StateElse}{\State\textbf{else:}~}
\begin{document}

\title{State Matching and Multiple References in Adaptive Active Automata Learning\thanks{%
This research is partially supported by the NWO grant No.~VI.Vidi.223.096.}}

\titlerunning{ State Matching and Multiple References in \aaal}
%
\author{Loes Kruger\Envelope \orcidID{0009-0003-3275-6806} 
\and  Sebastian Junges \orcidID{0000-0003-0978-8466} 
\and Jurriaan Rot \orcidID{0000-0002-1404-6232}
  }
\authorrunning{L. Kruger et al.}
%
\institute{
 Radboud University, Nijmegen, the Netherlands\\
 \email{\{loes.kruger,sebastian.junges,jurriaan.rot\}@ru.nl}
}
\maketitle              

\begin{abstract}
	Active automata learning (AAL) is a method to infer state machines by interacting with black-box systems.
	Adaptive AAL aims to reduce the sample complexity of AAL by incorporating domain specific knowledge in the form of  (similar) reference models.
	Such reference models appear naturally when learning multiple versions or variants of a software system. 
	In this paper, we present state matching, which allows flexible use of the structure of these reference models by the learner. 
	State matching is the main ingredient of adaptive \lsharp, a novel framework for adaptive learning, built on top of \lsharp{}. 
    Our empirical evaluation shows that adaptive \lsharp{} improves the state of the art by up to two orders of magnitude. 
    
    \end{abstract}

\section{Introduction} \label{sec:introduction}
\emph{Automata learning} aims to extract state machines from observed input-output sequences of some system-under-learning (SUL). 
\emph{Active} automata learning (AAL) assumes that one has black-box access to this SUL, allowing the learner to incrementally choose inputs and observe the outputs. The models learned by AAL can be used as a documentation effort, but are more typically used as basis for testing, verification, conformance checking, fingerprinting---see~\cite{DBLP:journals/cacm/Vaandrager17,DBLP:conf/dagstuhl/HowarS16} for an overview of applications.
The classical algorithm for AAL is $L^*$, introduced by Angluin~\cite{DBLP:journals/iandc/Angluin87};
state-of-the-art algorithms are, e.g., \lsharp~\cite{DBLP:conf/tacas/VaandragerGRW22} and TTT~\cite{DBLP:conf/rv/IsbernerHS14}, which are available in toolboxes such as LearnLib~\cite{DBLP:conf/cav/IsbernerHS15} and AALpy~\cite{DBLP:conf/atva/MuskardinAPPT21}.

The primary challenge in AAL is to reduce the number of inputs sent to the SUL, referred to as the \emph{sample complexity}. To learn a 31-state machine with 22 inputs, state-of-the-art learners may send several million inputs to the SUL~\cite{DBLP:conf/tacas/VaandragerGRW22}. 
This is not necessarily unexpected: the underlying space of 31-state state machines is huge and it is nontrivial how to maximise information gain.
The literature has investigated several approaches to accelerate learners, see the overview of~\cite{DBLP:journals/cacm/Vaandrager17}. Nevertheless, scalability remains a core challenge for AAL.


We study \emph{adaptive} AAL~\cite{DBLP:journals/igpl/GrocePY06}, which aims to improve the sample efficiency by utilizing expert knowledge already given to the learner. In (regular) AAL, a learner commonly starts learning from scratch. In adaptive AAL, however, the learner is given a \emph{reference model}, which ought to be similar to the SUL. Reference models occur naturally in many applications of AAL. For instance:\sj{shortened these a bit}
(1)~Systems evolve over time due to, e.g., bug fixes or new functionalities---and we may have learned the previous system;
(2)~Standard protocols may be implemented by a variety of tools;
(3)~The SUL may be a variant of other systems, e.g., being the same system executing in another environment, or a system configured differently. 

Several algorithms for adaptive AAL have been proposed \cite{DBLP:journals/igpl/GrocePY06, DBLP:conf/cbse/WindmullerNSHB13, DBLP:journals/fmsd/ChakiCSS08, DBLP:conf/ifm/DamascenoMS19, DBLP:conf/birthday/FerreiraHS22}. 
Intuitively, the idea is that these methods try to \emph{rebuild} the part of the SUL which is similar to the reference model. This is achieved by deriving suitable queries from the reference model, using so-called \emph{access sequences} to reach states, and so-called \emph{separating sequences} to distinguish these from other states.\lk{I would prefer to remove so-called and put emph around access and sep seq }\sj{ok with the emph, but i would keep the so-called as we have not introduced them and do not explain them here? It indicates that a reader does not need the definition}
These algorithms rely on a rather strict notion of similarity that depends on the way we reach these states\sj{pls check}. In particular, existing rebuilding algorithms cannot effectively learn an SUL from a reference model that has a different initial state, see Sec.~\ref{sec:overview}. \lk{Should be merged more with the related work possibly. Also feels a bit incomplete }

We propose an approach to adaptive AAL based on \emph{state matching}, which allows flexibly identifying parts of the unknown SUL where the reference model may be an informative guide\sj{removed: to search for queries}. 
 More specifically, in this approach, we match states in the model that we have learned so far (captured as a tree-shaped automaton) with states in the reference model such that the outputs agree on all enabled input sequences.
  This matching allows for targeted re-use of separating sequences from the reference model and is independent of the access sequences. We refine the approach by using \emph{approximate state matching}, where we match a current state with one from the reference model that agrees on most inputs.

Approximate state matching is the essential ingredient for the novel \adaptivelsharp algorithm. This algorithm is a conservative extension of the recent \lsharp{}~\cite{DBLP:conf/tacas/VaandragerGRW22}. Along with approximate state matching, \adaptivelsharp includes rebuilding steps, which are similar to existing methods, but tightly integrated in \lsharp. 
Finally, \adaptivelsharp is the first approach with \emph{dedicated support} to use more than one reference model. 

\myparagraph{Contributions.}
We make the following contributions to the state-of-the-art in adaptive AAL. 
First, we present state matching and its generalization to approximate state matching which allows flexible re-use of separating sequences from the reference model.
Second, we include state matching and rebuilding in an unifying approach, called \adaptivelsharp, which generalizes the \lsharp{} algorithm for non-adaptive automata learning. 
We analyse the resulting framework in terms of termination and complexity.
This framework naturally supports using multiple reference models as well as removing and adding inputs to the alphabet.
Our empirical results show the efficacy of \adaptivelsharp. In particular, \adaptivelsharp may reduce the number of inputs to the SUL by two orders of magnitude. \lk{Was in rel work not in intro: we consider pdlstar and IKV state of the art. Maybe put in experiments?}

\myparagraph{Related work.} 
Adaptive AAL goes back to~\cite{DBLP:journals/igpl/GrocePY06}. That paper, and many of the follow-up approaches~\cite{DBLP:journals/fmsd/ChakiCSS08,DBLP:conf/birthday/BainczykSH20,DBLP:conf/ifm/DamascenoMS19,DBLP:conf/birthday/FerreiraHS22} re-use access sequences and separating sequences from the reference model (or from the data structures constructed when learning that model). 
The recent approach in~\cite{DBLP:conf/ifm/DamascenoMS19} removes redundant access sequences during rebuilding and continues learning with informative separating sequences.
In~\cite{DBLP:conf/cbse/WindmullerNSHB13}, an $L^*$-based adaptive AAL approach is proposed where the algorithm starts by including \emph{all} separating sequences that arise when learning the reference model with $L^*$, ignoring access sequences. This algorithm is used in~\cite{DBLP:conf/fmics/HuistraMP18} for a general study of the usefulness of adaptive AAL: Among others, the authors suggest using more advanced data structures than the observation tables in $L^*$. Indeed, in~\cite{DBLP:conf/birthday/BainczykSH20} the internal data structure of the TTT algorithm is used~\cite{DBLP:conf/rv/IsbernerHS14} in the context of lifelong learning; the precise rebuilding approach is not described\lk{Too defensive?}. The recent~\cite{DBLP:conf/birthday/FerreiraHS22} proposes an adaptive AAL method based on discrimination trees as used in the Kearns-Vazirani algorithm~\cite{DBLP:books/daglib/0041035}. We consider the algorithms proposed in~\cite{DBLP:conf/ifm/DamascenoMS19,DBLP:conf/birthday/FerreiraHS22} the state-of-the-art and have experimentally compared \adaptivelsharp in Sec.~\ref{sec:experiments}.


\section{Overview} \label{sec:overview}
\begin{figure}[t]
    \resizebox{.99\textwidth}{!}{
     \centering
    \begin{subfigure}[b]{0.32\textwidth}
        \scalebox{0.57}{
          \includegraphics{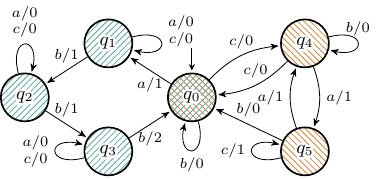}
        }
        \caption{$\S$}
        \label{fig:target}
    \end{subfigure}
    \begin{subfigure}[b]{0.17\textwidth}
    \scalebox{0.57}{
        

    
        \includegraphics{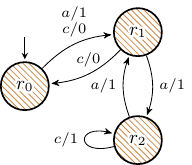}}
    \caption{$\R_1$}
    \label{fig:ref1}
     \end{subfigure}
    \begin{subfigure}[b]{0.2\textwidth}
        \scalebox{0.57}{
            

        
            \includegraphics{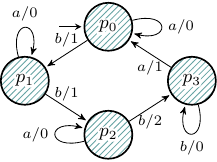}}
        \caption{$\R_2$}
        \label{fig:ref2}
        \end{subfigure}
    
    \begin{subfigure}[b]{0.2\textwidth}
    \scalebox{0.57}{
     \includegraphics{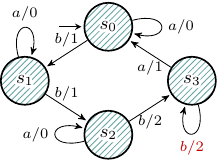}}	
     \caption{$\R_3$}
     \label{fig:ref3}
    \end{subfigure}
}
    \caption{An SUL $\S$ and three reference models $\R_1$, $\R_2$ and $\R_3$.}
    \label{fig:runningexample}
\end{figure}

We illustrate (1)~how adaptive AAL uses a reference model to help learn a system and (2)~how this may reduce the sample complexity of the learner.

\myparagraph{MAT framework.} We recall the standard setting for AAL: Angluin's MAT framework, cf.\ \cite{DBLP:journals/cacm/Vaandrager17,DBLP:conf/dagstuhl/HowarS16}. Here, the learner has no direct access to the SUL, but may ask \emph{output queries (OQs)}: these return, for a given input sequence, the sequence of outputs from the SUL; and \emph{equivalence queries (EQs)}: these take a Mealy machine $\H$ as input, and return whether or not $\H$ is equivalent to the SUL. In case it is not, a counterexample is provided in the form of a sequence of inputs for which $\H$ and the SUL return different outputs. EQs are expensive~\cite{DBLP:conf/wcre/YangASLHCS19,DBLP:journals/sosym/AslamCSB20,DBLP:conf/uss/RuiterP15,DBLP:journals/corr/abs-1904-07075}, therefore, we aim to learn the SUL using primarily OQs. 

\myparagraph{Apartness.} Learning algorithms in the MAT framework typically assume that two states are equivalent as long as their \emph{known} residual languages are equivalent. To \emph{discover a new state}, we must therefore (1)~access it by an input sequence and (2)~prove this state distinct (\emph{apart}) from the other states that we already know.
Consider the SUL $\S$ in Fig.~\ref{fig:target}. The access sequences $c$, $ca$ access $q_4$ and $q_5$, respectively, from the initial state. These states are different because the response to executing $c$ from $q_4$ and $q_5$ is distinct: We say $c$ is a \emph{separating sequence} for $q_4$ and $q_5$.
 This difference can be observed by posing OQs for $cc$ and $cac$, consisting of the access sequences for $q_4$ and $q_5$ followed by their separating sequence $c$.

\myparagraph{Aim.} The aim of adaptive AAL is to learn SULs with fewer inputs, using knowledge in the form of a reference model, known to the learner and preferrably similar to the SUL. The discovery of states is accelerated by extracting candidates for both (1)~access sequences and (2)~separating sequences from the reference model.

\myparagraph{Rebuilding.}
The state-of-the-art in adaptive AAL uses access sequences and separating sequences from the reference model~\cite{DBLP:conf/ifm/DamascenoMS19,DBLP:conf/birthday/FerreiraHS22} in an initial phase.
 Consider the Mealy machine $\R_1$ in Fig.~\ref{fig:ref1} as a reference model for the SUL $\S$ in Fig.~\ref{fig:target}.
 The sequences $\varepsilon$, $c$, $ca$ can be used to access all orange states in both $\S$ and $\R_1$.
 The separating sequences $c$ and $ac$ for these states in $\R_1$ also separate the orange states in $\S$. By asking OQs combining the access sequences and separating sequences, we discover all orange states for $\S$. 
 
 \myparagraph{Limits of rebuilding.}
 However, these rebuilding approaches have limitations. 
 Consider $\R_2$ in Fig.~\ref{fig:ref2}. The sequences $\varepsilon$, $b$, $bb$ and $bbb$ can be used to access all states in $\R_2$. Concatenating these with any separating sequences from $\R_2$ will not be helpful to learn SUL $\S$, because in $\S$ these sequences all access $q_0$.
 However, the separating sequences from $\R_2$ are useful if executed in the right state of $\S$. For instance, the sequence $bb$ separates all states in $\R_2$, and the blue states in $\S$.
Thus, rebuilding does not realise the potential of reusing the separating sequences from $\R_2$, since the access sequences for the relevant states are different.

\myparagraph{State Matching.} We extend adaptive AAL with \emph{state matching}. State matching overcomes the strong dependency on the access sequences and allows the efficient usage of reference models where the residual languages of the individual states are similar. Suppose that while learning, we have not yet separated $q_0$ and $q_1$ in $\S$, but we do know the output of the $b$-transition from $q_0$. We may use that output to \emph{match} $q_0$ with $p_3$ in $\R_2$: these two states agree on input sequences where both are defined. Subsequently, we can use the separating sequence $bb$ between $p_3$ and $p_0$ to separate $q_0$ and $q_1$, through OQs $bb$ and $abb$. 

\myparagraph{Approximate State Matching.} It rarely happens that states in the SUL exactly match states in the reference model: Consider the scenario where we want to learn $\S$ with reference model $\R_3$ from Fig.~\ref{fig:ref3}. States $q_0$ and $s_3$ do not match because they have different outputs for input $b$ but are still similar. This motivates an approximate version of matching, where a state is matched to the reference state which maximises the number of inputs with the same output.

\myparagraph{Outline.} After the preliminaries (Sec.~\ref{sec:prelim}), we recall the \lsharp{} algorithm and extend it with rebuilding (Sec.~\ref{sec:lsharprebuild}). We then introduce 
 adaptive AAL with state matching and its approximate variant (Sec.~\ref{sec:statematch}). Together with rebuilding, this results in the \adaptivelsharp algorithm
(Sec.~\ref{sec:alsharp}). We proceed to define a variant that allows the use of multiple reference models (Sec.~\ref{sec:multiple}). 
This is helpful already in the example discussed in this section: given both $\R_1$ and $\R_2$, \adaptivelsharp with multiple reference models allows to discover all states in $\S$ without any EQs, see \ifthenelse{\boolean{arxivversion}}{App.~\ref{app:F}.}{App.~F~of~\cite{Kruger2024AdaptiveArxiv}.} 

\section{Preliminaries} \label{sec:prelim}
For a partial map $f \colon X \partialto Y$, we write $f(x)\converges$ if $f(x)$ is defined and $f(x)\diverges$ otherwise.

\begin{definition}
A \emph{partial Mealy machine} is a tuple $\M = (Q, I, O, q_0, \delta, \lambda)$, where $Q$, $I$ and $O$ are finite sets of states, inputs and outputs respectively; $q_0 \in Q$ an \emph{initial state}, $\delta\colon Q \times I \partialto Q$ a \emph{transition function}, and $\lambda\colon Q \times I \partialto O$ an \emph{output function} such that $\delta$ and $\lambda$ have the same domain.
 A \emph{(complete) Mealy machine} is a partial Mealy machine where $\delta$ and $\lambda$ are total. If not specified otherwise, a Mealy machine is assumed to be complete.
 \end{definition}
We write $\M|_{I}$ to denote $\M$ restricted to alphabet $I$.
We use the superscript $\M$ to indicate to which Mealy machine we refer, e.g. $Q^{\M}$ and $\delta^{\M}$. 
The transition and output functions are naturally extended to input sequences of length $n \in \mathbb{N}$ as functions $\delta\colon Q \times I^n \partialto Q$ and $\lambda\colon Q \times I^n \partialto O^n$. 
We abbreviate $\delta(q_0, w)$ by $\delta(w)$. 

\begin{definition}\label{def:matching}
   Let $\M_1$, $\M_2$ be partial Mealy machines. States $p \in Q^{\M_1}$ and $q \in Q^{\M_2}$ \emph{match}, written $p \matches q$, if $\lambda(p,\sigma)=\lambda(q,\sigma)$ for all $\sigma \in (I^{\M_1} \cap I^{\M_2})^*$ with $\delta(p,\sigma){\converges}$ and $\delta(q,\sigma){\converges}$. 
    If $p$ and $q$ do not match, they are \emph{apart}, written $p \apart q$. 
\end{definition}
If $p \apart q$, then there is a \emph{separating sequence}, i.e., a sequence $\sigma$ such that $\lambda(p,\sigma) \neq \lambda(q,\sigma)$; this situation is denoted by $\sigma \vdash p \apart q$.
The definition of matching allows the input (and output) alphabets of the underlying Mealy machines to differ; it requires that they agree on all commonly defined input sequences. If $\M_1$ and $\M_2$ are complete and have the same alphabet, then the matching of states is referred to as \emph{language equivalence}. Two complete Mealy machines are \emph{equivalent} if their initial states are language equivalent. 

Let $\M$ be a partial Mealy machine. A state $q \in Q^{\M}$ is \emph{reachable} if there exists $\sigma \in I^*$ such that $\delta^{\M}(q_0, \sigma)=q$. The \emph{reachable part} of $\M$ contains all reachable states in $Q^{\M}$.
A sequence $\sigma$ is an \emph{access sequence} for $q \in Q^{\M}$ if $\delta^{\M}(\sigma)=q$. 
A set $P \subseteq I^*$ is a \emph{state cover} for $\M$ if $P$ contains an access sequence for every reachable state in $\M$. 
In this paper, a \emph{tree} $\Obs$ is a partial Mealy machine where every state $q$ has a \emph{unique} access sequence, denoted by $\access(q)$. 

\begin{definition} \label{def:sepfam}
    Let $\M$ be a complete Mealy machine. A set $W_q \subseteq (I^{\M})^*$ is a \emph{state identifier} for $q \in Q^\M$ if for 
	all $p \in Q^\M$ with $p \apart q$ there exists $\sigma \in W_q$ such that $\sigma \vdash p \apart q$. 
	A \emph{separating family} is a collection of state identifiers $\{W_p\}_{p \in Q^\M}$ such that for all $p,q \in Q^\M$ with $p \apart q$ there exists $\sigma \in W_p \cap W_q$ with $\sigma \vdash p \apart q$.
\end{definition} 
We use $P^{\M}$ and $\{W_q\}^{\M}$ to refer to a minimal state cover and a separating family for $\M$ respectively. State covers and separating families can be constructed for every Mealy machine, but are not necessarily unique. 
\section{$L^{\#}$ with Rebuilding} \label{sec:lsharprebuild}
We first recall the \lsharp\ algorithm for (standard) AAL~\cite{DBLP:conf/tacas/VaandragerGRW22}. Then, we consider adaptive learning by presenting an \lsharp-compatible variant of rebuilding. 

\subsection{Observation Trees}
\lsharp~uses an observation tree as data structure to store the observed traces of $\M$.

\begin{definition}
    A tree $\Obs$ is an \emph{observation tree} if there exists a mapping $f \colon Q^\Obs \to Q^\M$ such that $f(q_0^{\Obs})=q_0^{\M}$ and $q \xrightarrow[]{i/o} q'$ implies $f(q) \xrightarrow[]{i/o} f(q')$.
\end{definition}

In an observation tree, a \emph{basis} is a subtree that describes unique behaviour present in the SUL. 
Initially, a basis $\basis \subseteq Q^{\Obs}$ contains the root state. All states in the basis are pairwise apart, i.e., for all $\firstbasisstate \neq \secondbasisstate \in \basis$ it holds that $\firstbasisstate \apart \secondbasisstate$.  
For a fixed basis, its \emph{frontier} is the set of states $\frontier \subseteq Q^{\Obs}$ which are immediate successors of basis states but which are not in the basis themselves.
\begin{figure}[t!]
     \centering
     \begin{minipage}[b]{0.18\textwidth}
     \begin{center}
          \resizebox{.6\width}{!}{
          
          \includegraphics{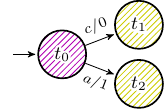}
          }
          \subcaption{$\Obs$}
          \label{fig:tree1}
          \vspace{0.5cm}
          \resizebox{.6\width}{!}{
               
               
          \includegraphics{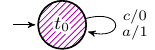}
               }
               \subcaption{$\Hyp$}
               \label{fig:hyp1}
     \end{center}
\end{minipage}
\begin{minipage}[b]{0.55\textwidth}
     \begin{center}
          \resizebox{.6\width}{!}{
          \includegraphics{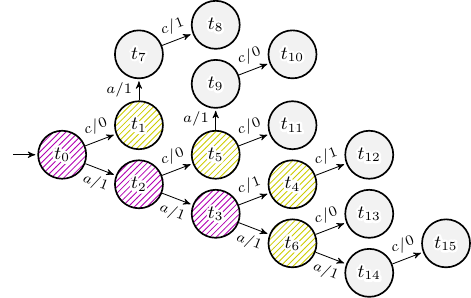}}
     \subcaption{$\Obs'$}	
     \label{fig:tree2}
     \end{center}
     \end{minipage}
     \begin{minipage}[b]{0.22\textwidth}
          \begin{center}
          \resizebox{.6\width}{!}{

          \includegraphics{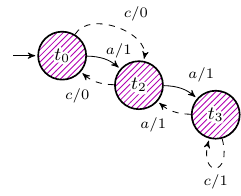}}
          \subcaption{$\H'$}	
          \label{fig:hyp2}
          \end{center}
     \end{minipage}
     \caption{Observation trees and hypotheses generated while learning $\R_1$ with \lsharp. Basis states are displayed in pink and frontier states in yellow. }	
\end{figure}

\begin{example}
    Fig.~\ref{fig:tree2} shows an observation tree $\Obs'$ for the Mealy machine $\H'$ from Fig.~\ref{fig:hyp2}. The separating sequences $c$ and $ac$ show that the states in basis $\basis = \{t_0, t_2, t_3 \}$ are all pairwise apart. The frontier $\frontier$ is $\{ t_1, t_4, t_5, t_6 \}$. 
\end{example}
We say that a frontier state is \emph{isolated} if it is apart from all basis states. A frontier state is \emph{identified} with a basis state $q$ if it is apart from all basis states except $q$. We say the observation tree is \emph{adequate} if all frontier states are identified, no frontier states are isolated and each basis state has a transition with every input.
If every frontier state is identified and each basis state has a transition for every input, the observation tree can be \emph{folded} to create a complete Mealy machine \ifthenelse{\boolean{arxivversion}}{(formalized in Def.~\ref{def:foldedback})}{}. The Mealy machine has the same states as the basis.
The transitions between basis states are the same as in the observation tree. Transitions from basis states to frontier states are \emph{folded back} to the basis state the frontier state is identified with. We call the resulting complete Mealy machine a \emph{hypothesis} whenever this canonical transformation is used.
\begin{example}
In $\Obs'$ (Fig.~\ref{fig:tree2}) the frontier states are identified as follows: $t_1\mapsto t_2, t_4 \mapsto t_3, t_5 \mapsto t_0$ and $t_6 \mapsto t_2$.
Hypothesis $\H'$ (Fig.~\ref{fig:hyp2}) can be folded back from $\Obs'$. The dashed transitions in Fig.~\ref{fig:hyp2} represent the folded transitions.
\end{example}

\subsection{The $L^{\#}$ Algorithm} \label{sec:lsharp}
The \lsharp~algorithm maintains an observation tree $\Obs$ and a basis $\basis$. Initially, $\Obs$ consists of just a root node $q_0$ and  $\basis = \{ q_0 \}$. We denote the frontier of $\basis$ by $\frontier$.
The \lsharp{} algorithm then repeatedly applies the following four rules. 

\begin{itemize}[noitemsep,topsep=0pt]
\item 
The \emph{promotion} rule (\textbf{P}) extends $B$ by $\firstfrontierstate \in F$ when $\firstfrontierstate$ is isolated.
\item The \emph{extension} rule (\textbf{Ex}) poses OQ $\access(\firstbasisstate)i$ for $\firstbasisstate \in \basis, i \in I$ with $\delta(\firstbasisstate, i)\diverges$. 
\item The \emph{separation} rule (\textbf{S}) takes a state $\firstfrontierstate \in \frontier$ that is not apart from $\firstbasisstate, \secondbasisstate \in \basis$ and poses OQ $\access(r)\sigma$ with $\sigma \vdash \firstbasisstate \apart \secondbasisstate$ that shows $\firstfrontierstate$ is apart from $\firstbasisstate$ or $\secondbasisstate$. 
\item The \emph{equivalence} rule (\textbf{Eq}) folds $\Obs$ into hypothesis $\H$, checks whether $\H$ and $\Obs$ agree on all sequences in $\Obs$ and poses an EQ. If $\H$ and the SUL are not equivalent, counterexample processing isolates a frontier state. 
\end{itemize}
The pre- and postconditions of the rules are summarized in (the top rows of) Table~\ref{tab:rules}. 
A detailed account is given in the paper introducing \lsharp~\cite{DBLP:conf/tacas/VaandragerGRW22}. 

\begin{table}[t]
    \caption{Extended \lsharp{} rules with parameters, preconditions and postconditions.}
    \resizebox{.99\textwidth}{!}{
    \begin{tabular}{p{0.22cm}|p{1.58cm}||p{1.9cm}|l|p{2cm}}
        & Rule & Parameters & Precondition & Postcondition \\ \hline \hline
        \parbox[t]{2mm}{\multirow{8}{*}{\rotatebox[origin=c]{90}{Sec.~\ref{sec:lsharp}}}} &
        \emph{promotion}\Tstrut & $\firstfrontierstate \in \frontier$ & $\forall \firstbasisstate \in \basis, \firstbasisstate \apart \firstfrontierstate$ & $\firstfrontierstate \in \basis$ \\ \cline{2-5}
        & \emph{extension}\Tstrut & $\firstbasisstate \in \basis$, $i \in I$ & $\delta^{\Obs}(\firstbasisstate,i){\uparrow}$ & $\delta^{\Obs}(\firstbasisstate,i){\converges}$ \\ 
        \cline{2-5}
        & \multirow{2}{=}{\emph{separation}}\Tstrut & $\firstfrontierstate \in \frontier,$ & $\neg(\firstfrontierstate \apart \firstbasisstate), \neg(\firstfrontierstate \apart \secondbasisstate), \firstbasisstate \neq \secondbasisstate$ & $\firstfrontierstate \apart \firstbasisstate \lor \firstfrontierstate \apart \secondbasisstate$ \\ 
        & & $\firstbasisstate, \secondbasisstate \in \basis$ & & \\ \cline{2-5}
        & \multirow{3}*{\emph{equivalence}}\Tstrut & - & $\forall \firstbasisstate \in \basis.~\forall i \in I.~\delta^{\Obs}(\firstbasisstate,i){\converges},$  & $\exists \firstfrontierstate \in \frontier$ s.t. \\ 
        &  & & $\forall \firstfrontierstate \in \frontier.~\exists \firstbasisstate \in \basis.$  & $\forall \firstbasisstate \in \basis.~\firstfrontierstate \apart \firstbasisstate$  \\ 
        & & & $(\neg(\firstfrontierstate \apart \firstbasisstate) \land ~\forall \secondbasisstate\in \basis\setminus\{\firstbasisstate\}.~\firstfrontierstate \apart \secondbasisstate)$ & \\ \hline
        \parbox[t]{2mm}{\multirow{7}{*}{\rotatebox[origin=c]{90}{Sec~\ref{sec:rebuild}}}} & \multirow{5}*{\emph{rebuilding}}\Tstrut & $\firstbasisstate,\secondbasisstate \in \basis,$ & $\delta^{\Obs}(\firstbasisstate,i) \notin \basis, \neg(\secondbasisstate \apart \delta^{\Obs}(\firstbasisstate,i))$, & $\delta^{\Obs}(\firstbasisstate,i\sigma){\converges}$, \\
        & & $i \in I$ & $\accessT(\firstbasisstate)i, \accessT(\secondbasisstate) \in P^{\R}$, & $\delta^{\Obs}(\secondbasisstate,\sigma){\converges}$ \\
        & & & $\sigma = \mathsf{sep}\bm{(}\delta^{\R}(\accessT(\firstbasisstate)i),\delta^{\R}(\accessT(\secondbasisstate))\bm{)}$, & {} \\
        & & & $(\delta^{\Obs}(\firstbasisstate,i\sigma){\uparrow} \lor \delta^{\Obs}(\secondbasisstate,\sigma){\uparrow})$\Tstrut &  \\ \cline{2-5}
        & \multirow{2}{=}{\emph{prioritized} \emph{promotion}}\Tstrut & $\firstfrontierstate \in \frontier$ & $\accessT(\firstfrontierstate) \in P^{\R}, \forall \firstbasisstate \in \basis.~\firstbasisstate \apart \firstfrontierstate$ & $\firstfrontierstate \in \basis$ \\ 
        &  &  &  &  \\ \hline
        \parbox[t]{2mm}{\multirow{7}{*}{\rotatebox[origin=c]{90}{Sec.~\ref{sec:ms},~\ref{sec:mr_ps}}}} & \multirow{2}{=}{\emph{match} \emph{separation}}\Tstrut & $\firstbasisstate,\secondbasisstate \in \basis$, & $ \delta^{\Obs}(\firstbasisstate,i) = \firstfrontierstate \in \frontier, \neg(\firstfrontierstate \apart \secondbasisstate)$, $\delta^{\R}(\firstrefbasisstate,i)=\secondrefbasisstate$ & $\firstfrontierstate \apart \secondbasisstate~\lor$\\
        & & $\firstrefbasisstate \in Q^{\R}, i \in I$ & $\neg(\exists \thirdbasisstate\in \basis$ s.t. $\secondrefbasisstate \matches \thirdbasisstate), \firstrefbasisstate \matches \firstbasisstate$ & $(\firstrefbasisstate \notmatches \firstbasisstate \land \firstfrontierstate \apart \secondrefbasisstate)$ \\ \cline{2-5}
        & \multirow{2}{=}{\emph{match} \emph{refinement}}\Tstrut & $\firstbasisstate \in \basis,$ & $\firstrefbasisstate \matches \firstbasisstate, \secondrefbasisstate \matches \firstbasisstate,$ & $\firstrefbasisstate \notmatches \firstbasisstate \lor \secondrefbasisstate \notmatches \firstbasisstate$ \\ 
        &  & $\firstrefbasisstate, \secondrefbasisstate \in Q^{\R}$ & $\sigma = \mathsf{sep}(\firstrefbasisstate,\secondrefbasisstate)$ &  \\ \cline{2-5}
        & \multirow{2}{=}{\emph{prioritized} \emph{separation}}\Tstrut & $\firstfrontierstate \in \frontier,$ & $\neg(\firstfrontierstate \apart \secondbasisstate), \neg(\firstfrontierstate \apart \thirdbasisstate),\exists i \in I$ s.t. $ \delta^{\Obs}(\firstbasisstate,i) = \firstfrontierstate,$  & $\firstfrontierstate \apart \thirdbasisstate~\lor$ \\
        & & $\secondbasisstate, \thirdbasisstate \in \basis$ & $\sigma \vdash \secondbasisstate \apart \thirdbasisstate, \sigma \in \cup_{\firstrefbasisstate \matches \firstbasisstate} W_{\delta^{\R}(\firstrefbasisstate,i)}$ &  $\firstfrontierstate \apart \secondbasisstate$\\ 
        \hline
    \end{tabular}}
    \label{tab:rules}
\end{table}

\begin{example}
    Suppose we learn $\R_1$ from Fig.~\ref{fig:runningexample}. \lsharp{} 
    applies the \emph{extension} rule twice, resulting in $\Obs$ as in Fig.~\ref{fig:tree1}.  
    States $t_1$ and $t_2$ are identified with $t_0$ because there is only one basis state. Next, \lsharp{} applies the \emph{equivalence} rule using hypothesis $\H$ (Fig.~\ref{fig:hyp1}). Counterexample $aac$ distinguishes $\H$ from $\R_1$. This sequence is added to $\Obs$ and processed further by posing OQ $ac$ in the \emph{equivalence} rule.
    Observations $ac$ and $aac$ show that the states accessed with $\varepsilon$, $a$ and $aa$ are pairwise apart. States $t_2$ and $t_3$ are added to the basis using the \emph{promotion} rule. 
    Next, \lsharp{} poses OQ $aaa$ during the \emph{extension} rule. 
    To identify all frontier states, \lsharp{} may use $ac \vdash t_2 \apart t_3$, $ac \vdash t_0 \apart t_2$ and $c \vdash t_0 \apart t_3$. 
    Fig.~\ref{fig:tree2} shows one possible observation tree $\Obs'$ after applying the \emph{separation} rule multiple times. Next, the \emph{equivalence} rule constructs hypothesis $\H'$ (Fig.~\ref{fig:hyp2}) from $\Obs'$ and \lsharp{} terminates because $\H'$ and $\R_1$ are equivalent.
\end{example}

\subsection{Rebuilding in $L^{\#}$} \label{sec:rebuild}
In this subsection, we combine rebuilding from~\cite{DBLP:conf/ifm/DamascenoMS19,DBLP:conf/birthday/FerreiraHS22} with \lsharp{} and implement this using two rules: \emph{rebuilding} and \emph{prioritized promotion}, see also Table~\ref{tab:rules}.
Both rules depend on a reference model $\R$, which is a complete Mealy machine, with a possibly different alphabet than the SUL $\S$. More precisely, these rules depend on a prefix-closed and minimal state cover $P^{\R}$ and a separating family $\{W_q\}^{\R}$ computed on $\R|_{I^{\S}}$ for maximal overlap with $\S$. 
The separating family can be computed with partition refinement~\cite{DBLP:conf/lata/SmetsersMJ16}. We fix $\mathsf{sep}(p,p')$ with $p, p' \in Q^{\R}$ to be a unique sequence from $W_{p} \cap W_{p'}$ such that $\mathsf{sep}(p,p') \vdash p \apart p'$.
Below, we use $\firstbasisstate$ for states in $\basis$, $\firstfrontierstate$ for states in $\frontier$ and $\firstrefbasisstate$ for states in~$Q^{\R}$. In App.\ifthenelse{\boolean{arxivversion}}{~\ref{app:A}}{~A~of~\cite{Kruger2024AdaptiveArxiv}}, we depict the scenarios in the observation tree and reference model required for the new rules to be applicable.

\begin{ruleenv}{R}{Rebuilding}
 Let $\firstbasisstate \in \basis$, $i \in I$ and suppose $\delta^{\Obs}(\firstbasisstate,i) \notin \basis$.
The aim of the \emph{rebuilding} rule is to show apartness between $\delta^{\Obs}(\firstbasisstate,i)$ and a basis state $\secondbasisstate$, using the state cover and separating family from $\R$. 
The rebuilding rule is applicable when $\accessT(\firstbasisstate)$ and $\accessT(\firstbasisstate)i$ are in $P^{\R}$. 
If $\accessT(\secondbasisstate) \in P^{\R}$ then there exists a sequence $\sigma$ such that $\sigma = \mathsf{sep}\bm{(}\delta^{\R}(\accessT(\firstbasisstate)i),\delta^{\R}(\accessT(\secondbasisstate))\bm{)}$. We pose OQs $\accessT(\firstbasisstate)i\sigma$ and $\accessT(\secondbasisstate)\sigma$.
\end{ruleenv}

\begin{lemma} \label{lem:rebuild}
    Suppose $\accessT(\secondbasisstate) \in P^{\R}$ for all $\secondbasisstate \in B$. 
       Consider $\firstbasisstate \in \basis$, $i \in I$ such that
        $\delta^{\Obs}(\firstbasisstate,i) \notin \basis$ and $\accessT(\firstbasisstate)i \in P^{\R}$.
           If for all $q' \in B$ it holds that $\mathsf{sep}\bm{(}\delta^{\R}(\accessT(\firstbasisstate)i),\delta^{\R}(\accessT(\secondbasisstate))\bm{)} \vdash \delta^{\S}(\accessT(\firstbasisstate)i) \apart \delta^{\S}(\accessT(\secondbasisstate))$,
            then after applying the \emph{rebuilding} rule for $\firstbasisstate$, $i$ and all $\secondbasisstate \in \basis$ with $\neg(\secondbasisstate \apart \delta^{\Obs}(\firstbasisstate,i))$, state $\delta^{\Obs}(\firstbasisstate,i)$ is isolated.
    \end{lemma}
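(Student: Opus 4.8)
The plan is to show that a single application of the \emph{rebuilding} rule to a basis state $\secondbasisstate$ that is not yet apart from $\firstfrontierstate := \delta^{\Obs}(\firstbasisstate,i)$ records, inside the observation tree, a separating sequence witnessing $\firstfrontierstate \apart \secondbasisstate$. Since apartness witnesses survive any further extension of the tree, applying the rule to every basis state not yet apart from $\firstfrontierstate$ leaves $\firstfrontierstate$ apart from all of $\basis$, which is exactly the definition of isolated.

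First I would fix notation and exploit the tree structure. As $\Obs$ is a tree and $\delta^{\Obs}(\firstbasisstate,i)=\firstfrontierstate$, the unique access sequence of $\firstfrontierstate$ is $\accessT(\firstfrontierstate)=\accessT(\firstbasisstate)i$, which lies in $P^{\R}$ by assumption. The observation-tree property provides a simulation $f\colon Q^{\Obs}\to Q^{\S}$ with $f(\delta^{\Obs}(w))=\delta^{\S}(w)$ for every access word $w$ and with tree outputs matching those of $\S$ from the image state; in particular $f(\firstfrontierstate)=\delta^{\S}(\accessT(\firstbasisstate)i)$ and $f(\secondbasisstate)=\delta^{\S}(\accessT(\secondbasisstate))$.

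The core step is then immediate. Fix $\secondbasisstate\in\basis$ with $\neg(\secondbasisstate\apart\firstfrontierstate)$; since $\accessT(\secondbasisstate)\in P^{\R}$, the rule applies with $\sigma=\mathsf{sep}(\delta^{\R}(\accessT(\firstbasisstate)i),\delta^{\R}(\accessT(\secondbasisstate)))$ and poses the OQs $\accessT(\firstbasisstate)i\,\sigma$ and $\accessT(\secondbasisstate)\,\sigma$. After these queries the tree stores the outputs $\lambda^{\S}(\delta^{\S}(\accessT(\firstbasisstate)i),\sigma)$ along the branch from $\firstfrontierstate$ and $\lambda^{\S}(\delta^{\S}(\accessT(\secondbasisstate)),\sigma)$ along the branch from $\secondbasisstate$. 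The hypothesis of the lemma is precisely $\sigma\vdash\delta^{\S}(\accessT(\firstbasisstate)i)\apart\delta^{\S}(\accessT(\secondbasisstate))$, so these two output sequences differ, yielding $\sigma\vdash\firstfrontierstate\apart\secondbasisstate$ in $\Obs$. Because the tree only grows, no previously witnessed apartness is lost, so once $\firstfrontierstate$ is apart from a basis state it stays apart; ranging over all $\secondbasisstate\in\basis$ with $\neg(\secondbasisstate\apart\firstfrontierstate)$ therefore makes $\firstfrontierstate$ apart from every basis state, i.e., isolated.

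I expect the only real work to lie in the bookkeeping rather than in the argument: one must verify that the two OQs populate exactly the branches rooted at $\firstfrontierstate$ and $\secondbasisstate$ with the $\S$-outputs along $\sigma$ (so the recorded labels are the comparable ones), and that the frontier state's access sequence is indeed $\accessT(\firstbasisstate)i$ so that $\sigma$ is appended at the correct node. Both are consequences of the tree structure and the observation-tree property, and once the correct outputs sit in the tree the separation is forced by the lemma's assumption.
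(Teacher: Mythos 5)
Your proposal is correct and follows essentially the same route as the paper's proof: for each basis state $\secondbasisstate$ not yet apart from $\delta^{\Obs}(\firstbasisstate,i)$, apply the rule, observe that the two output queries record outputs of $\S$ that differ by the lemma's hypothesis, conclude apartness, and note that apartness persists as the tree grows. The only detail the paper spells out that you gloss over is the verification of the rule's precondition $\delta^{\Obs}(\firstbasisstate,i\sigma){\uparrow} \lor \delta^{\Obs}(\secondbasisstate,\sigma){\uparrow}$, which the paper derives by noting that if both traces were already defined the states would already be apart, contradicting the choice of $\secondbasisstate$.
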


\noindent If a state is isolated, it can be added to the basis using the \emph{promotion} rule.

\begin{ruleenv}{PP}{Prioritized promotion}
Like (regular) promotion, \emph{prioritized promotion} extends the basis. However, \emph{prioritized promotion} only applies to states $r$ with $\accessT(r) \in P^{\R}$. This enforces that the access sequences for basis states are in $P^{\R}$ as often as possible, enabling the use of the \emph{rebuilding} rule.
\end{ruleenv}

\begin{example} 
    Consider reference $\R_1$ and SUL $\S$ from Fig.~\ref{fig:runningexample}.
    We learn the orange states similarly as described in~Sec.~\ref{sec:overview}: We apply the \emph{rebuilding} rule with $\accessT(q) = \varepsilon, \accessT(q') = \varepsilon, i = c$ which results in OQs $cac$ and $ac$. Next, we promote $\delta^{\Obs}(c)$ with the \emph{prioritized promotion} rule. We apply the \emph{rebuilding} rule with $\accessT(q) = c, \accessT(q') = c$ and $i = a$ which results in OQs $cac$ (already present in $\Obs$) and $cc$. Lastly, we promote $\delta^{\Obs}(ca)$ with \emph{prioritized promotion}.
\end{example}

The overlap between $\S$ and $P^\R$ and $\{W_q\}^\R$ determines how many states of $\S$ can be discovered via rebuilding. The statement follows from Lemma~\ref{lem:rebuild} above.

\begin{theorem} \label{thm:completerebuilding}
    If $q_0^{\R}$ matches $q_0^{\S}$ and $\Obs$ only contains a root $q_0^{\Obs}$, then after applying only the \emph{rebuilding} and \emph{prioritized promotion} rules until they are no longer applicable, the basis consists of $n$ states where $n$ is the number of equivalence classes (w.r.t. language equivalence) in the reachable part of $\S|_{I^{\R}}$. 
\end{theorem}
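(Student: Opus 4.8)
The plan is to prove the two inequalities $|\basis|\le n$ and $|\basis|\ge n$ separately, working throughout over the common alphabet $I^{\S}\cap I^{\R}$ (both $\S|_{I^{\R}}$ and $\R|_{I^{\S}}$ are Mealy machines over this alphabet). The key starting remark is that every output query posed by the \emph{rebuilding} rule lies in $(I^{\S}\cap I^{\R})^*$: access sequences come from $P^{\R}\subseteq(I^{\S}\cap I^{\R})^*$ and the separating sequences $\mathsf{sep}(\cdot,\cdot)$ come from the separating family computed on $\R|_{I^{\S}}$. Hence the observation tree $\Obs$ built by the two rules is an observation tree for $\S|_{I^{\R}}$, and apartness of two tree states entails that their images in $\S|_{I^{\R}}$ lie in distinct language-equivalence classes. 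The upper bound $|\basis|\le n$ is then immediate: basis states are pairwise apart by definition, so they map to pairwise distinct classes of $\S|_{I^{\R}}$.

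The first substantial step is to exploit the hypothesis $q_0^{\R}\matches q_0^{\S}$. Since $\R$ and the SUL $\S$ are complete, matching of the initial states over $I^{\S}\cap I^{\R}$ gives $\lambda^{\R}(q_0^{\R},w)=\lambda^{\S}(q_0^{\S},w)$ for all $w\in(I^{\S}\cap I^{\R})^*$, and therefore $\delta^{\R}(w)\matches\delta^{\S}(w)$ for every such $w$. Two consequences follow. First, $\R|_{I^{\S}}$ and $\S|_{I^{\R}}$ have exactly the same reachable residual languages, so $n$ equals the number of equivalence classes of $\R|_{I^{\S}}$, which equals $|P^{\R}|$ for a minimal state cover. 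Second, a sequence $\sigma$ separates $\delta^{\R}(u)$ from $\delta^{\R}(v)$ if and only if it separates $\delta^{\S}(u)$ from $\delta^{\S}(v)$. This second point is precisely what discharges the hypothesis of Lemma~\ref{lem:rebuild}: whenever $\mathsf{sep}(\delta^{\R}(\accessT(\firstbasisstate)i),\delta^{\R}(\accessT(\secondbasisstate)))$ is defined, it witnesses $\delta^{\S}(\accessT(\firstbasisstate)i)\apart\delta^{\S}(\accessT(\secondbasisstate))$.

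The core of the argument is the lower bound, which I would establish by induction on the length of sequences in the prefix-closed set $P^{\R}$, proving that every $p\in P^{\R}$ becomes the access sequence of a basis state (so the basis access sequences coincide with $P^{\R}$ and, by the first consequence above, reach all $n$ classes). The base case is $\varepsilon\in P^{\R}$, accessing the root $q_0^{\Obs}\in\basis$. For the step, given $p=p'i\in P^{\R}$, the induction hypothesis makes $\firstbasisstate:=\delta^{\Obs}(p')$ a basis state with $\accessT(\firstbasisstate)=p'\in P^{\R}$, so \emph{rebuilding} is enabled for $\firstbasisstate$ and $i$ and creates the frontier state $\firstfrontierstate=\delta^{\Obs}(p)$. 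Running rebuilding against each basis state $\secondbasisstate$ and applying Lemma~\ref{lem:rebuild} (its hypothesis secured above) isolates $\firstfrontierstate$ from every basis class other than its own; because the minimal cover $P^{\R}$ reaches pairwise-distinct classes, $\firstfrontierstate$ is apart from all current basis states, hence isolated, and \emph{prioritized promotion} (applicable since $\accessT(\firstfrontierstate)=p\in P^{\R}$) adds it to $\basis$. As the rules are applied until no longer applicable, this promotion has necessarily occurred at termination, yielding $|\basis|\ge|P^{\R}|=n$ and, with the upper bound, $|\basis|=n$.

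I expect the main obstacle to be the bookkeeping that links the abstract equivalence classes of $\S|_{I^{\R}}$ to the concrete observation-tree states reached by $P^{\R}$, and in particular the reliance on $P^{\R}$ being a minimal state cover of the minimized $\R|_{I^{\S}}$, so that distinct elements of $P^{\R}$ reach distinct classes. Without this, an element of $P^{\R}$ could be \emph{identified} into an existing basis state rather than promoted, and one would then need a separate argument---routing through an alternative access sequence in $P^{\R}$---to guarantee that every reachable class is still covered. The remaining delicate point is the interface with Lemma~\ref{lem:rebuild}: namely that $\mathsf{sep}$ is defined only between apart reference states, so that a frontier state lying in the same class as a basis state is correctly \emph{not} isolated and is folded back rather than promoted.
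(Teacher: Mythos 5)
Your proposal is correct and follows essentially the same route as the paper's proof: both arguments use the matching of initial states together with completeness to conclude $|P^{\R}|=n$ and to transfer separating sequences from $\R$ to $\S$ (discharging the hypothesis of Lemma~\ref{lem:rebuild}), and both use prefix-closedness and minimality of $P^{\R}$ to show that each undiscovered class is reachable in one step from the basis and gets isolated and promoted. The only cosmetic difference is that you induct on the prefix structure of $P^{\R}$ while the paper phrases the same progress argument as a recursion on the basis size, and you state the upper bound $|\basis|\le n$ explicitly where the paper leaves it implicit.
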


\begin{corollary}
    Suppose we learn SUL $\S$ with reference $\S$. Using the \emph{rebuilding} and \emph{prioritized promotion} rules, we can add all reachable states in $\S$ to the basis.
\end{corollary}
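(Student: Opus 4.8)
The plan is to obtain this corollary as a direct specialization of Theorem~\ref{thm:completerebuilding}, instantiating the reference model $\R$ to be the SUL $\S$ itself. First I would verify that the two hypotheses of the theorem hold in this instance. The requirement that $\Obs$ contains only the root $q_0^{\Obs}$ is exactly the initial configuration of the learner, so it holds at the start of the run. For the matching hypothesis, I would note that $\matches$ is reflexive: taking $\M_1 = \M_2 = \S$ in Definition~\ref{def:matching}, the defining condition $\lambda(q_0^{\S},\sigma) = \lambda(q_0^{\S},\sigma)$ holds trivially for every $\sigma$, whence $q_0^{\S} \matches q_0^{\S}$. Thus both preconditions of Theorem~\ref{thm:completerebuilding} are met, and we may apply it verbatim.

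Next I would unwind the conclusion under this instantiation. Since $\R = \S$ we have $I^{\R} = I^{\S}$, so the restriction $\S|_{I^{\R}}$ is just $\S$ itself, and the theorem yields that after exhaustively applying the \emph{rebuilding} and \emph{prioritized promotion} rules the basis contains exactly $n$ states, where $n$ is the number of language-equivalence classes in the reachable part of $\S$. It then remains to connect ``the basis has $n$ states'' with the statement ``all reachable states of $\S$ are in the basis.'' Because the basis is by construction a set of pairwise-apart states that map, via the observation-tree homomorphism $f\colon Q^{\Obs} \to Q^{\S}$, into $Q^{\S}$, each basis state corresponds to a distinct reachable equivalence class; having $n$ of them therefore means that every such class is witnessed in the basis.

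The only genuine obstacle is the interpretation gap between ``$n$ equivalence classes occur in the basis'' and ``all reachable states occur in the basis'': these two counts coincide precisely when $\S$ is minimal, and otherwise the basis cannot contain more than $n$ pairwise-apart states. I would close this gap by appealing to the standard AAL convention that the SUL is (identified with) its minimal Mealy machine, so that each reachable state is its own language-equivalence class and $n$ equals the number of reachable states; equivalently one reads ``add all reachable states'' up to language equivalence, i.e. the basis contains a representative of every reachable behaviour. Under either reading the argument is essentially a quotation of Theorem~\ref{thm:completerebuilding}, requiring no new inductive or combinatorial work beyond the reflexivity of $\matches$ and the trivial alphabet identity $\S|_{I^{\S}} = \S$.
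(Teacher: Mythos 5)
Your proposal is correct and is essentially the paper's intended argument: the corollary is left as an immediate specialization of Theorem~\ref{thm:completerebuilding} with $\R = \S$, using exactly the reflexivity of $\matches$ and the identity $\S|_{I^{\S}} = \S$ that you spell out. Your explicit handling of the gap between ``$n$ equivalence classes'' and ``all reachable states'' (via the standard minimality convention for the SUL) is a point the paper glosses over, and is a welcome clarification rather than a deviation.
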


\section{$L^{\#}$ using State Matching} \label{sec:statematch}
In this section, we describe another way to reuse information from references, called state matching, which is independent of the state cover. 
First, we present a version of state matching using the matching relation  (\matches) from Def.~\ref{def:matching} and then we weaken this notion to approximate state matching. 

\subsection{State Matching} \label{sec:qualstatematch}
\begin{figure}[t!]
     \centering
     \begin{minipage}[b]{0.26\textwidth}
     \begin{center}
          \resizebox{.7\width}{!}{
          
               	
          \includegraphics{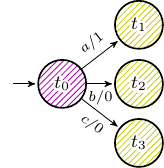}
          }
          \subcaption{Observation tree $\Obs_0$}
          \label{fig:mtree1}
     \end{center}
     \end{minipage}
     \hspace*{0.2cm}
     \begin{minipage}[b]{0.27\textwidth}
          \begin{center}
          \resizebox{.7\width}{!}{
          
               	
          \includegraphics{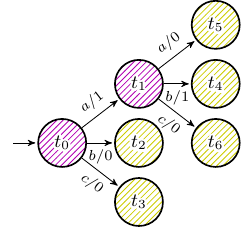}}
          \subcaption{Observation tree $\Obs_1$}
          \label{fig:mtree2}
     \end{center}
\end{minipage}
\begin{minipage}[b]{0.43\textwidth}
     \begin{center}
          \resizebox{.7\width}{!}{
          
          \includegraphics{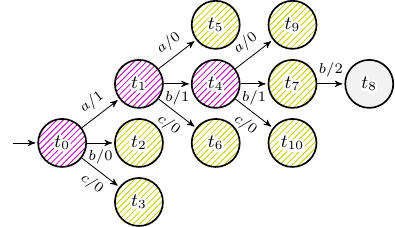}}
          \subcaption{Observation tree $\Obs_2$}
          \label{fig:mtree3}
     \end{center}
     \end{minipage}
     \caption{Observation trees generated while learning $\S$ with $\R_2$.}	
\end{figure}

With state matching, the learner maintains the matching relation $\matches$ between basis states and reference model states during learning. In the implementation, before applying a matching rule, the matching is updated based on the OQs asked since the previous match computation. We present two key rules here and an optimisation in the next subsection. 

\begin{ruleenv}{MS}{Match separation} \label{sec:ms}
This rule aims to show apartness between the frontier and a basis state using separating sequences from the reference separating family.
Let $\firstbasisstate$, $\secondbasisstate \in \basis$, $\firstfrontierstate \in \frontier$ with $\delta^{\Obs}(\firstbasisstate,i) = \firstfrontierstate$ for some $i \in I$, and $\firstrefbasisstate,\secondrefbasisstate \in Q^\R$.
Suppose that $\delta^{\R}(\firstrefbasisstate,i) = \secondrefbasisstate$, $\neg(\firstfrontierstate \apart \secondbasisstate)$, $\firstrefbasisstate \matches \firstbasisstate$ and $\secondrefbasisstate$ does not match any basis state.
In particular, there exists some separating sequence $\sigma$ for $\secondrefbasisstate \apart \secondbasisstate$.   
The \emph{match separation} rule poses OQ $\access(\firstbasisstate)i\sigma$ to either show $\firstfrontierstate \apart \secondbasisstate$ or $\firstbasisstate \notmatches \firstrefbasisstate$ and $\firstfrontierstate \apart \secondrefbasisstate$.
\end{ruleenv}

\begin{example}
    Suppose we learn $\S$ using $\R_2$ from Fig.~\ref{fig:runningexample}. After applying the \emph{extension} rule three times, we get $\Obs_0$  (Fig.~\ref{fig:mtree1}).
    State $t_0$ matches $p_3$ as their outputs coincide on sequences from alphabet $I^{\S} \cap I^{\R_2} = \{a,b\}$.
     State $p_3$ transitions to the unmatched state $p_0$ with input $a$. 
    The \emph{match separation} rule conjectures $t_1$ may match $p_0$ which implies $t_1 \apart t_0$. We use OQ $\access(t_1)a$ to test this conjecture and indeed find that $t_1$ can be added to the basis using \emph{promotion}. 
    \end{example}

\begin{lemma} \label{lem:statematch}
    We fix $\firstrefbasisstate \in Q^{\R}$, $\firstbasisstate \in \basis$, $i \in I$ and $\delta^{\Obs}(\firstbasisstate,i)=\firstfrontierstate\in \frontier$. Suppose $\delta^{\S}(\accessT(\firstbasisstate)) \matches \firstrefbasisstate$. If $\delta^{\R}(p,i) \notmatches \secondbasisstate$ for all $\secondbasisstate \in \basis$, then after applying the \emph{match separation} rule with $\firstbasisstate, \firstrefbasisstate, i$ for all $\secondbasisstate \in \basis$ with $\neg(\secondbasisstate \apart \firstfrontierstate)$, state $\firstfrontierstate$ is isolated.
\end{lemma}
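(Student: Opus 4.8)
The plan is to lift everything to the level of the SUL $\S$ via the functional simulation $f\colon Q^{\Obs}\to Q^{\S}$ underlying the observation tree, for which $f(\firstbasisstate)=\delta^{\S}(\accessT(\firstbasisstate))$ and $f(\firstfrontierstate)=\delta^{\S}(\accessT(\firstbasisstate)i)$, and for which tree outputs coincide with the corresponding SUL outputs wherever they are defined. Writing $\secondrefbasisstate := \delta^{\R}(\firstrefbasisstate,i)$, the first step is to propagate the matching hypothesis one transition forward: since $\delta^{\S}(\accessT(\firstbasisstate))\matches\firstrefbasisstate$ and the input $i$ lies in $I^{\S}\cap I^{\R}$ (both $\delta^{\Obs}(\firstbasisstate,i)$ and $\delta^{\R}(\firstrefbasisstate,i)$ are defined), I would peel off the common prefix $i$ from the matching condition. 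For every $\tau\in(I^{\S}\cap I^{\R})^{*}$ defined from both states, the outputs of $\firstbasisstate$ and $\firstrefbasisstate$ on $i\tau$ agree and share the same first component $\lambda(\cdot,i)$, so the tails agree, giving $f(\firstfrontierstate)=\delta^{\S}(\accessT(\firstbasisstate)i)\matches\secondrefbasisstate$. I expect this forward propagation of $\matches$ to be the technical heart of the argument, since it requires carefully tracking which sequences are defined from both states and restricting to the shared alphabet.

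Next, I would fix a basis state $\secondbasisstate$ with $\neg(\secondbasisstate\apart\firstfrontierstate)$---exactly the states for which the rule is applied---and show that one application of \emph{match separation} makes $\firstfrontierstate\apart\secondbasisstate$. By the hypothesis $\secondrefbasisstate\notmatches\secondbasisstate$ there is a separating sequence $\sigma$ with $\delta^{\R}(\secondrefbasisstate,\sigma)\converges$, $\delta^{\Obs}(\secondbasisstate,\sigma)\converges$, and $\lambda^{\R}(\secondrefbasisstate,\sigma)\neq\lambda^{\Obs}(\secondbasisstate,\sigma)$; since the separating family is computed over $\R|_{I^{\S}}$, this $\sigma$ ranges over $I^{\S}\cap I^{\R}$. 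The rule poses the OQ $\accessT(\firstbasisstate)i\sigma$, which records $\lambda^{\Obs}(\firstfrontierstate,\sigma)=\lambda^{\S}(f(\firstfrontierstate),\sigma)$. Using $f(\firstfrontierstate)\matches\secondrefbasisstate$ from the previous step (both states are defined on $\sigma$, $\S$ being complete), this equals $\lambda^{\R}(\secondrefbasisstate,\sigma)$, which differs from $\lambda^{\Obs}(\secondbasisstate,\sigma)$. As $\delta^{\Obs}(\firstfrontierstate,\sigma)$ and $\delta^{\Obs}(\secondbasisstate,\sigma)$ are now both defined, $\sigma\vdash\firstfrontierstate\apart\secondbasisstate$ holds in the tree.

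It then remains to rule out the second disjunct of the rule's postcondition, namely $\firstrefbasisstate\notmatches\firstbasisstate$: the matching $\delta^{\S}(\accessT(\firstbasisstate))\matches\firstrefbasisstate$ is a fact about the SUL, and the tree behaviour at $\firstbasisstate$ is subsumed by that of $f(\firstbasisstate)$, so no OQ can ever expose a disagreement between $\firstbasisstate$ and $\firstrefbasisstate$ (this also certifies the precondition $\firstrefbasisstate\matches\firstbasisstate$). Hence every application genuinely realizes $\firstfrontierstate\apart\secondbasisstate$. Finally, I would note that apartness is monotone and that the preconditions independent of $\secondbasisstate$---that $\firstrefbasisstate\matches\firstbasisstate$ and that $\secondrefbasisstate$ matches no basis state---are invariants determined by $\S$ and $\R$, so the applications for different $\secondbasisstate$ do not interfere. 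After treating all $\secondbasisstate$ with $\neg(\secondbasisstate\apart\firstfrontierstate)$, and since the remaining basis states are already apart from $\firstfrontierstate$ by assumption, $\firstfrontierstate$ is apart from every basis state, i.e.\ isolated.
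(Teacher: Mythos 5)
Your proposal is correct and follows essentially the same route as the paper's proof: both arguments push the hypothesis $\delta^{\S}(\accessT(\firstbasisstate))\matches\firstrefbasisstate$ forward along the $i$-transition to conclude that $\delta^{\S}(\accessT(\firstbasisstate)i)$ agrees with $\delta^{\R}(\firstrefbasisstate,i)$ on the separating sequence $\sigma$ witnessing $\delta^{\R}(\firstrefbasisstate,i)\apart\secondbasisstate$, and then use the definedness of $\delta^{\Obs}(\firstbasisstate,i\sigma)$ and $\delta^{\Obs}(\secondbasisstate,\sigma)$ after the output query to conclude $\firstfrontierstate\apart\secondbasisstate$ for each relevant $\secondbasisstate$. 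Your extra remarks (ruling out the second disjunct of the postcondition and the non-interference of successive applications) are implicit in the paper's treatment but do not change the argument.
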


\begin{ruleenv}{MR}{Match refinement}
    Let $\firstbasisstate \in \basis$ and $\firstrefbasisstate, \secondrefbasisstate \in Q^{\R}$. Suppose $\firstbasisstate$ matches both $\firstrefbasisstate$ and $\secondrefbasisstate$ and let $\sigma=\mathsf{sep}(\firstrefbasisstate,\secondrefbasisstate)$. 
    The \emph{match refinement} rule poses OQ $\access(\firstbasisstate)\sigma$ resulting in $\firstbasisstate$ no longer being matched to  $\firstrefbasisstate$ or $\secondrefbasisstate$.
\end{ruleenv}

\begin{example}
    Suppose we continue learning $\S$ using $\R_2$ from observation tree $\Obs_1$ (Fig.~\ref{fig:mtree1}). State $t_1$ matches both $p_0$ and $p_1$. After posing OQ $\access(t_1)bb$ where $bb \vdash p_0 \apart p_1$, $t_1$ no longer matches $p_1$.
\end{example}

If the initial state of SUL $\S$ is language equivalent to some state in the reference model, then we can discover all reachable states in $\S$ via state matching and \lsharp{} rules. The statement uses Lemma~\ref{lem:statematch} above.

\begin{theorem} \label{thm:statematching}
    Suppose we have reference $\R$ and SUL $\S$ equivalent to $\R$ but with a possibly different initial state. Using only the \emph{match refinement}, \emph{match separation}, \emph{promotion} and \emph{extension} rules, we can add $n$ states to the basis where $n$ is the number of equivalence classes (w.r.t. language equivalence) in the reachable part of $\S$.
\end{theorem}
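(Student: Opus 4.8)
The plan is to prove two matching bounds: the basis never exceeds $n$ states, and it can always be grown until it reaches $n$. The upper bound is immediate — all basis states are pairwise apart, apartness implies membership in different language-equivalence classes, and the reachable part of $\S$ has exactly $n$ such classes, so $|\basis| \le n$ throughout. The substance is the lower bound: I would show that whenever $|\basis| = k < n$, some finite sequence of the four permitted rules strictly increases $|\basis|$. Iterating this from the initial singleton basis then yields a basis of size $n$.

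For the inductive step, fix a basis $\basis$ with $k < n$ states. First I would apply the \emph{extension} rule until every basis state has an outgoing transition for each input, and apply \emph{match refinement} exhaustively. The key invariant I would establish is that exhaustive match refinement makes the maintained matching relation \emph{accurate}: each basis state matches exactly the reference states lying in its own language-equivalence class. Indeed, a basis state always matches every reference state that is language equivalent to its image in $\S$ (by Def.~\ref{def:matching}, genuine equivalence forces agreement on all commonly defined sequences, in particular on all observed ones); conversely, if a basis state spuriously matched a reference state $\secondrefbasisstate$ from a different class, it would simultaneously match a genuine representative $\thirdrefbasisstate$ of its own class with $\secondrefbasisstate \apart \thirdrefbasisstate$, so \emph{match refinement} would apply and, by executing $\mathsf{sep}(\secondrefbasisstate,\thirdrefbasisstate)$, remove the spurious match while preserving the genuine one. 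Each application eliminates a basis--reference pair, so this terminates.

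Next I would locate a boundary transition. Let $C$ be the set of classes represented in $\basis$, so $|C| = k$ and $C$ contains the class of the root. If $C$ were closed under the transitions of $\S$ it would contain every class reachable from the initial one, i.e.\ all $n$ classes, contradicting $k < n$. Hence there is a basis state $\firstbasisstate$ in some class $c \in C$ and an input $i$ with $\delta^{\S}(\access(\firstbasisstate)i)$ in a class $c' \notin C$; set $\firstfrontierstate = \delta^{\Obs}(\firstbasisstate,i) \in \frontier$. Using that $\S$ and $\R$ are equivalent up to the choice of initial state, pick any reference state $\firstrefbasisstate$ in class $c$; then $\firstbasisstate \matches \firstrefbasisstate$, and since language equivalence is preserved along the $i$-transition, $\delta^{\R}(\firstrefbasisstate,i)$ lies in class $c'$. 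Because $c' \notin C$ and the matching relation is now accurate, $\delta^{\R}(\firstrefbasisstate,i)$ matches no basis state. This is exactly the hypothesis of Lemma~\ref{lem:statematch}, so applying \emph{match separation} with $\firstbasisstate,\firstrefbasisstate,i$ against every basis state not yet apart from $\firstfrontierstate$ isolates $\firstfrontierstate$, after which the \emph{promotion} rule adds it and $|\basis| = k+1$.

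The main obstacle is the bookkeeping around the matching relation: one must argue precisely that exhaustive \emph{match refinement} renders the (observation-based) relation accurate — keeping genuine matches, discarding spurious ones — so that the precondition ``$\delta^{\R}(\firstrefbasisstate,i)$ matches no basis state'' of Lemma~\ref{lem:statematch} actually holds, since this is the one place where the gap between the maintained matching and true language equivalence could block progress. The reachability argument that a proper, transition-non-closed class set $C$ admits a boundary transition, and the finiteness of each round (extension, refinement and separation each terminate, with at most $n-1$ rounds), are then routine.
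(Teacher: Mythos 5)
Your proof is correct and follows essentially the same route as the paper's: a boundary-transition argument when $|\basis|<n$, match refinement to destroy spurious matches (using that genuine matches to language-equivalent reference states always survive), and Lemma~\ref{lem:statematch} plus promotion to grow the basis. The only difference is organizational --- you establish a global accuracy invariant for the matching up front, whereas the paper clears spurious matches locally, only for the one reference state $\delta^{\R}(p',i)$ it needs --- but the underlying mechanism is identical.
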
 

\subsection{Optimised Separation using State Matching} \label{sec:mr_ps}
In this subsection, we add an optimisation rule \emph{prioritized separation} that uses the matching to guide the identification of frontier states. First, we highlight the differences between \emph{prioritized separation} and the previous separation rules.
Both \emph{match separation} and \emph{prioritized separation} require that $\firstfrontierstate \matches \firstrefbasisstate$ for $\firstfrontierstate \in \frontier$ and $\firstrefbasisstate \in Q^{\R}$. The aim of \emph{match separation} is to isolate $\firstfrontierstate$ and requires that $\firstrefbasisstate$ does not match any basis state. Instead, the aim of \emph{prioritized separation} is to guide the identification of $\firstfrontierstate$ using the state identifier for a $\firstrefbasisstate$ matched with a basis state. The \emph{prioritized separation} rule is also different from the \emph{separation} rule (Sec.~\ref{sec:lsharp}) which randomly selects $\firstbasisstate, \secondbasisstate \in \basis$ to separate $\firstfrontierstate$ from $\firstbasisstate$ or $\secondbasisstate$. 

\begin{ruleenv}{PS}{Prioritized separation}
 The \emph{prioritized separation} rule uses the matching to find a separating sequence from the reference model that is expected to separate a frontier state from a basis state.
 Let $\secondbasisstate, \thirdbasisstate \in \basis$ and $\firstfrontierstate \in \frontier$. Suppose $\firstfrontierstate$ is not apart from $\secondbasisstate$ and $\thirdbasisstate$ and $\sigma \vdash \secondbasisstate \apart \thirdbasisstate$. 
If $\sigma$ is in $\{W_\firstrefbasisstate\}^{\R}$ of a reference model state $\firstrefbasisstate$ that matches $\firstfrontierstate$, the \emph{prioritized separation} rule poses OQ $\access(\firstfrontierstate)\sigma$ resulting in $\firstfrontierstate$ being apart from $\secondbasisstate$ or $\thirdbasisstate$\footnote{The precise specification is more involved, as the learner only keeps track of the match relation on $\basis \times Q^{\R}$.}.
\end{ruleenv}

\begin{example}
    Suppose we learn $\S$ using $\R_1$ from Fig.~\ref{fig:runningexample}. Assume we have discovered all states in $\S$ and want to identify $\delta^{\Obs}(ca,c) \in F$, which is currently not apart from any basis state. The \emph{prioritized separation} rule can only be applied with basis states $\secondbasisstate, \thirdbasisstate \in \basis$ such that $c \vdash \secondbasisstate \apart \thirdbasisstate$, as $c$ is the only sequence in the state identifier of $r_2$ which is the state that matches $\delta^{\Obs}(ca,c)$. From the sequences $\{bb, ac, c\}$ possibly used by \lsharp{}, only $c$ immediately identifies $\delta^{\Obs}(ca,c)$.
\end{example}

\subsection{Approximate State Matching} \label{sec:approxstatematch}
In this subsection, we introduce an approximate version of matching, by quantifying matching via a \emph{matching degree}.
\newcommand{\widef}[1]{\mathsf{WI}(#1)}
    Let $\Obs$ be a tree and $\R$ be a (partial) Mealy machine. 
    Let $I = I^{\Obs} \cap I^{\R}$. We define $\widef{\firstbasisstate} = \{ (w,i) \in I^* \times I \mid \delta^{\Obs}(\firstbasisstate,wi)\converges \}  $ 
    as prefix-suffix pairs that are defined from $\firstbasisstate \in Q^{\Obs}$ onwards.
    Then, we define the matching degree $\mathsf{mdeg}: Q^{\Obs} \times Q^{\R} \to \mathbb{R}$ as 
    \begin{equation*}\mathsf{mdeg}(\firstbasisstate,\firstrefbasisstate) = 
        \frac{ \left| \{ (w,i) \in \widef{\firstbasisstate} \mid \lambda^{\Obs}\bigg(\delta^{\Obs}(\firstbasisstate,w),i\bigg) = \lambda^{\R}\\\bigg(\delta^{\R}(\firstrefbasisstate,w),i\bigg) \} \right|}{ \left|\widef{\firstbasisstate} \right|}.
    \end{equation*}

\begin{example} 
    Consider $t_1$ from $\Obs_2$ (Fig.~\ref{fig:mtree3}) and $p_0$, $p_1$ from $\R_2$ (Fig.~\ref{fig:runningexample}). We derive $\widef{t_1} = \{ (\varepsilon,a), (\varepsilon,b), (b,a), (b,b), (bb,b) \}$ from $\Obs_2$ where $I = I^{\Obs_2} \cap I^{\R_2} = \{ a,b \}$. On these pairs, all the suffix outputs for $p_0$ and $t_1$ are equivalent, $\mathsf{mdeg}(t_1,p_0) = \nicefrac{5}{5} = 1$. The matching degree between $t_1$ and $p_1$ is only $\nicefrac{3}{5}$ because $\lambda^{\R_2}(p_1,bbb) = 120 \neq 112 = \lambda^{\Obs}(t_1,bbb)$ which impacts pairs $(b,b)$ and $(bb,b)$.
\end{example}

A state $\firstbasisstate$ in an observation tree $\Obs$ \emph{approximately matches} a state $\firstrefbasisstate \in Q^{\R}$, written $\firstbasisstate \approxmatches \firstrefbasisstate$, if there does not exist a $\secondrefbasisstate \in Q^{\R}$ such that $\mathsf{mdeg}(\firstbasisstate,\secondrefbasisstate) > \mathsf{mdeg}(\firstbasisstate,\firstrefbasisstate)$.

\begin{lemma} \label{lem:approx1}
	For any $\firstbasisstate \in Q^{\Obs}, \firstrefbasisstate \in Q^{\R}$: $\mathsf{mdeg}(\firstbasisstate,\firstrefbasisstate) = 1$ implies $\firstbasisstate \matches \firstrefbasisstate$.
\end{lemma}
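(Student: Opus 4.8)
The plan is to unfold the definition of the matching relation (Def.~\ref{def:matching}) and reduce the sequence-level agreement it demands to the symbol-level agreement guaranteed by a matching degree of $1$. Concretely, to establish $\firstbasisstate \matches \firstrefbasisstate$ I must show that $\lambda^{\Obs}(\firstbasisstate,\sigma) = \lambda^{\R}(\firstrefbasisstate,\sigma)$ for every $\sigma \in I^*$ (with $I = I^{\Obs} \cap I^{\R}$) such that both $\delta^{\Obs}(\firstbasisstate,\sigma)\converges$ and $\delta^{\R}(\firstrefbasisstate,\sigma)\converges$. I would fix such a $\sigma = i_1 \cdots i_n$ and argue that the two output sequences agree symbol by symbol, using the inductive extension of $\lambda$ to sequences whereby the $j$-th symbol of $\lambda^{\Obs}(\firstbasisstate,\sigma)$ is $\lambda^{\Obs}(\delta^{\Obs}(\firstbasisstate,i_1\cdots i_{j-1}),i_j)$, and similarly on the reference side.

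The key step is the decomposition of $\sigma$ by prefixes. For each position $j \le n$, set $w_j = i_1 \cdots i_{j-1}$, so that the $j$-th output symbol of the tree run is $\lambda^{\Obs}(\delta^{\Obs}(\firstbasisstate,w_j),i_j)$. Because $\delta$ is extended to sequences inductively, definedness of $\delta^{\Obs}(\firstbasisstate,\sigma)$ entails definedness of every prefix transition; in particular $\delta^{\Obs}(\firstbasisstate,w_j i_j)\converges$, so $(w_j,i_j) \in \widef{\firstbasisstate}$. The same prefix-closure argument on the reference side, now using $\delta^{\R}(\firstrefbasisstate,\sigma)\converges$, gives $\delta^{\R}(\firstrefbasisstate,w_j)\converges$, so the reference output symbol $\lambda^{\R}(\delta^{\R}(\firstrefbasisstate,w_j),i_j)$ is well-defined.

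With the pairs in hand I would invoke the hypothesis $\mathsf{mdeg}(\firstbasisstate,\firstrefbasisstate) = 1$: since the numerator of $\mathsf{mdeg}$ equals its denominator $|\widef{\firstbasisstate}|$, every pair $(w,i) \in \widef{\firstbasisstate}$ satisfies $\lambda^{\Obs}(\delta^{\Obs}(\firstbasisstate,w),i) = \lambda^{\R}(\delta^{\R}(\firstrefbasisstate,w),i)$. Applying this to each $(w_j,i_j)$ yields equality of all $n$ output symbols, hence $\lambda^{\Obs}(\firstbasisstate,\sigma) = \lambda^{\R}(\firstrefbasisstate,\sigma)$; as $\sigma$ was an arbitrary commonly-defined sequence over $I$, this is exactly $\firstbasisstate \matches \firstrefbasisstate$.

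I do not expect a genuine obstacle; the statement is essentially bookkeeping. The one point requiring care is the treatment of partiality. Note that $\widef{\firstbasisstate}$ constrains only the tree to be defined on $wi$, not the reference, so a matching degree of $1$ already forces the reference outputs to be defined and equal on all of $\widef{\firstbasisstate}$. Restricting attention to those $\sigma$ on which both machines are defined, as the matching relation does, only makes the implication easier, so the substantive thing to get right is that symbol-wise equality across all relevant prefix-suffix pairs assembles into sequence-wise equality.
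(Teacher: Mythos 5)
Your proof is correct and follows essentially the same route as the paper's: both reduce the sequence-level agreement required by $\matches$ to the prefix--suffix symbol agreements counted by $\mathsf{mdeg}$, using prefix-closure of definedness in the tree and the fact that a degree of $1$ forces equality on every pair in $\widef{\firstbasisstate}$. The only cosmetic difference is that the paper discharges definedness on the reference side by invoking completeness of $\R$ on its own alphabet, whereas you obtain it from prefix-closure of $\delta^{\R}(\firstrefbasisstate,\sigma)\converges$ (and observe that $\mathsf{mdeg}=1$ already forces the reference outputs to be defined); both are sound.
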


We define rules \emph{approximate match separation} (\textbf{AMS}), \emph{approximate match refinement} (\textbf{AMR}) and \emph{approximate prioritized separation} (\textbf{APS}) that represent the approximate matching variations of \emph{match separation}, \emph{match refinement} and \emph{prioritized separation} respectively. These rules have weaker preconditions and postconditions, see \ifthenelse{\boolean{arxivversion}}{Table~\ref{tab:approxrules} in App~\ref{app:A}}{Table~3 in App~A~of~\cite{Kruger2024AdaptiveArxiv}}.

\section{Adaptive $L^{\#}$} \label{sec:alsharp}
The rebuilding, state matching and \lsharp{} rules described in Table~\ref{tab:rules} are ordered and combined into one adaptive learning algorithm called \emph{adaptive} \lsharp{} (written \adaptivelsharp). A non-ordered listing of the rules can be found in \ifthenelse{\boolean{arxivversion}}{Algorithm~\ref{alg:extendedlsharp} in App.~\ref{app:A}}{Algorithm~1 in App.~A~of~\cite{Kruger2024AdaptiveArxiv}}. We use the abbreviations for the rules defined in previous sections.

\begin{definition}
    The \adaptivelsharp{} algorithm repeatedly applies the rules from Table~\ref{tab:rules} (see Algorithm~\ifthenelse{\boolean{arxivversion}}{\ref{alg:extendedlsharp}}{1}), with the following ordering: \textbf{Ex}, \textbf{APS}, (\textbf{S} if \textbf{APS} was not applicable), \textbf{P}, if the observation tree is adequate we try \textbf{AMR}, \textbf{AMS}, \textbf{Eq}.
	The algorithm starts by applying \textbf{R} and \textbf{PP} until they are no longer applicable; these rules are not applied anymore afterwards.
\end{definition} 

Similar to \lsharp, the correctness of \adaptivelsharp{} amounts to showing termination because the algorithm can only terminate when the teacher indicates that the SUL and hypothesis are equivalent. We prove termination of \adaptivelsharp{} by proving that each rule application lowers a ranking function. The necessary ingredients for the ranking function are derived from the post-conditions of Table~\ref{tab:rules}.

\begin{theorem} \label{thm:quantitativeComplexity}
    \adaptivelsharp learns the correct Mealy machine within $\bigO(kn^2 + kno + no^2 + n \log m)$ output queries and at most $n-1$ equivalence queries where $n$ is the number of equivalence classes for $\S$, $o$ is the number of equivalence classes for $\R$, $k$ is the number of input symbols and $m$ the length of the longest counterexample.
\end{theorem}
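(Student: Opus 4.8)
The plan is to follow the recipe announced just before the statement: first show that \adaptivelsharp{} terminates by exhibiting a ranking function that strictly decreases with every rule application, and then refine this termination argument into a quantitative bound by counting, for each rule, how often it can fire and how many output queries a single firing costs. The equivalence-query bound is handled separately and inherited essentially verbatim from \lsharp{}: every equivalence query that does not terminate the algorithm returns a counterexample whose processing promotes a fresh state into the basis, and since all basis states are pairwise apart, the basis grows from the single root to at most $n$ states, giving at most $n-1$ equivalence queries.

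First I would fix the ranking function as a lexicographically ordered tuple whose components are read off from the postconditions in Table~\ref{tab:rules}: (i) the basis deficit $n-|\basis|$, lowered only by \textbf{P} and \textbf{PP}; (ii) the number of still-undefined basis transitions, bounded by $kn$ and lowered by \textbf{Ex} and the definedness postconditions of \textbf{R}; (iii) an apartness deficit counting pairs $(\firstfrontierstate,\firstbasisstate)\in\frontier\times\basis$ with $\neg(\firstfrontierstate\apart\firstbasisstate)$, lowered by \textbf{S}, \textbf{R}, \textbf{PS}, \textbf{MS} and their approximate variants; and (iv) a matching potential counting the recorded matches on $\basis\times Q^{\R}$, lowered by \textbf{MR}, \textbf{MS} and their approximate variants. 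The main content is to check that each rule strictly decreases this tuple, i.e.\ that a rule targeting a low-priority component never silently increases a higher one. The observation from Def.~\ref{def:matching} that extending the tree below a state can only destroy exact matches (the matching condition is universally quantified over defined sequences) keeps component (iv) non-increasing for the exact rules between promotions, and since each of the $\le n$ promotions introduces at most $o$ fresh matches, component (iv) is replenished at most $no$ times.

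Next I would match each term of the bound to a family of rules. The $\bigO(kn)$ extensions and the $\bigO(kn^2)$ \lsharp-style separations (\textbf{S}, \textbf{PS}, \textbf{R}) are counted exactly as in the non-adaptive analysis: there are at most $kn$ frontier states and each must be made apart from all but one of the $\le n$ basis states. Counterexample processing contributes $\bigO(n\log m)$, using binary search on each of the at most $n-1$ counterexamples. The two new terms are charged to the matching rules: match separation (\textbf{MS}/\textbf{AMS}) is charged $\bigO(kno)$ by posing, per frontier state, at most one query per candidate reference successor, of which there are $\le o$, and invoking Lemma~\ref{lem:statematch} to guarantee isolation; match refinement (\textbf{MR}/\textbf{AMR}) is charged $\bigO(no^2)$ by bounding, per basis state, both the number of reference states it is ever matched to and the number of refinement queries $\access(\firstbasisstate)\sigma$ with $\sigma=\mathsf{sep}(\firstrefbasisstate,\secondrefbasisstate)$ needed to discard each. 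Summing the four families yields $\bigO(kn^2 + kno + no^2 + n\log m)$.

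I expect the main obstacle to be the accounting for the approximate rules. Unlike the exact relation, the approximate match $\approxmatches$ of Sec.~\ref{sec:approxstatematch} is defined by an $\arg\max$ of $\mathsf{mdeg}$, which is not monotone under tree growth: the reference state of maximal matching degree for a fixed basis state can change as new observations arrive, so component (iv) of the ranking function is no longer simply non-increasing between promotions. The crux is therefore to show that the matched reference state of any basis state can change only boundedly often, and that every \textbf{AMR}/\textbf{AMS} firing still records genuine progress. Lemma~\ref{lem:approx1}, which forces $\mathsf{mdeg}=1$ to coincide with an exact match $\matches$, is the hinge that lets me transfer the monotonicity argument from the exact to the approximate setting and thereby justify the $\bigO(no^2)$ bookkeeping without the potential oscillating indefinitely.
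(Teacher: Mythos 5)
Your high-level plan coincides with the paper's: a progress measure whose per-rule decrease (the paper phrases it dually, as an additive norm $N(\Obs)$ that every rule application strictly increases, bounded above in terms of $n$, $o$, $k$), the same attribution of $\bigO(kn^2)$ to the \lsharp-style rules, $\bigO(kno)$ and $\bigO(no^2)$ to match separation and match refinement respectively, $n\log m$ to binary-search counterexample processing, and the same argument for at most $n-1$ equivalence queries. However, there is a genuine gap exactly where you locate the crux. Your component (iv) tracks the \emph{match relation itself} on $\basis\times Q^{\R}$, and you correctly observe that under approximate matching this is an $\arg\max$ of $\mathsf{mdeg}$ that can oscillate as the tree grows. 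Your proposed repair --- invoking Lemma~\ref{lem:approx1} to ``transfer the monotonicity argument'' --- does not close this: that lemma only states that $\mathsf{mdeg}(\firstbasisstate,\firstrefbasisstate)=1$ implies $\firstbasisstate\matches\firstrefbasisstate$; it says nothing about how often the maximiser of $\mathsf{mdeg}(\firstbasisstate,\cdot)$ can change, so it cannot bound the number of times component (iv) is ``replenished'' between promotions.

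The paper avoids this problem by never putting the (approximate) match relation into the progress measure at all. Instead it uses monotone proxies: for match refinement, the set of triples $(\firstbasisstate,\firstrefbasisstate,\secondrefbasisstate)\in\basis\times Q^{\R}\times Q^{\R}$ for which $\delta^{\Obs}(\firstbasisstate,\mathsf{sep}(\firstrefbasisstate,\secondrefbasisstate))$ is already defined --- the rule's precondition requires this sequence to be undefined, so each triple fires at most once, giving the $no^2$ term regardless of how $\approxmatches$ oscillates; for match separation, the set of apartness pairs in $(\basis\cup\frontier)\times Q^{\R}$ together with the usual $\basis\times\frontier$ apartness pairs --- apartness, once witnessed in the tree, is never retracted, and the rule's postcondition guarantees one of these two sets strictly grows, giving the $(kn+1)o$ contribution. (Note also that your component (iv) is restricted to $\basis\times Q^{\R}$, whereas the outcome $\firstfrontierstate\apart\secondrefbasisstate$ of match separation involves a \emph{frontier} state, which is why the paper's summand ranges over $\basis\cup\frontier$.) To make your argument go through you would need to replace component (iv) by such monotone quantities, or else prove a bound on the oscillation of the $\arg\max$ that the paper never needs and that Lemma~\ref{lem:approx1} does not supply.
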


\section{Adaptive Learning with Multiple References}\label{sec:multiple}
Let $\X$ be a finite set of complete reference models with possibly different alphabets. Assume each reference model $\R \in \X$ has a state cover $P^{\R}$ and separating family $\{W_q\}^{\R}$. We adapt the arguments for the \adaptivelsharp{} algorithm to represent the state cover and separating family for the set of reference models.

\myparagraph{State cover.} We initialize the \adaptivelsharp{} algorithm with the union of the state cover of each reference model, i.e., $\cup_{\R \in \X} P^{\R}$.
To reduce the size of $P^{\X}$, the state cover for each reference model is computed using a fixed ordering on inputs.

\myparagraph{Separating family.} We combine the separating families for multiple reference models using a stronger notion of apartness, called \emph{total apartness}, which also separates states based on whether inputs are defined. When changing the alphabet of a reference model to the alphabet of the SUL, as is done when computing the separating family, the reference model may become partial.
If states from different reference models behave the same on their common alphabet but their alphabets contain different inputs from the SUL, we still want to distinguish the reference models based on which inputs they enable.

\begin{definition}
    Let $\M_1, \M_2$ be partial Mealy machines and $p \in Q^{\M_1}, q \in Q^{\M_2}$. We say $p$ and $q$ are \emph{total apart}, written $p \totalapart q$, if $p \apart q$ or there exists $\sigma \in (I^{\M_1} \cap I^{\M_2})^*$ such that either $\delta^{\M_1}(p,w){\uparrow}$ or  $\delta^{\M_2}(q,w){\uparrow}$ but not both.
\end{definition}

We use \emph{total apartness} to define a \emph{total state identifier} and a \emph{total separating family}. This definition is similar to Def.~\ref{def:sepfam} but $\apart$ is be replaced by $\totalapart$. We combine the multiple reference models into a single one with an arbitrary initial state, compute the \emph{total separating family} and use this to initialize \adaptivelsharp{}.

\begin{example}
    A total separating family for $\X = \{ \R_1, \R_2 \}$ and alphabet $I^{\S}$ is
    $W_{p_0}=W_{p_1}=\{c,b,bb\}, W_{p_2}=W_{p_3}=\{c,b\}, W_{r_0}=W_{r_1}=\{c,ac\}, W_{r_2}=\{c\}$.
\end{example}

We add an optimisation to \adaptivelsharp{} that only chooses $\firstrefbasisstate$ and $\secondrefbasisstate$ from the same reference model during rebuilding. Theorem~\ref{thm:quantitativeComplexity} can be generalized to this setting where $o$ represents the number of equivalence classes across the reference models.

\section{Experimental Evaluation} \label{sec:experiments}
In this section, we empirically investigate the performance of our implementation of \adaptivelsharp. 
The source code and all benchmarks are available online\footnote{\url{https://gitlab.science.ru.nl/lkruger/adaptive-lsharp-learnlib/}}\cite{KrugerZenodo}. 
We present four experiments to answer the following research questions:
\begin{description}[topsep=2pt, partopsep=0pt]
\item[R1] What is the performance of \aaal{} algorithms, when \dots
\begin{description}
    \item[Exp 1] \dots learning models from a similar reference model?
    \item[Exp 2] \dots applied to benchmarks from the literature?
\end{description}
\item[R2] Can multiple references help \adaptivelsharp, when learning \dots
\begin{description}[topsep=0pt]
    \item[Exp 3] \dots a model from similar reference models?
    \item[Exp 4] \dots a protocol implementation from reference implementations?
\end{description}
\end{description}

\mysubsubsection{Setup.}
We implement \adaptivelsharp on top of the \lsharp{} LearnLib implementation\footnote{Obtained from \url{https://github.com/UCL-PPLV/learnlib.git}~\cite{DBLP:conf/birthday/FerreiraHS22}}.
We invoke conformance testing for the EQs, using the \texttt{random Wp method} from LearnLib with \texttt{minimal size}${=}3$ and \texttt{random length}${=}3$~\footnote{These hyperparameters are discussed in the LearnLib documentation, \url{learnlib.de}.}. 
We run all experiments with 30 seeds. 
We measure the performance of the algorithms based on the number of inputs sent to the SUL during both OQs and EQs: \emph{Fewer is better}.

\renewcommand{\arraystretch}{1.2}
\begin{table}[t]
\centering
\caption{Summed inputs in millions for learning the mutated models with the original models.}
\resizebox{.99\textwidth}{!}{
\begin{tabular}{lp{1.2cm}p{1.2cm}p{1.2cm}p{1.2cm}p{1.2cm}p{1.2cm}p{1.2cm}p{1.2cm}p{1.2cm}p{1.2cm}p{1.2cm}p{1.2cm}p{1.2cm}p{1.2cm}}
\toprule
Algorithm & $\textit{mut}_1$ & $\textit{mut}_2$ & $\textit{mut}_3$ & $\textit{mut}_4$ & $\textit{mut}_5$ & $\textit{mut}_6$ & $\textit{mut}_7$ & $\textit{mut}_8$ & $\textit{mut}_9$ & $\textit{mut}_{10}$ & $\textit{mut}_{11}$ & $\textit{mut}_{12}$ & $\textit{mut}_{13}$ & $\textit{mut}_{14}$ \\
\midrule
$L^*$ & 115.2 & 24.2 & 49.4 & 69.7 & 78.7 & 60.5 & 50.7 & 132.9 & 294.2 & 36.8 & 52.5 & 38.0 & 18.3 & 301.9 \\
\textit{KV} & 123.5 & 17.8 & 49.6 & 60.1 & 68.9 & 58.7 & 44.9 & 103.7 & 244.3 & 25.5 & 28.7 & 28.0 & 7.5 & 253.6 \\
\lsharp & 101.7 & 14.3 & 50.0 & 49.2 & 73.0 & 58.7 & 39.9 & 100.1 & 313.9 & 25.4 & 38.9 & 28.0 & 8.0 & 234.9 \\
\midrule \pdlstar~\cite{DBLP:conf/ifm/DamascenoMS19} & 132.7 & 19.8 & 22.5 & 25.0 & 32.7 & 26.0 & - & 178.0 & 375.0 & 24.7 & 25.4 & 44.1 & 8.9 & 256.3 \\
\IKV~\cite{DBLP:conf/birthday/FerreiraHS22} & 114.8 & 18.6 & 1.6 & 2.4 & 0.9 & {\textcolor{teal}{ 0.8}} & - & {\textcolor{teal}{ 56.6}} & 373.9 & 11.0 & 2.1 & 1.1 & 5.8 & {\textcolor{teal}{ 7.0}} \\\midrule
\adaptivelsharp{} (\textcolor{red!60!black}{new!}) & 1.2 & {\textcolor{teal}{ 0.5}} & {\textcolor{teal}{ 1.5}} & {\textcolor{teal}{ 0.8}} & {\textcolor{teal}{ 0.8}} & 0.8 & {\textcolor{teal}{ 0.6}} & 68.1 & 141.1 & {\textcolor{teal}{ 1.4}} & {\textcolor{teal}{ 1.3}} & {\textcolor{teal}{ 0.8}} & {\textcolor{teal}{ 1.9}} & 7.2 \\\midrule
\lsharprebuild{} (\textcolor{red!60!black}{new!}) & 101.7 & 12.3 & 1.7 & 9.4 & 1.1 & 7.9 & 0.7 & 68.2 & 306.1 & 12.6 & 2.8 & 1.7 & 6.4 & 7.9 \\
\lsharpmatch{} (\textcolor{red!60!black}{new!}) & {\textcolor{teal}{ 1.2}} & 0.5 & 3.5 & 5.2 & 9.1 & 7.2 & 0.7 & 63.0 & {\textcolor{teal}{ 36.8}} & 8.7 & 9.8 & 10.8 & 5.7 & 7.1 \\
\lsharpapproxmatch{} (\textcolor{red!60!black}{new!}) & 1.2 & 0.5 & 1.7 & 2.7 & 2.0 & 2.1 & 0.7 & 70.6 & 186.5 & 6.0 & 6.1 & 1.7 & 4.8 & 7.4 \\
\lsharprebuildmatch{} (\textcolor{red!60!black}{new!}) & 1.2 & {\textcolor{teal}{ 0.5}} & 1.5 & 0.8 & 1.0 & 0.8 & {\textcolor{teal}{ 0.6}} & 69.3 & 38.7 & 3.1 & 2.0 & 1.0 & 4.5 & 7.3 \\
\bottomrule
\end{tabular}
}
\label{tab:my_exp1}
\end{table}

\mysubsubsection{Experiment 1.} 
We evaluate the performance of \adaptivelsharp{} against non-adaptive and adaptive algorithms from the literature, in particular  $L^*$~\cite{DBLP:journals/iandc/Angluin87}, \textit{KV}~\cite{DBLP:books/daglib/0041035}, and \lsharp~\cite{DBLP:conf/tacas/VaandragerGRW22} as well as \pdlstar \cite{DBLP:conf/ifm/DamascenoMS19} and (a Mealy machine adaptation of) \IKV{} \cite{DBLP:conf/birthday/FerreiraHS22}.
As part of an ablation study, we compare \adaptivelsharp with simpler variations which we refer to as \scalebox{0.88}{\lsharprebuild}, \scalebox{0.88}{\lsharpmatch}, \scalebox{0.88}{\lsharpapproxmatch}, \scalebox{0.88}{\lsharprebuildmatch}. The subscripts indicate which rules are added:\\ $R$: \textbf{R} + \textbf{PP}, $\matches$: \textbf{MS} + \textbf{MR} + \textbf{PS}, $\approxmatches$: \textbf{AMS} + \textbf{AMR} + \textbf{APS}.

We learn six models from the AutomataWiki benchmarks~\cite{DBLP:conf/birthday/NeiderSVK97} also used in~\cite{DBLP:conf/tacas/VaandragerGRW22}. We limit ourselves to six models because we mutate every model in 14 different ways (and for 30 seeds). The chosen models represent different types of protocols with varying number of states. 
We learn the mutated models using the original models, referred to as $\S$, as a reference. The mutations may add states, divert transitions, remove inputs, perform multiple mutations, or compose the model with a mutated version of the model. We provide details on the used models and mutations in \ifthenelse{\boolean{arxivversion}}{App.~\ref{app:E}.}{App.~E~of~\cite{Kruger2024AdaptiveArxiv}.}

\myparagraph{Results.}
Table~\ref{tab:my_exp1} shows for an algorithm (rows) and a mutation (columns) the total number of inputs ($\cdot 10^6$) necessary to learn all models, summed over all seeds\footnote{\pdlstar{} and \IKV{} do not support removing input inputs, relevant for mutation M7.}.
The \textcolor{teal}{highlighted} values indicate the best performing algorithm. 
We provide detailed pairwise comparisons between algorithms in \ifthenelse{\boolean{arxivversion}}{App.~\ref{app:E}.}{App.~E~of~\cite{Kruger2024AdaptiveArxiv}.}

\myparagraph{Discussion.}
First, we observe that \adaptivelsharp{} always outperforms non-adaptive learning algorithms, as is expected. 
By combining state matching and rebuilding, \adaptivelsharp{} mostly outperforms algorithms from the literature, with \IKV{} being competitive on some types of mutations. In $\textit{mut}_{9}(\S)$ we append $\S$ to $\textit{mut}_{13}(\S)$, \scalebox{0.88}{\lsharpmatch{}} outperforms \scalebox{0.88}{\lsharpapproxmatch{}} because \scalebox{0.88}{\lsharpapproxmatch{}} incorrectly matches $\textit{mut}_{13}(\S)$ states with states in $\S$, making it harder to learn the $\S$ fragment. 
\ifthenelse{\boolean{arxivversion}}{In the pairwise comparisons in App.~\ref{app:E}, we see that \adaptivelsharp{} performs much better on models \textit{GnuTLS},  \textit{OpenSSH} compared to other adaptive approaches. We conjecture that this effect occurs, as these models are hard to learn in general (high number of total inputs) and thus the potential benefit of \adaptivelsharp{} is higher. }{}

\mysubsubsection{Experiment 2.}
We evaluate \lsharp, \pdlstar, \IKV{} and \adaptivelsharp{} on benchmarks that contain reference models. \textit{Adaptive-OpenSSL}~\cite{DBLP:conf/nordsec/Ruiter16}, used in~\cite{DBLP:conf/ifm/DamascenoMS19}, contains models learned from different git development branches for the OpenSSL server side. \textit{Adaptive-Philips}~\cite{DBLP:conf/ifm/SchutsHV16} contains models representing some legacy code which evolved over time due to bug fixes and allowing more inputs.

\begin{figure}[t]%
    \begin{subfigure}[htb]{0.58\textwidth}
        \resizebox{.99\textwidth}{!}{
        \includegraphics{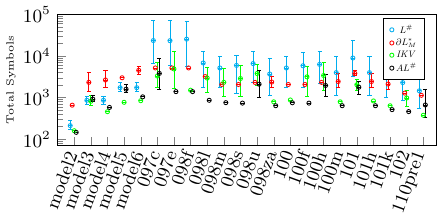}}
        \vspace*{-0.5cm}
        \caption{Averaged inputs for learning \textit{Adaptive-Philips} (starting with \textit{m}) and \textit{Adaptive-OpenSSL}. }
        \label{fig:res2}
    \end{subfigure}
    \hspace{0.02\textwidth}
    \begin{subfigure}[htb]{0.34\textwidth}
        \centering
        \subfloat[][Summed inputs in millions for learning some $\S$.]{\resizebox{.9\textwidth}{!}{
\begin{tabular}{p{1.5cm}p{1.65cm}p{1.65cm}p{1.65cm}p{1.65cm}}
\toprule
& & & \{$\textit{mut}_{10}(\S)$, & \{$\textit{mut}_{12}(\S)$, \\ 
& & & ~$\textit{mut}_{10}(\S)$, & ~$\textit{mut}_{12}(\S)$, \\ 
SUL & \{$\textit{mut}_{10}(\S)$\} & \{$\textit{mut}_{12}(\S)$\} & ~$\textit{mut}_{10}(\S)$\} & ~$\textit{mut}_{12}(\S)$\} \\ 
\midrule
$\S$ & 33.1 & 52.7 & ~{\textcolor{teal}{17.4}} & ~22.9\\ 
\bottomrule\\ 
\end{tabular}
}\label{tab:exp5a}}%
        \newline
        \subfloat[][Summed inputs in millions for learning some mutated $\S$.]{\resizebox{.9\textwidth}{!}{
\begin{tabular}{p{1.5cm}p{2.2cm}p{2.2cm}p{2.2cm}}
\toprule
SUL & \{$\S$\} & \{$\textit{mut}_{13}(\S)$\} & \{$\S$,$\textit{mut}_{13}(\S)$\} \\
\midrule
$\textit{mut}_{8}(\S)$ & 68.1 & 96.3 & {\textcolor{teal}{25.7}}\\
$\textit{mut}_{9}(\S)$ & 141.1 & 263.0 & {\textcolor{teal}{35.3}}\\
$\textit{mut}_{14}(\S)$ & 7.2 & 212.1 & {\textcolor{teal}{3.2}} \\
\bottomrule\\
\end{tabular}
}
    \label{tab:exp5b}}
    \end{subfigure}
    \caption{Results Experiments 2 and 3.}
  \end{figure}

\myparagraph{Results.}
Fig.~\ref{fig:res2} shows the mean total number of inputs required for learning a model from the associated reference model, depicting the $5^{\text{th}}-95^{\text{th}}$ percentile (line) and average (mark) over the seeds.

\myparagraph{Discussion.}
We observe that \lsharp{} and \pdlstar{} perform worse than \adaptivelsharp{}. \adaptivelsharp{} often outperforms \IKV{} by a factor 2-4, despite that these models are relatively small and thus easy to learn. 

\mysubsubsection{Experiment 3.}
We evaluate \adaptivelsharp{} with one or multiple references on the models used in Experiment 1. We either (1) learn $\S$ using several mutations of $\S$ or (2) learn a mutation that represents a combination of the $\S$ and $\textit{mut}_{13}(\S)$. 

\myparagraph{Results.}
Tables~\ref{tab:exp5a}, \ref{tab:exp5b} show for every type of SUL (rows) and every set of references (columns) the total number of inputs ($\cdot 10^6$) necessary to learn all models, summed over all seeds. \textcolor{teal}{Highlighted} values indicate the best performing set of references. Column $\{\S\}$ in Table~\ref{tab:exp5b} corresponds to values in row \adaptivelsharp{} of Table~\ref{tab:my_exp1}; they are added in Table~\ref{tab:exp5b} for clarity.

\myparagraph{Discussion.}
We observe that using multiple references outperforms using one reference, as is expected. We hypothesize that learning with reference $\textit{mut}_{13}(\S)$ instead of $\S$ often leads to an increase in total inputs because $\textit{mut}_{13}(\S)$ is less complex due to the random transitions. Therefore, discovering states belonging to the $\S$ fragment in $\textit{mut}_{8}(\S)$, $\textit{mut}_{9}(\S)$ and $\textit{mut}_{14}(\S)$ becomes more difficult.

\mysubsubsection{Experiment 4.}
We evaluate the performance of \adaptivelsharp{} with one or multiple references on learning \textit{DTLS} and \textit{TCP} models from AutomataWiki\footnote{References represent related models instead of previous models as in Experiment 2.}. We consider seven \textit{DTLS} implementations selected to have the same key exchange algorithm and certification requirement. We consider three \textit{TCP} client implementations. 

\begin{figure}[t]
    \centering
    \begin{subfigure}[b]{0.693\textwidth}
        \includegraphics{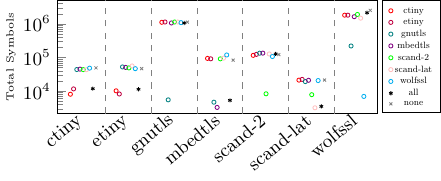}
    \end{subfigure}
    \begin{subfigure}[b]{0.297\textwidth}
        \includegraphics{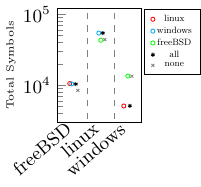}
    \end{subfigure}
    \vspace{-0.7cm}
    \caption{Averaged inputs for learning $\S$ with multiple references.}
    \label{fig:res4}
\end{figure}

\myparagraph{Results.}
Fig.~\ref{fig:res4} shows the required inputs for learning $\S$ (x-axis) with only the reference model indicated by the colored data point, averaged over the seeds. For each \textit{DTLS} model, we included learning $\S$ with the $\S$ as a reference model. The $*$ mark indicates using all models except the $\S$ as references, the $\times$ mark indicates using no references, e.g., non-adaptive \lsharp{}. 

\myparagraph{Discussion.}
We observe that using all references except $\S$ usually performs as well as the best performing reference model that is distinct from $\S$. In \textit{scand-lat}, using a set of references outperforms single reference models, almost matching the performance of learning $\S$ with $\S$ as a reference. 

\section{Conclusion} \label{sec:conclusion}
We introduced the \emph{adaptive \lsharp} algorithm (\adaptivelsharp), a new algorithm for adaptive active automata learning that allows
to flexibly use domain knowledge in the form of (preferably similar) reference models and thereby aims to reduce the sample complexity for learning new models. 
Experiments show that the algorithm can lead to significant improvements over the state-of-the-art (Sec.~\ref{sec:experiments}). 

\subsubsection{Future work.} Approximate state matching is sometimes too eager and may mislead the learner, as happens for $\textit{mut}_9$ in Experiment~1 (Sec.~\ref{sec:experiments}). This may be addressed by only applying matching rules when the matching degree is above some threshold. It is currently unclear how to determine an appropriate threshold.

Further, adaptive methods typically perform well when the reference model and SUL are similar~\cite{DBLP:conf/fmics/HuistraMP18}. We would like to dynamically determine which (parts of) reference models are similar, and incorporate this in the rebuilding rule. 

Adaptive AAL allows the re-use of information in the form of a Mealy machine. 
Other sources of information that can be re-used in AAL are, for instance, system logs, realised by combining active and passive learning~\cite{DBLP:conf/wcre/YangASLHCS19, DBLP:journals/corr/abs-2209-14031}. 
An interesting direction of research is the development of a more general methodology
that allows the re-use of various forms of previous knowledge.
 
\newpage
\bibliographystyle{plainurl}
\bibliography{references.bib}

\ifthenelse{\boolean{arxivversion}}{
\newpage
\appendix
\section{Additional Definition, Figure, Table and Algorithm} \label{app:A}
We define how to fold back an observation tree to a complete Mealy machine.

\begin{definition} \label{def:foldedback}
  Let $\Obs$ be an observation tree for SUL $\S$. If each basis state has a transition for every input and each frontier state is identified with a basis state, then $\Obs$ is \emph{folded back} to complete Mealy machine $\H = (\basis, I, O, q_0^{\Obs}, \delta^{\H}, \lambda^{\Obs})$ where for all $q \in \basis$ and $i \in I$:
  \[ \delta^{\H}(q,i) = \begin{cases}
    \delta^{\Obs}(q,i) & \text{ if } \delta^{\Obs}(q,i) \in \basis \\
    q' & \text{ if } \delta^{\Obs}(q,i) = r \in \frontier \text{ and } r \text{ is identified with } q' \in \basis \\
  \end{cases}
  \] 
\end{definition}

\noindent In Fig.~\ref{fig:rules}, we show the scenarios in the observation tree and the reference model necessary to apply the \emph{rebuilding}, \emph{match refinement}, \emph{match separation} and \emph{prioritized separation} rules. 
\begin{figure}[ht!]

  \begin{subfigure}{.475\textwidth}
      \resizebox{.99\textwidth}{!}{
          \includegraphics{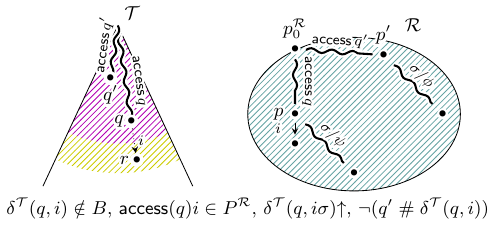}}
    \caption{Rebuilding}
    \label{fig:rebuild}
  \end{subfigure}\hfill 
  \begin{subfigure}{.475\textwidth}
      \resizebox{.99\textwidth}{!}{
        \includegraphics{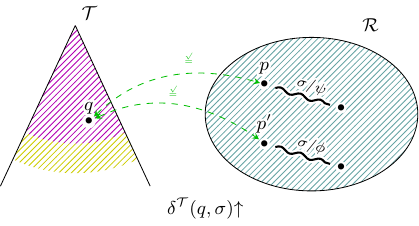}}
        \caption{Match refinement}
        \label{fig:matchrefinement}
  \end{subfigure}
  
  \medskip 
  \begin{subfigure}{.475\textwidth}
      \resizebox{.99\textwidth}{!}{
        \includegraphics{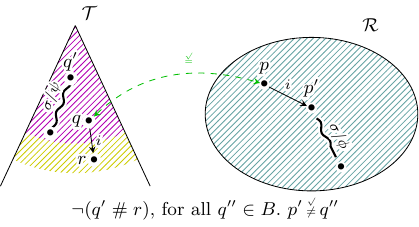}}
        \caption{Match separation}
        \label{fig:matchseparation}
  \end{subfigure}\hfill 
  \begin{subfigure}{.475\textwidth}
      \resizebox{.99\textwidth}{!}{
      \includegraphics{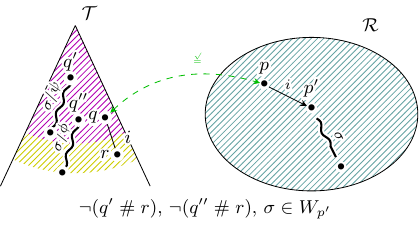}}
    \caption{Prioritized separation}
    \label{fig:prioiritizedseparation}
  \end{subfigure}
  
  \caption{Scenario in the observation tree (left) and reference model (right) required to apply the rule, additional preconditions are written below the scenario. The dashed green lines indicate that two states are matched. }
  \label{fig:rules}
  \end{figure}

\noindent In Algorithm~\ref{alg:extendedlsharp}, we list the rules used for \adaptivelsharp{} in a non-deterministic ordering.
\begin{algorithm}[!ht]
    \begin{algorithmic}
      \Procedure{ExtendedLSharp}{$P^{\R}, \{W_q\}^{\R}$}
      \State $\Obs \xleftarrow{} \{q_0\}$ s.t. $\delta^{\Obs}(\varepsilon)=q_0$
      \State $\basis \xleftarrow{} \{q_0\}$

      \DoIf{$\delta^{\Obs}(\firstbasisstate,i) \notin \basis$ and $ \neg(\secondbasisstate \apart \delta^{\Obs}(\firstbasisstate,i)) $ for $\firstbasisstate, \secondbasisstate \in \basis$, $i \in I$ s.t.\\ \qquad \qquad $\mathsf{access}(\firstbasisstate)i, \mathsf{access}(\secondbasisstate) \in P^{\R}$ and ($\delta^{\Obs}(\firstbasisstate,i\sigma){\uparrow}$ or $\delta^{\Obs}(\secondbasisstate,\sigma){\uparrow}$)\\ \qquad \qquad 
      $\sigma = \mathsf{sep}\bm{(}\delta^{\R}(\accessT(\firstbasisstate)i),\delta^{\R}(\accessT(\secondbasisstate))\bm{)}$
      }
        \Comment{\emph{rebuilding}}
                \State \textsc{OutputQuery}($\mathsf{access}(\firstbasisstate)i\sigma$)
                \State \textsc{OutputQuery}($\mathsf{access}(\secondbasisstate)\sigma$) 
    \ElsDoIf{$\firstfrontierstate \in \frontier$ is isolated and $\mathsf{access}(\firstfrontierstate) \in P^{\R}$}
    \Comment{\emph{prefix promotion}}
    \State $\basis \gets \basis \cup \{ \firstfrontierstate \}$
    \ElsDoIf{$\firstfrontierstate \in \frontier$ is isolated}
      \Comment{\emph{promotion}}
      \State $\basis \gets \basis \cup \{ \firstfrontierstate \}$
		\ElsDoIf{$\delta^{\Obs}(\firstbasisstate,i){\uparrow}$, for some $\firstbasisstate \in \basis, i \in I$}
      \Comment{\emph{extension}}
      \State $\OutputQuery(\mathsf{access}(\firstbasisstate)i)$
      \ElsDoIf{$\neg(\firstfrontierstate \apart \firstbasisstate)$, $\neg(\firstfrontierstate \apart \secondbasisstate)$, for some $\firstfrontierstate \in \frontier$, $\firstbasisstate, \secondbasisstate\in
      \basis$, $\firstbasisstate\neq \secondbasisstate$}
      \Comment{\emph{separation}}
      \State $\sigma \gets \text{witness of \(\firstbasisstate \apart \secondbasisstate\)}$
      \State $\OutputQuery(\mathsf{access}(\firstfrontierstate) \sigma)$

      \ElsDoIf{$\firstrefbasisstate \approxmatches \firstbasisstate$ for some $\firstbasisstate \in \basis$ and there is some $i \in I$ s.t.\\ \qquad \qquad $\delta^{\Obs}(\firstbasisstate,i)=\firstfrontierstate \in \frontier$, $\neg(\firstfrontierstate \apart \secondbasisstate)$ for some $\secondbasisstate \in \basis$ and\\ \qquad \qquad $\neg(\firstfrontierstate \apart \secondrefbasisstate)$ for $\delta^{\R}(\firstrefbasisstate,i)=\secondrefbasisstate$ and\\ \qquad \qquad $\secondrefbasisstate \notapproxmatches \thirdbasisstate$ for any $\thirdbasisstate \in \basis$} 
      \Comment{\emph{match separation}}
      \State $\sigma \gets \mbox{witness for } \secondbasisstate \apart \secondrefbasisstate$
      \State $\textsc{OutputQuery}(\mathsf{access}(\firstbasisstate) i \sigma)$
          
    \ElsDoIf{$\firstrefbasisstate \approxmatches \firstbasisstate$ and $\secondrefbasisstate \approxmatches \firstbasisstate$  for some $\firstbasisstate \in \basis$ and $\firstrefbasisstate, \secondrefbasisstate \in Q^{\R}$ with\\ \qquad \qquad $\sigma = \mathsf{sep}(\firstrefbasisstate,\secondrefbasisstate)$ and $\delta^{\Obs}(\firstbasisstate,\sigma){\uparrow}$} \Comment{\emph{match refinement}}
    \State $\textsc{OutputQuery}(\mathsf{access}(\firstbasisstate)\sigma)$

    \ElsDoIf{$\neg(\firstfrontierstate \apart \secondbasisstate)$, $\neg(\firstfrontierstate \apart \thirdbasisstate)$, for some $\firstfrontierstate \in \frontier$, $\firstbasisstate, \secondbasisstate, \thirdbasisstate\in
    \basis$ s.t.\\
    \qquad \qquad $\delta^{\Obs}(\firstbasisstate,i)=\firstfrontierstate$ for some $i \in I$, $\sigma \vdash \firstbasisstate \apart \secondbasisstate$\\ \qquad \qquad $\sigma \in \cup_{\firstrefbasisstate \approxmatches \firstbasisstate} W_{\delta^{\R}(\firstrefbasisstate,i)}$} 
    \Comment{\emph{prioritized separation}}
    \State $\OutputQuery(\mathsf{access}(\firstfrontierstate)\sigma)$

		\ElsDoIf{All $\firstfrontierstate \in \frontier$ are identified and $\delta^{\Obs}(q,i){\converges}$ for all $q \in \basis, i \in I$}
      \Comment{\emph{equivalence}}
		  \State $\Hyp \gets \BuildHypothesis$ 
		  \State $(b, \sigma) \gets \CheckConsistency(\Hyp)$		
		  \If{$b = \code{yes}$}
        \State $(b, \rho) \gets \EquivalenceQuery(\Hyp)$
        \StateIf{$b = \code{yes}$} \Return $\Hyp$
        \StateElse
          $\sigma \gets$ shortest prefix of $\rho$ such that
          $\delta^{\Hyp}(q_0^\Hyp, \sigma) \apart \delta^{\Obs}(q_0^\Obs, \sigma)$
        (in $\Obs$)
		  \EndIf
		  \State $\ProcCounterEx(\Hyp, \sigma)$
    \EndDoIf
        \EndProcedure
    \end{algorithmic}
    \caption{Extended $L^{\#}$ algorithm}
    \label{alg:extendedlsharp}
\end{algorithm}

\clearpage
\noindent Table~\ref{tab:approxrules} shows the pre- and postconditions of the approximate matching variations of the state matching rules.

\begin{table}[H] 
  \centering
  \caption{Approximate state matching rules with parameters, preconditions and postconditions. }
  \begin{tabular}{p{0.25cm}|p{1.8cm}||p{1.8cm}|l|p{2.2cm}}
      & Rule & Parameters & Precondition & Postcondition \\ \hline \hline
      \parbox[t]{2mm}{\multirow{8}{*}{\rotatebox[origin=c]{90}{Sec~\ref{sec:approxstatematch}}}} & \multirow{3}{=}{\emph{approximate} \emph{match} \emph{separation}}\Tstrut & $\firstbasisstate,\secondbasisstate \in \basis$, & $ \delta^{\Obs}(\firstbasisstate,i) = \firstfrontierstate \in \frontier, \neg(\firstfrontierstate \apart \secondbasisstate),$ & $\firstfrontierstate \apart \secondbasisstate~\lor~$\\
      & & $\firstrefbasisstate \in Q^{\R},$ & $\neg(\firstfrontierstate \apart \secondrefbasisstate),\firstrefbasisstate \approxmatches \firstbasisstate, \delta^{\R}(\firstrefbasisstate,i)=\secondrefbasisstate$, & $\firstfrontierstate \apart \secondrefbasisstate$\\
      & & $i \in I$ &  $\neg(\exists \thirdbasisstate\in \basis$ s.t. $\secondrefbasisstate \approxmatches \thirdbasisstate)$ & \\ \cline{2-5}
      & \multirow{3}{=}{\emph{approximate} \emph{match} \emph{refinement}}\Tstrut & $\firstbasisstate \in \basis,$ & $\firstrefbasisstate \approxmatches \firstbasisstate, \secondrefbasisstate \approxmatches \firstbasisstate,$ & $\delta^{\Obs}(\firstbasisstate,\sigma){\converges},$ \\ 
      &  & $\firstrefbasisstate, \secondrefbasisstate \in Q^{\R}$ & $\sigma = \mathsf{sep}(\firstrefbasisstate,\secondrefbasisstate), \delta^{\Obs}(\firstbasisstate,\sigma){\uparrow}$ &\\
      & & & & \\ \cline{2-5}
      & \multirow{3}{=}{\emph{approximate} \emph{prioritized} \emph{separation}}\Tstrut & $\firstfrontierstate \in \frontier,$ & $\neg(\firstfrontierstate \apart \secondbasisstate), \neg(\firstfrontierstate \apart \thirdbasisstate)$, & $\firstfrontierstate \apart \thirdbasisstate \lor \firstfrontierstate \apart \secondbasisstate$ \\
      & & $\secondbasisstate, \thirdbasisstate \in \basis$ & $\exists i \in I$ s.t. $ \delta^{\Obs}(\firstbasisstate,i) = \firstfrontierstate,$ &  \\ 
      & & & $\sigma \vdash \secondbasisstate \apart \thirdbasisstate, \sigma \in \cup_{\firstrefbasisstate \approxmatches \firstbasisstate} W_{\delta^{\R}(\firstrefbasisstate,i)}$ & \\ 
      \hline
  \end{tabular}
  \label{tab:approxrules}
\end{table}

\section{Proofs of Section~\ref{sec:lsharprebuild}}
\subsection*{Proof of Lemma~\ref{lem:rebuild}}
\begin{proof}
  Let $\firstbasisstate \in \basis$, $i \in I$ and $\sigma \in I^*$. Suppose 
  \begin{enumerate}[label=(\arabic*)]
    \item $\delta^{\Obs}(\firstbasisstate,i) \notin \basis$,
    \item $\accessT(\firstbasisstate)i \in P^{\R}$,
    \item For all $\secondbasisstate \in \basis$, $\accessT(\secondbasisstate) \in P^{\R}$,
    \item For all $\secondbasisstate \in \basis$, $\sigma \vdash \delta^{\S}(\accessT(\firstbasisstate)i) \apart \delta^{\S}(\accessT(\secondbasisstate))$, where we write $\sigma = \mathsf{sep}\bm{(}\delta^{\R}(\accessT(\firstbasisstate)i),\delta^{\R}(\accessT(\secondbasisstate))\bm{)}$ for conciseness. 
  \end{enumerate}
  We prove for all $\secondbasisstate \in \basis$, $\delta^{\Obs}(\firstbasisstate,i) \apart \secondbasisstate$ holds from either assumptions (1)-(4) or because the assumptions validate preconditions for the rebuilding rule and after applying the rule we find the required result. 
  Suppose we have a specific  $\secondbasisstate \in \basis$. If $\delta^{\Obs}(\firstbasisstate,i) \apart \secondbasisstate$ holds, we are done. From now, assume (5) $\neg(\delta^{\Obs}(\firstbasisstate,i) \apart \secondbasisstate)$.

    From (4) we derive that (6) $\delta^{\Obs}(\firstbasisstate,i\sigma){\uparrow}$ or $\delta^{\Obs}(\secondbasisstate,\sigma){\uparrow}$. Otherwise, $\delta^{\Obs}(\firstbasisstate,i\sigma){\converges}$ and $\delta^{\Obs}(\secondbasisstate,\sigma){\converges}$ which implies $\delta^{\Obs}(\firstbasisstate,i) \apart \secondbasisstate$ under assumption $\sigma \vdash \delta^{\S}(\accessT(\firstbasisstate)i) \apart \delta^{\S}(\accessT(\secondbasisstate))$. However, $\delta^{\Obs}(\firstbasisstate,i) \apart \secondbasisstate$ contradicts (5). 
    
    From assumptions (1)-(3),(5),(6), we know \emph{rebuilding} can be applied which leads to OQ $\accessT(\firstbasisstate)i\sigma$ and $\accessT(\secondbasisstate)\sigma$. After the OQs, we know $\delta^{\Obs}(\firstbasisstate,i\sigma){\converges}$ and $\delta^{\Obs}(\secondbasisstate,\sigma){\converges}$, combining this with $\sigma \vdash \delta^{\S}(\accessT(\firstbasisstate)i) \apart \delta^{\S}(\accessT(\secondbasisstate))$ proves that $\delta^{\Obs}(\firstbasisstate,i) \apart \secondbasisstate$.
  Thus, for every $\secondbasisstate \in \basis$, $\delta^{\Obs}(\firstbasisstate,i) \apart \secondbasisstate$, which is exactly the definition of \emph{isolated}.
\end{proof}

\subsection*{Proof of Theorem~\ref{thm:completerebuilding}}
\begin{proof}
    Let $n$ be the number of equivalence classes (w.r.t. language equivalence) in the reachable part of $\S|_{I^{\R}}$. We prove that whenever the basis does not contain $n$ elements, then there always exists an access sequence in $P^{\R}$ that leads to a state that can be isolated using the \emph{rebuilding} rule. Using recursive reasoning, this proves that the basis contains $n$ states whenever \emph{prioritized promotion} and \emph{rebuilding} are not applicable anymore. Let $\basis, \frontier, \Obs$ denote the current basis, frontier and observation tree. From the Theorem statement we know:
    \begin{enumerate}[label=(\arabic*)]
      \item $q_0^{\R}$ matches $q_0^{\S}$,
      \item States can only be promoted using \emph{prioritized promotion}.
    \end{enumerate}
    We also use the following general assumptions from the paper:
    \begin{enumerate}[label=(\arabic*)]
      \setcounter{enumi}{2}
      \item $P^{\R}$ is minimal,
      \item $P^{\R}$ is prefix-closed,
      \item $\R$ and $\S$ are complete.
    \end{enumerate}

    First, we note that the state cover and separating family are computed on $\R|_{I^{\S}}$, which means that both only contain sequences in the alphabet $I^{\R} \cap I^{\S}$. Because of (3), we know there are $|P^{\R}|$ equivalence classes in the reachable part of $\R|_{I^{\S}}$. From (1) and (5), we derive that for all $w \in (I^{\R} \cap I^{\S})^*$, $\lambda^{\R}(w) = \lambda^{\S}(w)$. This implies that $|P^{\R}| = n$.

    If $|\basis|{}={}n$, we are done. Otherwise, $|\basis|{}< n$. From (1), (3) and $|\basis|{}< n$, we know that some state in $Q^{\S}$, reachable with a sequence from $P^{\R}$, has not been discovered yet. Because of (4), this state must be reachable from the basis with one input symbol. In other words, there must exist a basis state $\firstbasisstate \in \basis$ and $i \in I$ such that $\accessT(\firstbasisstate)i \in P^{\R}$ and $\delta^{\Obs}(\firstbasisstate,i){\uparrow}$. 
      
    From (1), we know that for $\sigma = \mathsf{sep}\bm{(}\delta^{\R}(\accessT(\firstbasisstate)i),\delta^{\R}(\accessT(\secondbasisstate))\bm{)}$ it must hold that $\sigma \vdash \delta^{\S}(\accessT(\firstbasisstate)i) \apart \delta^{\S}(\accessT(\secondbasisstate))$ because $P^{\R}$ and $\{W_q\}^{\R}$ are computed on $\R|_{I^{\S}}$. From (2), we know that for each $\secondbasisstate \in \basis$, $\accessT(\secondbasisstate) \in P^{\R}$.
    
    Therefore, we can apply Lemma~\ref{lem:rebuild} and this will lead to $\delta^{\Obs}(\firstbasisstate,i)$ being \emph{isolated}. Using the \emph{prioritized promotion} rule, we can add $\delta^{\Obs}(\firstbasisstate,i)$ to the basis, leading to $|\basis|{}= n+1$ and we can apply the recursive reasoning to find a new state to promote or to terminate with the required result.

    Note that the precise ordering of \emph{prioritized promotion} and \emph{rebuilding} is irrelevant. We can never promote states that we do not want to promote. Moreover, when a state is isolated, it can never be \textit{un}-isolated. Therefore, applying the \emph{rebuilding} rule while \emph{prioritized promotion} can be applied never leads to problems. Finally, \emph{rebuilding} cannot be applied for ever (see termination proof \ref{thm:quantitativeComplexity}), therefore we have to use \emph{prioritized promotion} at some point.

\end{proof}

\newpage
\section{Proofs of Section~\ref{sec:statematch}}
\subsection*{Proof of Lemma~\ref{lem:statematch}}
\begin{proof}
  Let $\firstrefbasisstate \in Q^{\R}, \firstbasisstate \in \basis, i \in I$ and $\sigma \in I^*$. Suppose 
  \begin{enumerate}[label=(\arabic*)]
    \item $\delta^{\Obs}(\firstbasisstate,i) = \firstfrontierstate \in \frontier$,
    \item $\delta^{\S}(\accessT(\firstbasisstate)) \matches \firstrefbasisstate$,
    \item For all $\secondbasisstate \in \basis$, $\delta^{\R}(p,i) \notmatches \secondbasisstate$.
  \end{enumerate}

  We prove that for all $\secondbasisstate \in \basis$, $\firstfrontierstate \apart \secondbasisstate$ holds. Suppose we have a specific $\secondbasisstate \in \basis$. If $\firstfrontierstate \apart \secondbasisstate$ already holds, we are done. From now, assume (4) $\neg(\firstfrontierstate \apart \secondbasisstate)$. Normally, $\matches$ is not a transitive relation, however, because $\Obs$ is an observation tree for $\S$, $\delta^{\Obs}(\firstbasisstate,w) = \delta^{\S}(\accessT(\firstbasisstate),w)$ for all $w \in I^*$. Therefore, we can derive $\delta^{\Obs}(\firstbasisstate) \matches \firstrefbasisstate$ from (2).
  From (1)-(4), we know all preconditions required for \emph{match separation} hold. We apply the rule and execute OQ $\accessT(\firstbasisstate)i\sigma$ with $\sigma \vdash \secondbasisstate \apart \delta^{\R}(p,i)$. Note here that $\sigma$ with $\sigma \vdash \secondbasisstate \apart \delta^{\R}(p,i)$ must exist due to (3). After the OQ, we have (5) $\delta^{\Obs}(\firstbasisstate,i\sigma){\converges}$ and from (3) we derive (6) $\delta^{\Obs}(\secondbasisstate,\sigma){\converges}$. Because $\sigma \vdash q' \apart \delta^{\R}(p,i)$ and $\delta^{\S}(\accessT(q)) \matches p$, it must be that (7) $\sigma \vdash q' \apart \delta^{\S}(\accessT(q),i)$. Combining (5), (6) and (7) proves $r \apart \secondbasisstate$.

\end{proof}

\subsection*{Proof of Theorem~\ref{thm:statematching}}
\begin{proof}
  Let $\S$ be the SUL and $\R$ the reference with $\S$ and $\R$ both complete Mealy machines. Moreover, let $\S$ be equivalent to $\R$ but $\S$ possibly has a different initial state. From this, we derive that \textbf{(1)} there exists a state $p \in Q^{\R}$ such that $q_0^{\S}$ is language equivalent to $p$.
  Let $n$ be the number of states in the reachable part of $\S$. Let $\basis, \frontier, \Obs$ denote the current basis, frontier and observation tree. 

  We prove that if $|\basis|{}< n$, then we can add some state to the basis after applying \emph{match refinement}, \emph{match separation}, \emph{promotion}, \emph{extension} until none of them are applicable anymore. This trivially terminates when we reach $|\basis|{}= n$. 

  Suppose $|\basis|{}< n$. There must be some $q \in \basis$ and $i \in I$ such that \textbf{(2)} $\delta^{\S}(\accessT(q),i)$ represents an equivalence class that is different from the equivalence classes $\delta^{\S}(\accessT(q'))$ for all $q' \in \basis$. We perform a case distinction on the location of $\delta^{\Obs}(q,i)$ in the current observation tree.
  \begin{itemize}
    \item Suppose $\delta^{\Obs}(q,i) \in \basis$, this immediately contradicts (2).
    \item Suppose $\delta^{\Obs}(q,i){\uparrow}$, then we can apply the \emph{extension} rule resulting in $\delta^{\Obs}(q,i){\converges}$.
    \item Suppose $\delta^{\Obs}(q,i){\converges}$ and $\delta^{\Obs}(q,i)$ is isolated, then we can apply \emph{promotion}.
    \item Suppose \textbf{(3)} $\delta^{\Obs}(q,i){\converges}$ and $\delta^{\Obs}(q,i)$ is not isolated. Moreover, from (1) we derive that there exists a state $p' \in Q^{\R}$ such that \textbf{(4)} $\delta^{\R}(p,\accessT(q))=p'$ and \textbf{(5)} $p'$ is language equivalent to $\delta^{\S}(\accessT(q))$. We perform a case distinction based on whether $\delta^{\R}(p',i) \matches q'$ for some $q' \in \basis$ and show that for each case we can derive a contradiction or apply a rule to make progress.
      \begin{itemize}
        \item 
        Suppose \textbf{(6)} there exists a $q' \in \basis$ such that $\delta^{\R}(p',i) \matches q'$. 
        From (1) we derive that \textbf{(7)} there must exist some state $p'' \in Q^{\R}$ that is language equivalent to $\delta^{\S}(\accessT(q'))$. 

        We derive that state \textbf{(8)} $\delta^{\R}(p',i)$ is language equivalent to an equivalence class that is different from the equivalence classes $\delta^{\S}(\accessT(q'))$ for all $q' \in \basis$ because $\delta^{\R}(p',i)$ is language equivalent to $\delta^{\S}(\accessT(q),i)$ (derived from (5)) and (2).
        Moreover, \textbf{(9)} $p''$ is language equivalent to a state already in the basis (7). 
        By combining (8) and (9), we find that $\delta^{\R}(p',i) \apart p''$.
        Moreover, because $\delta^{\R}(p',i)$ and $p''$ are in $Q^{\R}$ and they represent different equivalence classes, sequence $\sigma = \mathsf{sep}(\delta^{\R}(p',i),p'')$ exists. 
        This means we can apply \emph{match refinement} with $\delta^{\R}(p',i)$ and $p''$, resulting in $\delta^{\R}(p',i) \notmatches q'$ because otherwise (7) leads to a contradiction.\\
        This reasoning can be applied for any $q' \in \basis$ such that $\delta^{\R}(p',i) \matches q'$, resulting in $\delta^{\R}(p',i) \notmatches q'$ for all $q' \in \basis$ after multiple applications of \emph{match refinement}. 
        In this case, we can continue with the case below.

        \item Suppose there does not exist a $q' \in \basis$ such that $\delta^{\R}(p',i) \matches q'$. In this case, we can apply Lemma~\ref{lem:statematch} with $p', q, i$. This results in $\delta^{\Obs}(q,i)$ being isolated. We can apply \emph{promotion} which increases the size of the basis.
      \end{itemize}
      \item Suppose $\delta^{\Obs}(q,i) \notin \basis$ and $\delta^{\Obs}(q,i) \notin \frontier$, this contradicts the assumption that $q \in \basis$ and $i \in I$.
    \end{itemize}

    Note that the precise ordering of the \emph{promotion}, \emph{extension}, \emph{match refinement} and \emph{match separation} is irrelevant. We discuss the reasoning for each rule.
    \begin{description}
      \item[Promotion] States that are isolated can never become \textit{un}-isolated, therefore, applying other rules before \emph{promotion} can never lead to problems.
      \item[Extension] If we apply rules before applying \emph{extension} then either \emph{extension} is not necessary anymore or we can still apply it but both lead to $\delta^{\Obs}(q,i){\converges}$. 
      \item[Match refinement] The only goal of \emph{match refinement} is to refine the matching. If two reference states match a basis state, we can perform an OQ that leads to one of the reference states no longer being a match. If we apply one of the other rules before \emph{match refinement} which already leads to this result, then we do not have to perform \emph{match refinement} but obtain the same result. 
      \item[Match separation] In this Theorem, the \emph{match separation} rule always leads to a new apartness pair between the frontier state and the basis. If some other rule already shows the required apartness pair, we do not have to apply \emph{match separation} but obtain the same result.
    \end{description}
\end{proof}

\subsection*{Proof of Lemma \ref{lem:approx1}}
\begin{proof}
  Let $\Obs$ be an observation tree, $\R$ a reference model, $\firstbasisstate \in Q^{\Obs}$ and $\firstrefbasisstate \in Q^{\R}$. Suppose $\mathsf{mdeg}(\firstbasisstate,\firstrefbasisstate) = 1$. In particular, for all $w \in (I^{\Obs} \cap I^{\R})^*$ and $i \in I^{R} \cap I^{\Obs}$ such that $\delta^{\Obs}(\firstbasisstate,wi){\converges}$,
  \[ \lambda^{\Obs}(\delta^{\Obs}(\firstbasisstate,w),i) = \lambda^{\R}(\delta^{\R}(\firstrefbasisstate,w),i).\] 
  This is equivalent to for all $v \in (I^{\Obs} \cap I^{\R})^*$ such that $\delta^{\Obs}(\firstbasisstate,v){\converges}$
  \[ \lambda^{\Obs}(\firstbasisstate,v) = \lambda^{\R}(\firstrefbasisstate,v) \] 
  Because we assume reference models are complete w.r.t. their own alphabet $I^{R}$, the reference is complete w.r.t. $I^{R} \cap I^{\Obs}$, this implies
  for all $v \in (I^{\Obs} \cap I^{\R})^*$ such that $\delta^{\Obs}(\firstbasisstate,v){\converges}$ and $\delta^{\R}(\firstrefbasisstate,v){\converges}$
  \[ \lambda^{\Obs}(\firstbasisstate,v) = \lambda^{\R}(\firstrefbasisstate,v) \]
  which is precisely $\firstbasisstate \matches \firstrefbasisstate$.
\end{proof}

\section{Proofs of Section~\ref{sec:alsharp}}
Before we prove the complexity for \adaptivelsharp, we define and prove an additional termination Theorem. We prove termination of \adaptivelsharp{} by proving that each rule lowers the ranking function. To keep consistent with the \lsharp{} complexity proof~\cite{DBLP:conf/tacas/VaandragerGRW22}, we actually prove that each rule increases some norm and the norm is bounded by the SUL. Specifically, we use norm $N(\Obs)$:
\begin{equation} \label{eq:normQuant}
    N(\Obs) = N_{L^{\#}}(\Obs) + |N_{(\basis\times Q^{\R} \times Q^{\R})\converges}(\Obs)|~+~|N_{\frontier \apart Q^{\R}}(\Obs)|~+~|N_{(\basis\times \frontier)\converges}(\Obs)| 
\end{equation}
where $N_{L^{\#}}(\Obs)$ indicates the slightly adapted norm from \cite{DBLP:conf/tacas/VaandragerGRW22}. The abbreviations for the summands are defined as follows.
\begin{align*}
&N_{L^{\#}}(\Obs) = |\basis|(|\basis|+1) + |\{ (\firstbasisstate, i) \in \basis \times I \mid \delta(\firstbasisstate,i) \converges \}| + |\{ (\firstbasisstate, \firstfrontierstate) \in \basis \times \frontier \mid \firstbasisstate \apart \firstfrontierstate \}| \\
&N_{(\basis\times Q^{\R} \times Q^{\R})\converges}(\Obs) = \{(\firstbasisstate,\firstrefbasisstate,\secondrefbasisstate) \in \basis \times Q^{\R} \times Q^{\R} \mid \delta(\firstbasisstate,\sigma)\converges \text{ with } \sigma = \mathsf{sep}(\firstrefbasisstate,\secondrefbasisstate) \} \\
&N_{(\basis \cup \frontier) \apart Q^{\R}}(\Obs) = \{(\firstbasisstate,\firstrefbasisstate) \in (\basis \cup \frontier) \times Q^{\R} \mid \firstbasisstate \apart \firstrefbasisstate \} \\
&N_{(\basis\times \frontier)\converges}(\Obs) = \{(\firstbasisstate,\firstfrontierstate) \in \basis \times \frontier \mid~\delta(\firstbasisstate,\sigma)\converges \land \delta(\firstfrontierstate,\sigma)\converges \text{ with } \\ 
    &\qquad \qquad \qquad \qquad \qquad \sigma = \mathsf{sep}\bm{(}\delta^{\R}(\accessT(\firstbasisstate)),\delta^{\R}(\accessT(\firstfrontierstate))\bm{)}\}
\end{align*}
The summand $N_{(\basis \times Q^{\R} \times Q^{\R})\converges}(\Obs)$ keeps track of which separating sequences of the reference model have been applied to basis states in the new observation tree. The summand $N_{(\basis \cup \frontier) \apart Q^{\R}}(\Obs)$ keeps track of unmatched states between states in the basis or frontier and the reference model. The summand $N_{(\basis \times \frontier)\converges}(\Obs)$ keeps track of separating sequences from the reference model applied to pairs of basis and frontier states. These summands are motivated by the postconditions in Table~\ref{tab:rules}.

\begin{theorem} \label{lem:progress}
    Every rule application in \adaptivelsharp{} increases norm $N(\Obs)$.
\end{theorem}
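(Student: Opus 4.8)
The plan is to read $N(\Obs)$ as a sum of four monotone counting terms and to check, rule by rule, that each application strictly increases at least one term while decreasing none --- with the sole exception of the two rules that remove a state from the frontier, where a short balancing argument is needed. The first step is a monotonicity observation: every rule except \textbf{P} and \textbf{PP} acts purely by posing output queries, and an output query extends $\Obs$ monotonically, so no defined transition becomes undefined and, by persistence of apartness, no apart pair is ever lost. For a fixed $\basis$ and $\frontier$ each summand is thus non-decreasing; moreover these queries only add descendants of a frontier state or a fresh immediate successor of a basis state, hence they can only enlarge $\frontier$. So for the output-query rules it suffices to point at one strictly increasing summand, reading the witness off the postconditions in Table~\ref{tab:rules} and Table~\ref{tab:approxrules} (indeed the summands were designed from those postconditions).

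Concretely: \textbf{Ex} defines a new transition from a basis state, so the transition count inside $N_{L^{\#}}(\Obs)$ grows. \textbf{S}, \textbf{PS} and \textbf{APS} create a fresh apartness between a frontier and a basis state (postconditions $\firstfrontierstate\apart\firstbasisstate$, resp.\ $\firstfrontierstate\apart\secondbasisstate \lor \firstfrontierstate\apart\thirdbasisstate$), increasing the $\basis\times\frontier$ apartness term of $N_{L^{\#}}(\Obs)$. \textbf{R} makes $\sigma=\mathsf{sep}(\delta^{\R}(\accessT(\firstbasisstate)i),\delta^{\R}(\accessT(\secondbasisstate)))$ defined from both $\delta^{\Obs}(\firstbasisstate,i)\in\frontier$ and $\secondbasisstate\in\basis$, which is exactly a new element of $N_{(\basis\times \frontier)\converges}(\Obs)$. \textbf{MR} and \textbf{AMR} pose $\access(\firstbasisstate)\sigma$ with $\sigma=\mathsf{sep}(\firstrefbasisstate,\secondrefbasisstate)$ under precondition $\delta^{\Obs}(\firstbasisstate,\sigma)\diverges$; afterwards $\delta^{\Obs}(\firstbasisstate,\sigma)\converges$, adding the triple $(\firstbasisstate,\firstrefbasisstate,\secondrefbasisstate)$ to $N_{(\basis\times Q^{\R} \times Q^{\R})\converges}(\Obs)$. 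Finally \textbf{MS} and \textbf{AMS} have a disjunctive postcondition: either $\firstfrontierstate\apart\secondbasisstate$, a new $\basis\times\frontier$ apartness increasing $N_{L^{\#}}(\Obs)$, or $\firstrefbasisstate\notmatches\firstbasisstate$ with $\firstfrontierstate\apart\secondrefbasisstate$; since $\firstrefbasisstate\matches\firstbasisstate$ and $\neg(\firstfrontierstate\apart\secondrefbasisstate)$ held before, at least one fresh pair in $(\basis\cup\frontier)\times Q^{\R}$ appears, so $N_{(\basis\cup\frontier)\apart Q^{\R}}(\Obs)$ grows (the approximate variants use $\approxmatches$ only in preconditions, so the same postconditions apply). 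For \textbf{Eq}, either the equivalence query succeeds and the algorithm halts, or counterexample processing adds observations and isolates a frontier state; this increases $N_{L^{\#}}(\Obs)$ by the argument of~\cite{DBLP:conf/tacas/VaandragerGRW22}, and as it poses only output queries the other three summands are non-decreasing by the preliminary observation.

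The delicate case, which I expect to be the main obstacle, is promotion (\textbf{P} and \textbf{PP}), since moving a state out of $\frontier$ can shrink a term. Let $k=|\basis|$ before promoting an isolated frontier state $\firstfrontierstate$. The quadratic part $|\basis|(|\basis|+1)$ of $N_{L^{\#}}(\Obs)$ jumps by $2(k+1)$; the transition count is non-decreasing; and the $\basis\times\frontier$ apartness term loses exactly the $k$ pairs $(\firstbasisstate,\firstfrontierstate)$ (isolation gives $\firstfrontierstate\apart\firstbasisstate$ for all $k$ old basis states) while gaining only non-negative contributions from the children of $\firstfrontierstate$ now entering the frontier. Hence $N_{L^{\#}}(\Obs)$ rises by at least $2(k+1)-k=k+2$. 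Among the remaining summands, $N_{(\basis\cup\frontier)\apart Q^{\R}}(\Obs)$ is non-decreasing because $\firstfrontierstate$ stays in $\basis\cup\frontier$, and $N_{(\basis\times Q^{\R}\times Q^{\R})\converges}(\Obs)$ is non-decreasing because joining the basis can only add triples; the only term that may drop is $N_{(\basis\times\frontier)\converges}(\Obs)$, and it loses at most the $k$ pairs $(\firstbasisstate,\firstfrontierstate)$. Summing, $N(\Obs)$ changes by at least $(k+2)-k=2>0$, so promotion strictly increases the norm.

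The only routine gaps left are citing~\cite{DBLP:conf/tacas/VaandragerGRW22} for the $N_{L^{\#}}$-increase of counterexample processing and verifying, once each, that the enlargement of $\frontier$ caused by \textbf{Ex}, \textbf{R} and promotion never removes a previously counted pair (immediate from persistence of $\converges$ and $\apart$). Everything else is the bookkeeping above; the substantive content is the promotion balance, where the strictly superlinear growth $2(k+1)$ of the basis term has to outpace the at-most-$k$ loss in the reference-separation term.
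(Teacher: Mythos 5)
Your proof is correct and follows essentially the same route as the paper's: the same four-summand decomposition, the same strictly-increasing witness for each rule (transition count for \textbf{Ex}, a new $\basis\times\frontier$ apartness for the separation rules, a new element of $N_{(\basis\times\frontier)\converges}$ for \textbf{R}, a new triple for \textbf{MR}/\textbf{AMR}, the disjunctive case split for \textbf{MS}/\textbf{AMS}), and the identical balancing computation $2(|\basis|+1)-|\basis|-|\basis|=2$ for promotion. Nothing to add.
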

\begin{proof}
  Let $\basis,\frontier,\Obs$ denote the values before and $\basis',\frontier',\Obs'$ denote
  the values after the respective rule application. Let $\R$ denote the reference model. We reuse abbreviations from \cite{DBLP:conf/tacas/VaandragerGRW22} and the norm definition above
  \begin{align*}
    &N_Q(\Obs) = |\basis| \cdot (|\basis|+1)\\
    &N_{\converges}(\Obs) = \{ (\firstbasisstate, i) \in \basis \times I \mid \delta(\firstbasisstate,i) \converges \}\\
    &N_{\apart}(\Obs) = \{ (\firstbasisstate, \firstfrontierstate) \in \basis \times \frontier \mid \firstbasisstate \apart \firstfrontierstate \}
  \end{align*}

The proof that the rules \emph{promotion}, \emph{extension}, \emph{separation} and \emph{equivalence} increase $N_Q(\Obs) + |N_{\converges}(\Obs)| + |N_{\apart}(\Obs)|$ is similar to the proof in~\cite{DBLP:conf/tacas/VaandragerGRW22}. However, we slightly adapted $N_Q(\Obs)$ which has influence on the proof for \emph{promotion} and \emph{separation} and we assume stronger guarantees for \emph{equivalence}. It remains to show that combined with the new summands the total norm still increases. Therefore, we include the proofs for the \lsharp{} rules here.

\begin{description}
  \item[Rebuilding] Let $\firstbasisstate, \secondbasisstate \in \basis$, $i \in I$ and $\sigma \in I^*$. We assume 
  \begin{enumerate}
    \item $\delta^{\Obs}(\firstbasisstate,i) \notin \basis$
    \item $\neg(\secondbasisstate \apart \delta^{\Obs}(\firstbasisstate,i))$,
    \item $\accessT(\firstbasisstate)$, $\accessT(\firstbasisstate)i \in P^{\R}$,
    \item $\delta^{\Obs}(\firstbasisstate,i\sigma){\uparrow}$ or $\delta^{\Obs}(\secondbasisstate,\sigma){\uparrow}$
    \item $\sigma = \mathsf{sep}\bm{(}\delta^{\R}(\accessT(\firstbasisstate)i),\delta^{\R}(\accessT(\secondbasisstate))\bm{)}$.
  \end{enumerate}
  
  The algorithm performs two queries OQs $\accessT(\firstbasisstate)i\sigma$ and $\accessT(\secondbasisstate)\sigma$.
  After these OQs, the traces $\delta^{\Obs}(\firstbasisstate,i\sigma)$ and $\delta^{\Obs}(\secondbasisstate,\sigma)$ are defined. Particularly, because of assumption (1) and $\firstbasisstate \in \basis$, we know $\delta^{\Obs}(\firstbasisstate,i) \in \frontier$ after the OQs. Combining this with (4), we find 
  \[ N_{(\basis\times \frontier)\converges}(\Obs') \supseteq N_{(\basis\times\frontier)\converges}(\Obs) \cup \{(\secondbasisstate,\delta^{\Obs}(\firstbasisstate,i))\} \]
  Note that we implicitly use (3) and (5) to ensure that $\sigma$ exists. In some cases we might find that $\delta^{\Obs}(\firstbasisstate,i) \apart \secondbasisstate$ which, together with (2), indicates 
  \[ N_{\apart}(\Obs') \supseteq N_{\apart}(\Obs) \cup \{(\secondbasisstate,\delta^{\Obs}(\firstbasisstate,i))\} \] Otherwise $N_{\apart}(\Obs') \supseteq N_{\apart}(\Obs)$. Additionally,
  \[ N_Q(\Obs') = N_Q(\Obs) \]
  \[ N_\downarrow(\Obs') \supseteq N_\downarrow(\Obs) \] 
  \[ N_{(\basis \cup \frontier) \apart Q^{\R}}(\Obs') \supseteq  N_{(\basis \cup \frontier) \apart Q^{\R}}(\Obs) \]
  \[ N_{(\basis\times Q^{\R} \times Q^{\R})\converges}(\Obs') \supseteq N_{(\basis\times Q^{\R} \times Q^{\R})\converges}(\Obs) \]
  Thus, $N(\Obs') \ge N(\Obs) + 1$.

  \item[Prioritized promotion] Let $\firstfrontierstate \in \frontier$. Suppose (1) $\firstfrontierstate$ is isolated and suppose (2) $\accessT(\firstfrontierstate) \in P^{\R}$. State $\firstfrontierstate$ is moved from $\frontier$ to $\basis$, i.e.~$\basis' := \basis \cup \{\firstfrontierstate\}$, then we have 
  \begin{align*}
    N_Q(\Obs') &= |\basis'| \cdot (|\basis'|+1)
    = (|\basis|+1) \cdot (|\basis|+1+1)\\
               &= (|\basis|+1)\cdot |\basis| + 2(|\basis|+1)
                 = N_Q(\Obs) + 2|\basis| + 2
  \end{align*}
  
  Because we move something from the frontier to the basis, we find
  \[ N_{\apart}(\Obs') ~~\supseteq~~ N_{\apart}(\Obs) \setminus (\basis\times \{\firstfrontierstate\}) \] 
  \[ N_{(\basis\times\frontier)\converges}(\Obs) ~~\supseteq~~ N_{(\basis\times\frontier)\converges}(\Obs) \setminus (\basis\times \{\firstfrontierstate\})
  \]
  and thus
  \[ |N_{\apart}(\Obs')| ~~\ge~~ |N_{\apart}(\Obs)| - |\basis| \] 
  \[ |N_{(\basis\times \frontier)\converges}(\Obs')| ~~\ge~~ |N_{(\basis\times\frontier)\converges}(\Obs)| - |\basis|
  \]
  Finally, 
  \[ N_\downarrow(\Obs') \supseteq N_\downarrow(\Obs) \]
  \[ N_{\basis \times Q^{\R} \times Q^{\R}}(\Obs') \supseteq N_{\basis \times Q^{\R} \times Q^{\R}}(\Obs) \]
  \[ N_{(\basis \cup \frontier) \apart Q^{\R}}(\Obs') \supseteq N_{(\basis \cup \frontier) \apart Q^{\R}}(\Obs)\]
  The total norm increases because
   \[ N(\Obs') \ge N(\Obs)+ 2\mid \basis\mid +~2~- \mid \basis\mid - \mid \basis\mid ~\ge N(\Obs)+ 2 \]  
  \item[Promotion] Analogous to the proof for prioritized promotion.
  \item[Extension] Let $\delta^{\Obs}(\firstbasisstate,i){\uparrow}$ for some $\firstbasisstate \in \basis$, $i\in I$. After OQ $\accessT(\firstbasisstate)i$, we get $N(\Obs') \ge N(\Obs) + 1$ from~\cite{DBLP:conf/tacas/VaandragerGRW22}. Additionally,
  \[
    N_{(\basis\times Q^{\R} \times Q^{\R})\converges}(\Obs') \supseteq N_{(\basis\times Q^{\R} \times Q^{\R})\converges}(\Obs) \] 
    \[
        N_{(\basis\times \frontier)\converges}(\Obs') \supseteq N_{(\basis\times\frontier)\converges}(\Obs) \] 
    \[
        N_{(\basis \cup \frontier) \apart Q^{\R}}(\Obs') \supseteq N_{(\basis \cup \frontier) \apart Q^{\R}}(\Obs)
    \]
    and thus $N(\Obs') \ge N(\Obs) + 1$.
  \item[Separation] Consider a state $\firstfrontierstate \in \frontier$ and distinct $\firstbasisstate, \secondbasisstate \in \basis$
  with $\neg(\firstfrontierstate \apart \firstbasisstate)$ and $\neg(\firstfrontierstate \apart \secondbasisstate)$. After OQ
  $\accessT(\firstbasisstate)\sigma$, we have $N(\Obs') \ge N(\Obs) + 1$ from~\cite{DBLP:conf/tacas/VaandragerGRW22}. Additionally,
    \[ N_{(\basis\times Q^{\R} \times Q^{\R})\converges}(\Obs') \supseteq N_{(\basis\times Q^{\R} \times Q^{\R})\converges}(\Obs) \] 
    \[ N_{(\basis\times \frontier)\converges}(\Obs') \supseteq N_{(\basis\times\frontier)\converges}(\Obs) \]
    \[ N_{(\basis \cup \frontier) \apart Q^{\R}}(\Obs') \supseteq N_{(\basis \cup \frontier) \apart Q^{\R}}(\Obs) \]
  Thus, $N(\Obs') \ge N(\Obs) + 1$.
  \item[Match separation] Let $\firstbasisstate \in \basis$, $\firstrefbasisstate \in Q^{\R}, i \in I$, $\sigma \in I^*$, $\delta^{\Obs}(\firstbasisstate,i)=\firstfrontierstate \in \frontier$ and $\delta^{\R}(\firstrefbasisstate,i)=\secondrefbasisstate$. Suppose 
  \begin{enumerate}
    \item $\firstbasisstate \approxmatches \firstrefbasisstate$,
    \item $\neg(\firstfrontierstate \apart \secondrefbasisstate)$, 
    \item There is no $\thirdbasisstate \in \basis$ such that $\secondrefbasisstate \approxmatches \thirdbasisstate$,
    \item There exists $\secondbasisstate \in \basis$ such that 
    $\neg(\firstfrontierstate \apart \secondbasisstate)$ and $\sigma \vdash \secondrefbasisstate \apart \secondbasisstate$.
  \end{enumerate}
  After OQ $\accessT(\firstbasisstate)i\sigma$ we find either
  
  \[ \firstfrontierstate \apart \secondbasisstate \quad \text{or} \quad \firstfrontierstate \apart \secondrefbasisstate \]

  If $\firstfrontierstate \apart \secondbasisstate$, then 
  \[ N_{\apart}(\Obs') \supseteq N_{\apart}(\Obs) \cup \{(\secondbasisstate,\firstfrontierstate)\} \qquad N_{(\basis \cup \frontier) \apart Q^{\R}}(\Obs') \supseteq N_{(\basis \cup \frontier) \apart Q^{\R}}(\Obs)\]
  
  If $\firstfrontierstate \apart \secondrefbasisstate$, then 
  \[ N_{\apart}(\Obs') \supseteq N_{\apart}(\Obs) \qquad N_{(\basis \cup \frontier) \apart Q^{\R}}(\Obs') \supseteq N_{(\basis \cup \frontier) \apart Q^{\R}}(\Obs) \cup \{(\firstfrontierstate,\secondrefbasisstate)\} \]
  Additionally, 
  \[ N_Q(\Obs') = N_Q(\Obs) \]
  \[ N_\downarrow(\Obs') \supseteq N_{\downarrow}(\Obs) \]
  \[ N_{(\basis\times \frontier)\converges}(\Obs') \supseteq N_{(\basis\times\frontier)\converges}(\Obs) \]
  \[ N_{\basis \times Q^{\R} \times Q^{\R}}(\Obs') \supseteq N_{\basis \times Q^{\R} \times Q^{\R}}(\Obs) \]
  Thus, $N(\Obs') \ge N(\Obs) + 1$.
  \item[Match refinement] Let $\firstbasisstate \in \basis$ and $\firstrefbasisstate, \secondrefbasisstate \in Q^{\R}$. Suppose $\firstbasisstate \approxmatches \firstrefbasisstate$ and $\firstbasisstate \approxmatches \secondrefbasisstate$. 
  Note that when using approximate matching this \textbf{does not} imply that $\neg(\firstbasisstate \apart \firstrefbasisstate)$ and $\neg(\firstbasisstate \apart \secondrefbasisstate)$. After OQ $\accessT(\firstbasisstate)\sigma$ with $\sigma = \mathsf{sep}(p,p')$, we find
  \[
  N_{\basis \times Q^{\R} \times Q^{\R}}(\Obs') \supseteq N_{\basis \times Q^{\R} \times Q^{\R}}(\Obs) \cup \{(\firstbasisstate,\firstrefbasisstate,\secondrefbasisstate)\}
  \]
  Additionally, 
  \[ N_\downarrow(\Obs') \supseteq N_{\downarrow}(\Obs) \]
  \[ N_{\apart}(\Obs') \supseteq N_{\apart}(\Obs) \]
  \[ N_{(\basis\times \frontier)\converges}(\Obs') \supseteq N_{(\basis\times\frontier)\converges}(\Obs) \]
  \[ N_{(\basis \cup \frontier) \apart Q^{\R}}(\Obs') \supseteq N_{(\basis \cup \frontier) \apart Q^{\R}}(\Obs) \]
  Because $N_{Q}$ remains unchanged, we have $N(\Obs') \ge N(\Obs) + 1$.
  \item[Prioritized separation] Analogous to the proof for separation, the additional condition on $\sigma$ does not change the postcondition.

  \item[Equivalence] Suppose all $\firstfrontierstate \in \frontier$ are identified and for all $\firstbasisstate \in \basis$ and $i \in I$, $\delta^{\Obs}(\firstbasisstate,i){\downarrow}$. These conditions are stronger than the conditions from~\cite{DBLP:conf/tacas/VaandragerGRW22}. Therefore, we know at least $N_{L^{\#}}(\Obs') \ge N_{L^{\#}}(\Obs) + 1$ holds. Additionally,
  \[ N_{(\basis\times Q^{\R} \times Q^{\R})\converges}(\Obs') \supseteq N_{(\basis\times Q^{\R} \times Q^{\R})\converges}(\Obs) \] 
  \[ N_{(\basis\times \frontier)\converges}(\Obs') \supseteq N_{(\basis\times\frontier)\converges}(\Obs) \]
    \[ N_{(\basis \cup \frontier) \apart Q^{\R}}(\Obs') \supseteq N_{(\basis \cup \frontier) \apart Q^{\R}}(\Obs) \]
  Thus, $N(\Obs') \ge N(\Obs) + 1$.
\end{description}
\end{proof}

\subsection*{Proof of Theorem \ref{thm:quantitativeComplexity}}
\begin{proof}
  First, we prove that if $\Obs$ is an observation tree for $\S$, then 
  \begin{align*}
    N(\Obs) &\leq n(n+1) + kn + (n-1)(kn+1) + n\nrrefstates^2 + (kn+1)\nrrefstates + n(kn+1) \\
    &\in \bigO(kn^2 + kn\nrrefstates + n\nrrefstates^2) 
  \end{align*}
  The first part $n(n+1) + kn + (n-1)(kn+1)$ follows from Theorem 3.9 in~\cite{DBLP:conf/tacas/VaandragerGRW22} with some minor adjustments. 
  The set $\basis$ contains at most $n$ elements and $Q^{\R}$ contains at most $\nrrefstates$ elements. Each state in $Q^{\R}$ can be apart from at most $|Q^{\R}|-1$ other states in $Q^{\R}$. Therefore,
  \[ |\{(q,p,p') \in \basis \times Q^{\R} \times Q^{\R} \mid \delta(q,\sigma)\converges \text{ with } \sigma = \mathsf{sep}(p,p') \}| {}\leq no(\nrrefstates-1) \leq no^2 \]
  
  Since the set $\basis \cup \frontier$ contains at most $kn+1$ elements and each state in $\basis \cup \frontier$ can be apart from at most $\nrrefstates$ states from $Q^{\R}$, we have 
  \begin{eqnarray*}
      | \{ (q, q') \in (\basis \cup \frontier) \times Q^{\R} \mid q \apart q' \} | & \leq & \nrrefstates(kn +1)
    \end{eqnarray*}

  The set $\frontier$ contains at most $kn$ elements and each pair $\basis \times \frontier$ has at most one $\sigma = \mathsf{sep}(\delta^{\R}(\accessT(q)),\delta^{\R}(\accessT(r))$, thus we have 
  \[ |N_{(\basis\times\frontier)\converges}| {}\leq kn^2 \]

  Combining everything and simplifying it leads to
  \[N(\Obs) \in \bigO(kn^2 + kn\nrrefstates + n\nrrefstates^2) \]

  The ordering on the rules never block the algorithm and when the norm $N(\Obs)$ cannot be increased further, the only applicable rule is the equivalence rule which is guaranteed to lead to the teacher accepting the hypothesis. Therefore, the correct Mealy machine is learned within $\bigO(kn^2 + kn\nrrefstates + n\nrrefstates^2)$ rule applications. 

  In \adaptivelsharp, every (non-terminating) application of the equivalence rule leads to a new basis state. Since the basis is bounded by the number of states in the SUL, which is $n$, there can be at most $n-1$ applications of the equivalence rule. Each call to \textsc{ProcCounterEx} requires at most $\log m$ output queries (see Theorem 3.11 of~\cite{DBLP:conf/tacas/VaandragerGRW22}). 

  All rules, except for the equivalence rule, require at most two OQs per rule application. Therefore, the application of these rules requires $\bigO(kn^2 + kn\nrrefstates + n\nrrefstates^2)$ OQs. Combining everything, we find that \adaptivelsharp requires $\bigO(kn^2 + kn\nrrefstates + n\nrrefstates^2 + n \log m)$ and at most $n-1$ EQs. 
\end{proof}

\section{Additional Experiment Information} \label{app:E}
\subsection{Experiment Models}
In Experiments 1 and 3, we use the following six models, available \href{https://automata.cs.ru.nl/Overview}{here} under \textit{Mealy machine benchmarks}.
\begin{itemize}
  \item \textit{learnresult\_fix}
  \item \textit{DropBear} 
  \item \textit{OpenSSH} 
  \item \textit{model1} 
  \item \textit{NSS\_3.17.4\_server\_regular} 
  \item \textit{GnuTLS\_3.3.8\_client\_full}
\end{itemize}
Due to the mutations, this means that the largest model that we can learn has 62 states ($\textit{mut}_{8}(\S)$).
In Experiment 2, we use the ordering for the \textit{Adaptive-OpenSSL} models as implied by Fig.~5 in \cite{DBLP:conf/ifm/DamascenoMS19}. The ordering taken for the \textit{Adaptive-Philips} is chronological.
In Experiment 4, we use the \href{https://automata.cs.ru.nl/BenchmarkTCP/Mealy}{client TCP models}. Additionally, we use the following \href{https://automata.cs.ru.nl/BenchmarkDTLS-Fiterau-BrosteanEtAl2020/Mealy}{DTLS models}.
\begin{itemize}
  \item \textit{ctinydtls\_ecdhe\_cert\_req.dot} 
  \item \textit{etinydtls\_ecdhe\_cert\_req.dot} 
  \item \textit{gnutls-3.6.7\_all\_cert\_req.dot} 
  \item \textit{mbedtls\_all\_cert\_req.dot} 
  \item \textit{scandium-2.0.0\_ecdhe\_cert\_req.dot} 
  \item \textit{scandium\_latest\_ecdhe\_cert\_req.dot} 
  \item \textit{wolfssl-4.0.0\_dhe\_ecdhe\_rsa\_cert\_req.dot}
\end{itemize}

\subsection{Mutation Explanations}
In this section, we call the input Mealy machine $\S = (Q,I,O,q_0,\delta,\lambda)$. Every mutation is applied exactly once to generate the mutated model.

\myparagraph{$\textit{mut}_{1}$: New initial state.} This mutation adds a new initial state called \emph{dummy} as well as a fresh symbol $i$ to $\S$. From state \emph{dummy}, all $i \in I$ self loop with the output from $q_0$ (the previous initial state). The fresh symbol transitions from \emph{dummy} to $q_0$. The fresh symbol self loops in all (other) states $q \in Q$ with the output of $\lambda(q_0,i_0)$ where $i_0$ is the first input in the alphabet.

\myparagraph{$\textit{mut}_{2}$: Change the initial state.} This mutation randomly selects one of the states in $Q$ to pick as the new initial state. Because $\S$ is not necessarily strongly connected, the number of states in the resulting Mealy machine might be lower.

\myparagraph{$\textit{mut}_{3}$: Add a state.} This mutation adds a new state $q$ to $Q$. We randomly select a state from $q' \in Q$ and $i \in I$ and change the destination of this transition to $q$, this ensures $q$ is reachable. For all $i \in I$, we randomly select a destination state $p$ and use the output $\lambda(p,i)$ $80\%$ of the time or a random output $20\%$.

\myparagraph{$\textit{mut}_{4}$: Remove a state.} This mutation removes a non-initial state $q$ from $\S$. All transitions that lead from $p$ to $q$ with input $i$ are shortcutted to $\delta(q,i)$ with output $\lambda(p,i)$. If $\delta(q,i)=q$, we self loop in $p$.

\myparagraph{$\textit{mut}_{5}$: Divert a transition.} This mutation randomly selects $q,q' \in Q$ and $i \in I$. We set $\delta(q,i)=q'$. While $\S$ is equivalent to the resulting Mealy machine, we choose a new $q, q', i$ and set $\delta(q,i)=q'$.

\myparagraph{$\textit{mut}_{6}$: Change transition output.} This mutation randomly selects $q \in Q$, $i \in I$ and $o \in O$. We set $\lambda(q,i)=o$ such that $o$ is distinct from the original $\lambda(q,i)$. 

\myparagraph{$\textit{mut}_{7}$: Remove a symbol.} This mutation removes a symbol $i$ from the input alphabet. Consequently, all the transitions with $i$ are not contained in the resulting Mealy machine.

\myparagraph{$\textit{mut}_{8}$: Appending a mutated model.} This mutation takes a $\S$ and a natural number $n$. It first makes a second Mealy machine $\S'$ by applying $\textit{mut}_{13}$ to $\S$. Then it appends $\S'$ to $\S$ at the $n^{\text{th}}$ state of $\S$ which we call $q$, i.e., $\delta(q,i)=q_0^{\S'}$ for some random $i$. The natural numbers are chosen based on visual inspection of the models, we consistently choose a state that represents the \textit{end} of a model. This \emph{end} state is either the sink state or a state at the end of a very long trace in the model which transitions to the sink state.

\myparagraph{$\textit{mut}_{9}$: Prepending a mutated model.} This mutation takes a $\S$ and a natural number $n$. It first makes a second Mealy machine $\S'$ by appling $\textit{mut}_{13}$ to $\S$. Then it appends $\S$ to $\S'$ at the $n^{\text{th}}$ state of $\S'$ which we call $q$, i.e., $\delta(q,i)=q_0^{\S}$ for some random $i$.

\myparagraph{$\textit{mut}_{10}$: Several mutations.} This mutation applies mutations $\textit{mut}_{3}$, $\textit{mut}_{4}$, $\textit{mut}_{5}$ and $\textit{mut}_{6}$ to $\S$ in this particular order. 

\myparagraph{$\textit{mut}_{11}$: Several mutations with different initial state.} This mutation applies $\textit{mut}_{2}$, $\textit{mut}_{3}$, $\textit{mut}_{4}$, $\textit{mut}_{5}$ and $\textit{mut}_{6}$ to $\S$ in this particular order.

\myparagraph{$\textit{mut}_{12}$: Changing many transitions.} This mutation applies $\textit{mut}_{5}$, $\textit{mut}_{6}$, $\textit{mut}_{5}$, $\textit{mut}_{6}$, $\textit{mut}_{5}$, $\textit{mut}_{6}$ to $\S$.

\myparagraph{$\textit{mut}_{13}$: Many mutations.} This mutation applies $\textit{mut}_{10}$ three times to $\S$.

\myparagraph{$\textit{mut}_{14}$: Union.} This mutation takes $\S$ and makes a second Mealy machine $\S'$ by applying $\textit{mut}_{13}$ to $\S$. We combine $\S$ and $\S'$ by creating one new \emph{dummy} initial state with two fresh symbols for which one goes to $q_0^{\S}$ and the other to $q_0^{\S'}$. The fresh symbols and other transitions are handled in the same way as in $\textit{mut}_1$.

\subsection{Additional Figure Experiment 1}
In Fig.~\ref{fig:res1}, we show additional pairwise comparison plots from Experiment 1. Each plot compares a pair of algorithms per model and mutation, where a point $(x,y)$ represents that the algorithm on the x-axis required $x$ symbols over all seeds and the algorithm on the y-axis requires $y$ symbols.  Points below the diagonal indicate that the y-algorithm outperforms the x-algorithm, points below the dashed (dotted) line indicate a factor two (ten) improvement, respectively.
\begin{figure}[ht]
  \begin{subfigure}[b]{0.49\textwidth}
      \centering
      \resizebox{.75\textwidth}{!}{
      \includegraphics{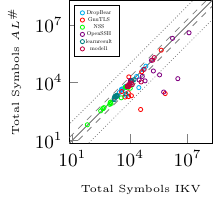}}
      \label{fig:res1a}
  \end{subfigure}
  \begin{subfigure}[b]{0.49\textwidth}
      \centering
      \resizebox{.75\textwidth}{!}{
      \includegraphics{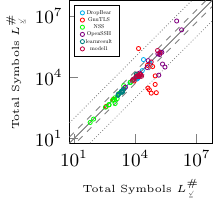}}
      \label{fig:res1b}
  \end{subfigure}

  \begin{subfigure}[b]{0.49\textwidth}
    \centering
    \resizebox{.75\textwidth}{!}{
    \includegraphics{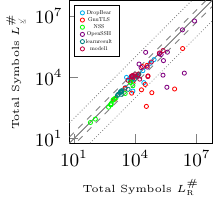}}
    \label{fig:res1c}
\end{subfigure}
\begin{subfigure}[b]{0.49\textwidth}
  \centering
  \resizebox{.75\textwidth}{!}{
  \includegraphics{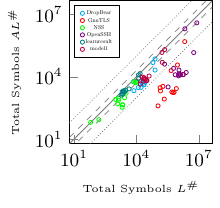}}
  \label{fig:res1d}
\end{subfigure}

\begin{subfigure}[b]{0.49\textwidth}
  \centering
  \resizebox{.75\textwidth}{!}{
  \includegraphics{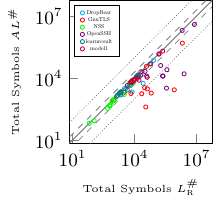}}
  \label{fig:res1e}
\end{subfigure}
\begin{subfigure}[b]{0.49\textwidth}
  \centering
  \resizebox{.75\textwidth}{!}{
  \includegraphics{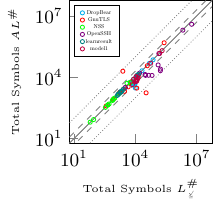}}
\end{subfigure}
\caption{Pairwise comparisons between algorithms.}
  \label{fig:res1}
\end{figure}

\newpage

\subsection{Additional Tables Experiment 2}
Tables \ref{tab:my_exp2_openssl} and \ref{tab:my_exp2_philips} display the mean number of inputs per model and algorithm in the same style as Table~\ref{tab:my_exp1}. The reference row indicates which reference model was used for the adaptive algorithms. THe teal values indicate the lowest, and therefore, best score.
\renewcommand{\arraystretch}{1.4}
\begin{table}[t]
\centering
\caption{Mean inputs for learning a Philips model with a reference.}
\resizebox{.5\textwidth}{!}{
\begin{tabular}{lrrrrrrr}
\toprule
Algorithm & ~model2~ & ~model3~ & ~model4~ & ~model5~ & ~model6~ \\
Reference & ~model1~ & ~model2~ & ~model3~ & ~model4~ & ~model5~ \\
\midrule $L^*$ & 657 & 2196 & 2196 & 5340 & 5340 \\
KV & 256 & 1671 & 1672 & 2128 & 2128 \\
\lsharp{} & 212 & 862 & 862 & 1730 & 1730 \\
\midrule \pdlstar~ & 657 & 2325 & 2650 & 2997 & 4520 \\
\IKV & 160 & {\textcolor{teal}{814}} & {\textcolor{teal}{458}} & {\textcolor{teal}{770}} & {\textcolor{teal}{841}} \\
\midrule \adaptivelsharp & {\textcolor{teal}{146}} & 918 & 580 & 1592 & 1043 \\
\midrule \lsharprebuild{} & 161 & 954 & 749 & 1765 & 1382 \\
\lsharpmatch & 167 & 956 & 590 & 1775 & 1178 \\
\lsharpapproxmatch & 152 & 920 & 590 & 1602 & 1178 \\
\lsharprebuildmatch & 161 & 954 & 580 & 1765 & 1043 \\
\bottomrule
\end{tabular}}
\label{tab:my_exp2_philips}
\end{table}

\renewcommand{\arraystretch}{1.4}
\begin{table}[t]
\centering
\caption{Mean inputs for learning an OpenSSL model with a reference.}
\resizebox{.99\textwidth}{!}{
\begin{tabular}{lrrrrrrrrrrrrrrrrrrr}
\toprule
Algorithm & 097c & 097e & 098f & 098l & 098m & 098s & 098u & 098za & 100 & 100f & 100h & 100m & 101 & 101h & 101k & 102 & 110pre1 \\
Reference & 097 & 097c & 097e & 098f & 098l & 098m & 098s & 098u & 098m & 100 & 100f & 100h & 100h & 100 & 101h & 101k & 102 \\
\midrule $L^*$ & 21273 & 21273 & 31608 & 2408 & 2065 & 2065 & 2506 & 1820 & 2065 & 2065 & 2506 & 1820 & 3143 & 1820 & 1477 & 1281 & 1134 \\
KV & 22434 & 19376 & 24754 & 6103 & 4267 & 4504 & 5634 & 3686 & 4267 & 4659 & 6492 & 3545 & 8423 & 3545 & 2764 & 2239 & 1178 \\
\lsharp{} & 23512 & 23305 & 25412 & 6786 & 5145 & 5934 & 6562 & 3684 & 5145 & 5819 & 6283 & 3983 & 8938 & 3983 & 3075 & 2293 & 1452 \\
\midrule \pdlstar~ & 5155 & 5155 & 5155 & 3203 & 2065 & 2065 & 2317 & 2363 & 2065 & 2065 & 2317 & 2430 & 3820 & 2430 & 2115 & 1281 & 1134 \\
\IKV & {\textcolor{teal}{3290}} & 4872 & 1506 & 2945 & 2326 & 2875 & 3789 & 792 & 876 & 3153 & 3398 & 792 & 2033 & 831 & 636 & 977 & {\textcolor{teal}{376}} \\
\midrule \adaptivelsharp & 3808 & {\textcolor{teal}{1391}} & {\textcolor{teal}{1391}} & 861 & {\textcolor{teal}{756}} & {\textcolor{teal}{737}} & {\textcolor{teal}{2100}} & {\textcolor{teal}{638}} & {\textcolor{teal}{751}} & {\textcolor{teal}{737}} & {\textcolor{teal}{1953}} & {\textcolor{teal}{638}} & {\textcolor{teal}{1778}} & {\textcolor{teal}{638}} & {\textcolor{teal}{514}} & {\textcolor{teal}{461}} & 665 \\
\midrule \lsharprebuild{} & 19632 & 1397 & 1397 & {\textcolor{teal}{843}} & {\textcolor{teal}{756}} & {\textcolor{teal}{737}} & 2109 & 642 & {\textcolor{teal}{751}} & {\textcolor{teal}{737}} & 1963 & 642 & 1791 & 642 & 518 & 1545 & 1538 \\
\lsharpmatch & 23346 & 17503 & 1417 & 7358 & 5458 & 6437 & 6441 & 5081 & 768 & 6619 & 7054 & 4804 & 1800 & 4804 & 532 & 2606 & 601 \\
\lsharpapproxmatch & 3558 & 17201 & 1417 & 2764 & 3856 & 2641 & 3365 & 657 & 768 & 3110 & 3974 & 657 & 1800 & 664 & 532 & 464 & 667 \\
\lsharprebuildmatch & 19683 & 1397 & {\textcolor{teal}{1391}} & {\textcolor{teal}{843}} & {\textcolor{teal}{756}} & {\textcolor{teal}{737}} & 2109 & {\textcolor{teal}{638}} & {\textcolor{teal}{751}} & {\textcolor{teal}{737}} & 1963 & {\textcolor{teal}{638}} & 1782 & {\textcolor{teal}{638}} & 520 & 937 & 663 \\
\bottomrule
\end{tabular}
}
\label{tab:my_exp2_openssl}
\end{table}

\newpage
\section{Detailed Example Run of \adaptivelsharp} \label{app:F}
In this section, we give a detailed explanation of how $\S$ can be learned with references $\R_1$ and $\R_3$ using \adaptivelsharp. From the references, we derive the following state cover and separating family:
\[ P = P^{\R_1} \cup P^{\R_3} = \{ \varepsilon, c, ca \} \cup \{ \varepsilon, b, bb, bbb \} = \{\varepsilon, c, ca, b, bb, bbb \}\]
\[ W_{r_0}=W_{r_1}=\{c,ac\}, W_{r_2}=\{c\},W_{s_0}=W_{s_1}=W_{s_2}=W_{s_3}=\{c,b,bb\} \]

\begin{enumerate}
    \item \adaptivelsharp always start with an observation tree containing only the root node.
    \item The first rule we apply is the \textbf{rebuilding} rule. We apply this rule with $q = q' = q_0$ (the root state) and $i = c$ because the conditions hold:
    \begin{itemize}
        \item $\delta^{\Obs}(q_0,c) \notin \basis$ because 
        $\delta^{\Obs}(q_0,c){\uparrow}$,
        \item $\neg(q' \apart \delta^{\Obs}(q,i)) = \neg(q_0 \apart \delta^{\Obs}(q_0,c))$ because $\delta^{\Obs}(q_0,c){\uparrow}$,
        \item $\access^{\Obs}(q,i)=\access^{\Obs}(q_0,c)=c \in P$,
        \item $\access^{\Obs}(q')=\access^{\Obs}(q_0)=\varepsilon \in P$,
        \item $\delta^{\Obs}(q_0,cac){\uparrow} \land \delta^{\Obs}(q_0,ac){\uparrow}$ with $\mathsf{sep}(\delta^{\R}(\access^{\Obs}(q)i),\delta^{\R}(\access^{\Obs}(q'))) = \mathsf{sep}(\delta^{\R}(c),\delta^{\R}(\varepsilon)) = ac = \sigma$.
    \end{itemize}
    We execute $OQ(cac)$ and $OQ(ac)$.
    \item We can now apply \textbf{prioritized promotion} with $q_1$ because $ac \vdash q_0 \apart q_1$. The resulting observation tree looks as follows:
    \begin{figure}[H]
        \centering
            \resizebox{.25\textwidth}{!}{
                \includegraphics{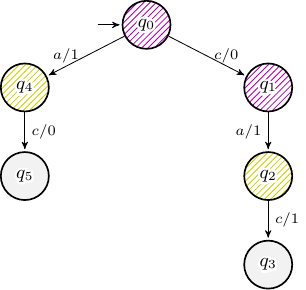}
            }
        \end{figure}
    \item Next, we try to promote the state reached by $ca$ in $\R_1$. Note that we cannot apply rebuilding with $q = q_1$, $q' = q_0$ and $i = a$ because $\mathsf{sep}(\delta^{\R}(ca),\delta^{\R}(\varepsilon))=c$ and $\delta^{\Obs}(c){\converges}$ and $\delta^{\Obs}(cac){\converges}$.\\
    We do apply \textbf{rebuilding} with $q = q_1$, $q' = q_1$ and $i = a$. All the conditions hold and $\sigma = c$. This leads to output queries $OQ(cac)$ and $OQ(cc)$.
    \item We can apply \textbf{prioritized promotion} with $q_2$ because $c \vdash q_1 \apart q_2$ and $c \vdash q_0 \apart q_2$.
    \item We again use the rebuilding rule for $q = q_0, q' = q_0$ and $i = b$. All the conditions hold and we use $\sigma = \mathsf{sep}(\delta^{\R}(\access^{\Obs}(q)i),\delta^{\R}(\access^{\Obs}(q'))) = \mathsf{sep}(\delta^{\R}(b),\delta^{\R}(\varepsilon)) = bb$. We execute the queries $OQ(bbb)$ and $OQ(bb)$. This leads to the following observation tree. Note that we cannot promote $q_7$.
    \begin{figure}[H]
        \centering
            \resizebox{.4\textwidth}{!}{
            \includegraphics{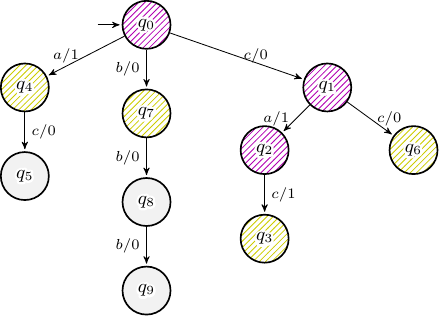}}
        \end{figure}
    As stated in the Section~\ref{sec:multiple}, we only apply the rebuilding rule with $\access^{\Obs}(q,i)$ and $\access^{\Obs}(q')$ from the same reference model. We have not explored $bb, bbb \in P$ yet but because these access sequences do not reach frontier states, we cannot apply the rebuilding rule any further. The prioritized promotion rule can also not be applied. Therefore, we move on to the other \adaptivelsharp rules.
    \item Next, we apply the \textbf{extension} rule for all the current basis states. This means the following output queries are performed: $OQ(cb)$, $OQ(caa)$, $OQ(cab)$. This results in the following observation tree.
    \begin{figure}[H]
        \centering
            \resizebox{.5\textwidth}{!}{
            \includegraphics{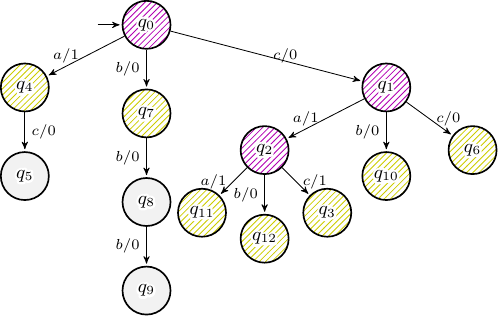}}
        \end{figure}
    \item We compute the matching table and then apply \textbf{prioritized separation}:
    \begin{table}[H]
        \centering
        \begin{tabular}{p{0.8cm}p{1cm}p{0.8cm}|p{1cm}p{1cm}p{1cm}p{1cm}p{1cm}p{1cm}p{1cm}}
            state & match & $\mathsf{mdeg}$ & $r_0$ & $r_1$ & $r_2$ & $s_0$ & $s_1$ & $s_2$ & $s_3$ \\ \hline 
            $q_0$ & $r_0$ & 1.0 & 7/7 & 5/7 & 4/7 & 0/4 & 0/4 & 0/4 & 1/4\\
            $q_1$ & $r_1$ & 1.0 & 3/4 & 4/4 & 2/4 & 0/4 & 0/4 & 0/4 & 1/4\\
            $q_2$ & $r_2$ & 1.0 & 1/2 & 1/2 & 2/2 & 0/2 & 0/2 & 0/2 & 1/2\\
        \end{tabular}
    \end{table}
    \begin{itemize}
        \item To separate $q_4$ from the basis, we can use the separating sequence $ac$ because $ac \in W_{r_1}$ and $r_1$ is the expected matching reference state of $q_4$ because $\delta^{\Obs}(q_0,a)=q_4$, $q_0 \approxmatches r_0$ and $\delta^{\R}(r_0,a)=r_1$. Therefore, we execute $OQ(aac)$.
        \item To separate $q_6$ from the basis, we can use the separating sequence $ac$ because $ac \in W_{r_0}$ and $r_0$ is the expected matching reference state of $q_6$ because $\delta^{\Obs}(q_1,c)=q_6$, $q_1 \approxmatches r_1$ and $\delta^{\R}(r_1,c)=r_0$. Therefore, we execute $OQ(ccac)$.
        \item To separate $q_{11}$ from the basis, we can use the separating sequence $ac$ because $ac \in W_{r_1}$ and $r_1$ is the expected matching reference state of $q_{11}$ because $\delta^{\Obs}(q_2,a)=q_{11}$, $q_2 \approxmatches r_2$ and $\delta^{\R}(r_2,c)=r_1$. Therefore, we execute $OQ(caaac)$.
        \item To separate $q_3$ from the basis, we can use the separating sequence $c$ because $c \in W_{r_2}$ and $r_2$ is the expected matching reference state of $q_3$ because $\delta^{\Obs}(q_2,c)=q_{11}$, $q_2 \approxmatches r_2$ and $\delta^{\R}(r_2,c)=r_2$. Therefore, we execute $OQ(cacc)$.
    \end{itemize}
    \item Next, we apply the \textbf{promotion} rule for $q_4$ because $a \vdash q_4 \apart q_0$, $a \vdash q_4 \apart q_1$ and $a \vdash q_4 \apart q_2$. The resulting observation tree looks as follows:
    \begin{figure}[H]
        \centering
            \resizebox{.75\textwidth}{!}{
            \includegraphics{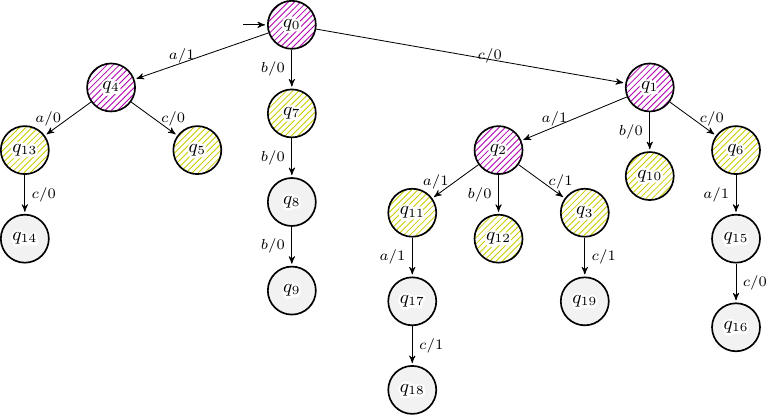}}
        \end{figure}
    \item Next, we apply \textbf{extension} with $q_4$ and input $b$, resulting in $OQ(ab)$.
    \item We perform another round of \textbf{prioritized separation} with the following matching table:
    \begin{table}[H]
        \centering
        \begin{tabular}{p{1cm}p{1cm}p{1cm}|p{1cm}p{1cm}p{1cm}p{1cm}p{1cm}p{1cm}p{1cm}}
            state & match & $\mathsf{mdeg}$ & $r_0$ & $r_1$ & $r_2$ & $s_0$ & $s_1$ & $s_2$ & $s_3$\\ \hline 
            $q_0$ & $r_0$ & 0.857 & 12/14 & 8/14 & 7/14 &  2/6 &  2/6 &  1/6 &  3/6\\
            $q_1$ & $r_1$ & 1.0 & 5/9 &  9/9 &  5/9 &  0/5 &  0/5 &  0/5 &  1/5\\
            $q_2$ & $r_2$ & 1.0 & 3/5 &  2/5 &  5/5 &  0/3 &  0/3 &  0/3 &  1/3\\
            $q_4$ & $s_0,s_1$ & 1.0 & 2/3 &  1/3 &  1/3  &  2/2 &  2/2 &  1/2 &  0/2
        \end{tabular}
    \end{table}
    \begin{itemize}
        \item To separate $q_6$ further, we execute $OQ(ccc)$.
        \item To separate $q_{11}$ further, we execute $OQ(caac)$.
        \item To separate $q_{20}$, we can use the separating sequence $cb$ because $nb \in W_{s_1} \cup W_{s_2}$ and $s_1$ and $s_2$ are the expected matching reference states of $q_{20}$ because $\delta^{\Obs}(q_4,b)=q_{20}$, $q_4 \approxmatches s_0$ and $\delta^{\R}(s_0,b)=s_1$, $q_4 \approxmatches s_1$ and $\delta^{\R}(s_1,b)=s_2$. Therefore, we execute $OQ(abb)$.
        \item To separate $q_{13}$, we can use the separating sequence $b$ because $b \in W_{s_0} \cup W_{s_1}$ and $s_0$ and $s_1$ are the expected matching reference states of $q_{13}$ because $\delta^{\Obs}(q_4,a)=q_{13}$, $q_4 \approxmatches s_0$ and $\delta^{\R}(s_0,a)=s_0$, $q_4 \approxmatches s_1$ and $\delta^{\R}(s_1,a)=s_1$. Therefore, we execute $OQ(aab)$.
    \end{itemize}
    The resulting observation tree looks as follows:
    \begin{figure}[H]
        \centering
            \resizebox{.9\textwidth}{!}{
            \includegraphics{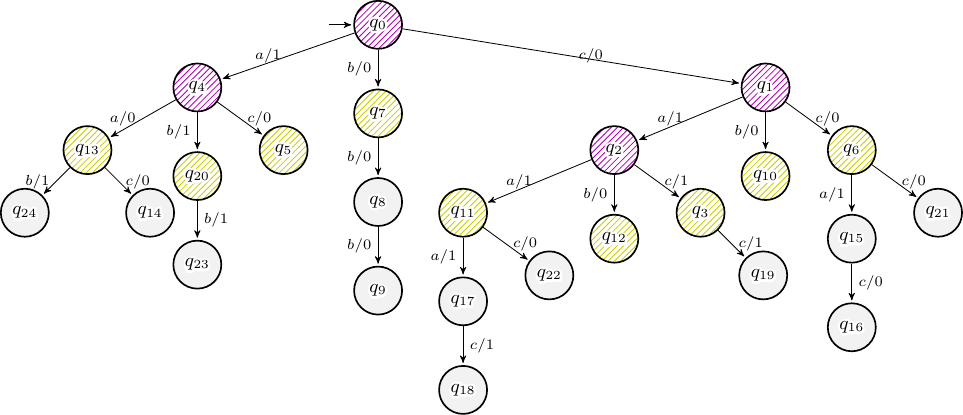}}
        \end{figure}
    \item We can no longer apply prioritized separation so we continue with standard \textbf{separation}. We use the separating sequences $b, c$ and $ac$ to separate states. Specifically, we perform the following output queries:
    \begin{itemize}
        \item $OQ(bac)$ and $OQ(bc)$ to separate $q_7$ from the basis.
        \item $OQ(cbac)$ to separate $q_{10}$ from the basis.
        \item $OQ(acac)$ to separate $q_5$ from the basis.
        \item $OQ(cabac)$ and $OQ(cabc)$ to separate $q_{12}$ from the basis.
    \end{itemize}
    The resulting observation tree looks as follows:
    \begin{figure}[H]
        \centering
            \resizebox{.99\textwidth}{!}{
            \includegraphics{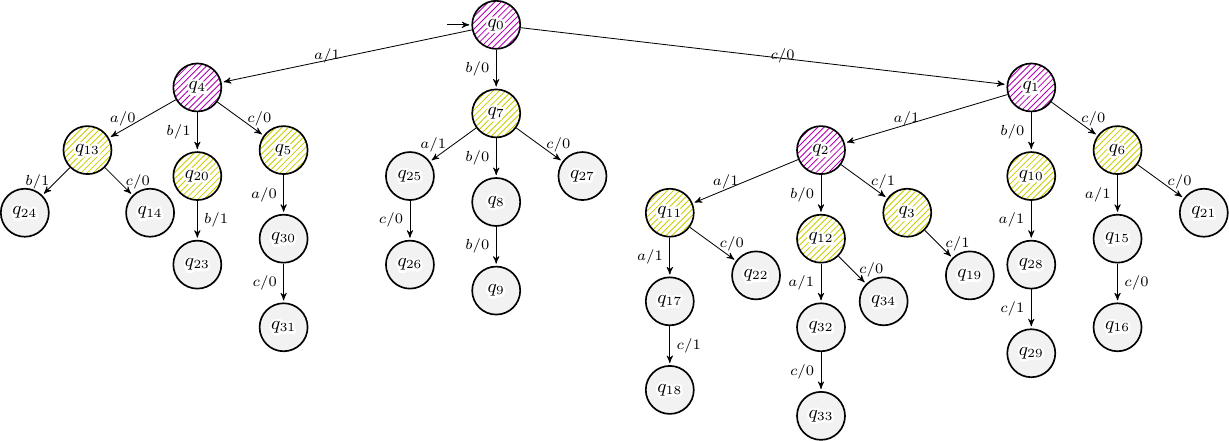}}
        \end{figure}
    \item All frontier states are identified and no frontier states are isolated. This means we can possibly perform match refinement or match separation. The matching table looks as follows:
    \begin{table}[H]
        \centering
        \begin{tabular}{p{1cm}p{1cm}p{1cm}|p{1cm}p{1cm}p{1cm}p{1cm}p{1cm}p{1cm}p{1cm}}
            state & match & $\mathsf{mdeg}$ & $r_0$ & $r_1$ & $r_2$ & $s_0$ & $s_1$ & $s_2$ & $s_3$ \\ \hline 
            $q_0$ & $r_0$ & 0.834 & 15/18 & 10/18 &  8/18 &   4/9 &   3/9 &   2/9  &   6/9\\
            $q_1$ & $r_1$ & 1.0 & 6/11 & 11/11 &  5/11  &   0/7 &   0/7 &   2/7 &   2/7\\
            $q_2$ & $r_2$ & 1.0 & 4/6 &   2/6 &   6/6  &   0/4 &   0/4 &   1/4 &   2/4\\
            $q_4$ & $s_0$ & 1.0 & 2/5 &   2/5 &   2/5  &   4/4 &   3/4 &   1/4 &   1/4
        \end{tabular}
    \end{table}
    Since every basis state is matched with exactly one reference state, match refinement is not applicable. However, we can apply \textbf{match separation} with $q = q_4$, $q' = q_4$, $p=s_0$ and $i = b$ because $\delta^{\Obs}(q_4,b)=q_{20} \in \frontier$, $\neg(q_{20} \apart q_4)$, $\delta^{\R}(s_0,b)=s_1$, $s_0 \approxmatches q_4$, and for all $q \in \basis$, $s_1$ is not the match.
    We use $\sigma = bb$ because $bb \vdash s_1 \apart q_4$ and execute $OQ(abbb)$. This leads to the following observation tree:
    \begin{figure}[H]
        \centering
            \resizebox{.99\textwidth}{!}{
            \includegraphics{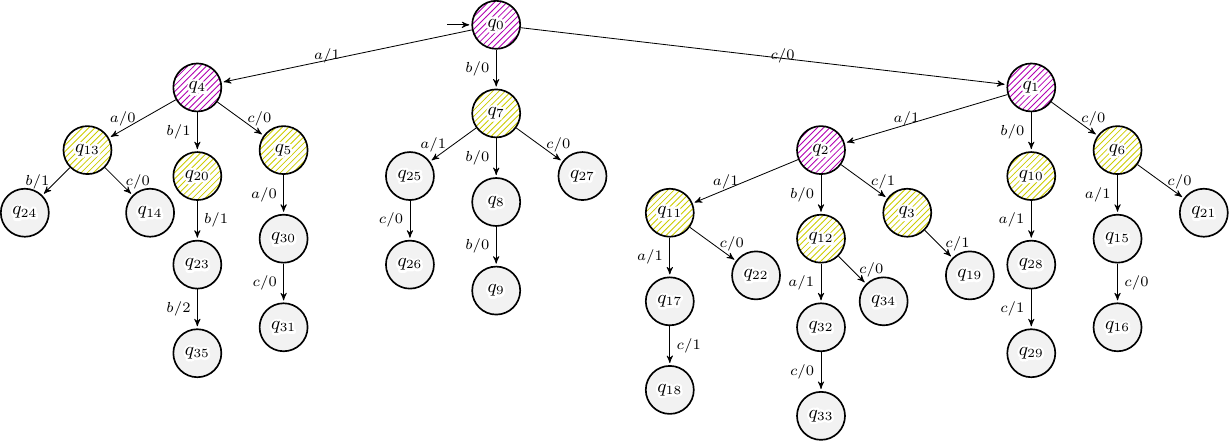}}
        \end{figure}
    \item The match separation led to isolation of $q_{20}$ which we can now add to the basis with the \textbf{promotion} rule.
    \item Additionally, we can apply \textbf{promotion} for state $q_{23}$ because it is the only state with output 2 for input $b$.
    Notice that we have found all the states in $\S$ after promoting $q_{23}$. 
    \item Next, we apply the extension rule several times 
    \begin{itemize}
        \item $OQ(aba)$ and $OQ(abc)$ for $q_{20}$,
        \item $OQ(abba)$ and $OQ(abbc)$ for $q_{23}$.
    \end{itemize}
    \item We apply the \textbf{prioritized separation} rule several times
    \begin{itemize}
        \item $OQ(abac)$ and $OQ(ababb)$ for $q_{36}$,
        \item $OQ(aabb)$ for $q_{13}$,
        \item $OQ(abbbc)$ and $OQ(abbbb)$ for $q_{34}$,
        \item $OQ(abbac)$ and $OQ(abbab)$ for $q_{38}$,
    \end{itemize}
    \item Next, we apply \textbf{separation} a few more times.
    \begin{itemize}
        \item $OQ(abcac)$ and $OQ(abcabb)$ for $q_{37}$,
        \item $OQ(abbcac)$ and $OQ(abbcabb)$ for $q_{39}$,
        \item $OQ(abbbac)$ for $q_{35}$,
        \item $OQ(acabb)$ for $q_{5}$,
    \end{itemize}
    This leads to the following final observation tree and matching table:
    \begin{figure}[H]
        \centering
            \resizebox{.99\textwidth}{!}{
            \includegraphics{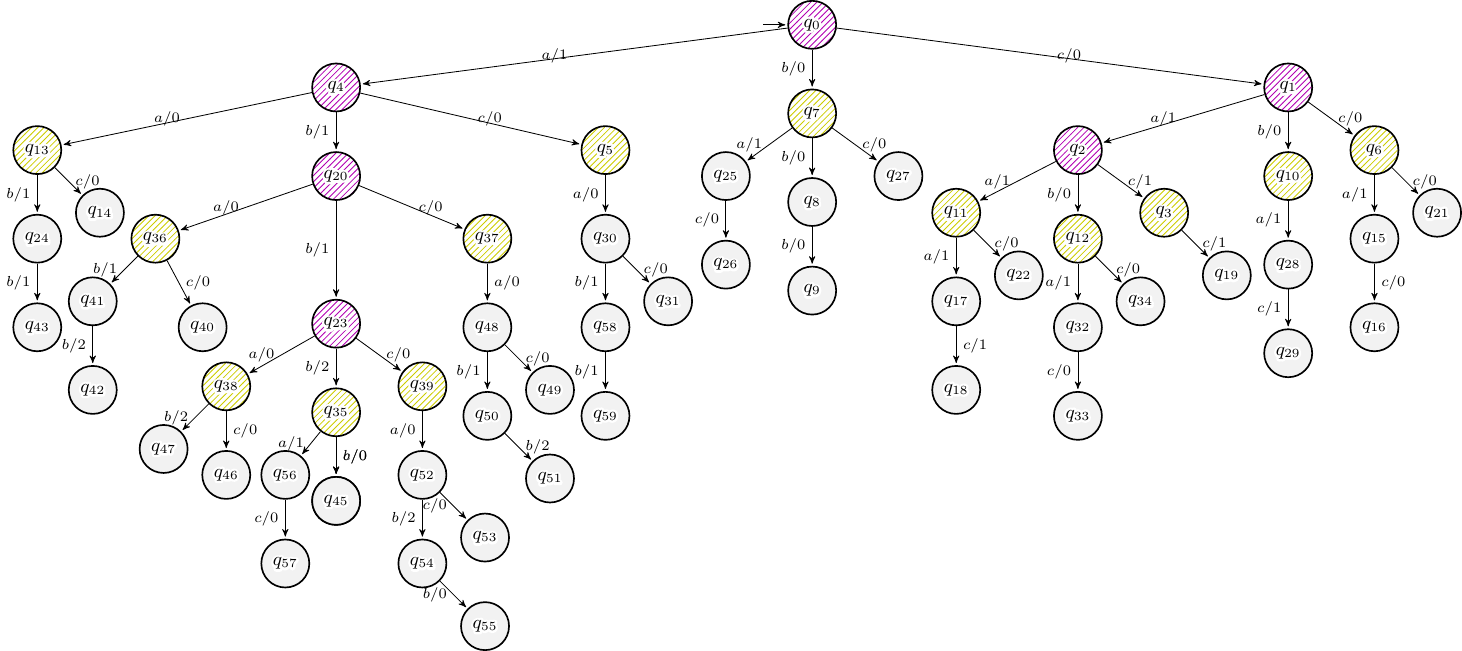}
            }
        \end{figure}
    \begin{table}[H]
        \centering
        \begin{tabular}{p{1cm}p{1cm}p{1cm}|p{1cm}p{1cm}p{1cm}p{1cm}p{1cm}p{1cm}p{1cm}}
            state & match & $\mathsf{mdeg}$ & $r_0$ & $r_1$ & $r_2$ & $s_0$ & $s_1$ & $s_2$ & $s_3$\\ \hline 
            $q_0$ & $r_0$ & 0.834 & 15/18 & 10/18 &  8/18  & 12/18 &  7/18 &  5/18 & 14/18\\
            $q_1$ & $r_1$ & 1.0 & 6/11 & 11/11 &  5/11  &   0/7 &   0/7 &   2/7 &   2/7\\
            $q_2$ & $r_2$ & 1.0 & 4/6 &   2/6 &   6/6  &   0/4 &   0/4 &   1/4 &   2/4\\
            $q_4$ & $s_0$ & 0.924 & 2/5 & 2/5 & 2/5  & 12/13 & 7/13 &  4/13 & 5/13\\
            $q_{20}$ & $s_1$ & 0.889 & 2/5 &   2/5 &   2/5  &   4/9 &   8/9 &   4/9 &   3/9\\
            $q_{23}$ & $s_2$ & 0.8 & 2/5 &   2/5 &   2/5  &   1/5 &   1/5 &   4/5 &   2/5\\
        \end{tabular}
    \end{table}
    \item Next, the \textbf{equivalence} rule is used to construct a hypothesis from the observation tree. This hypothesis is correct so the algorithm terminates.
\end{enumerate}

}

\end{document}